\newcommand\widecheck[1]{%
\savestack{\tmpbox}{\stretchto{%
  \scaleto{%
    \scalerel*[\widthof{\ensuremath{#1}}]{\kern-.3pt\bigwedge\kern-.3pt}%
    {\rule[-\textheight/3]{1ex}{\textheight}}
  }{\textheight}%
}{0.5ex}}%
\stackon[1pt]{#1}{\scalebox{-0.8}{\tmpbox}}%
}
\newcommand{\vertiii}[1]{{\left\vert\kern-0.25ex\left\vert\kern-0.25ex\left\vert #1 
    \right\vert\kern-0.25ex\right\vert\kern-0.25ex\right\vert}}
\newtheorem{theorem}{Theorem}
\newtheorem{lemma}{Lemma}
\newtheorem{assumption}{Assumption}
\def\cblue{\textcolor{black}}
\def\cred{\textcolor{black}}
\newcommand{\ba}{\boldsymbol{a}}
\newcommand{\bc}{\boldsymbol{c}}
\newcommand{\bd}{\boldsymbol{d}}
\newcommand{\bs}{\boldsymbol{s}}
\newcommand{\bu}{\boldsymbol{u}}
\newcommand{\bv}{\boldsymbol{v}}
\newcommand{\bw}{\boldsymbol{w}}
\newcommand{\bz}{\boldsymbol{z}}
\newcommand{\bx}{\boldsymbol{x}}
\newcommand{\by}{\boldsymbol{y}}
\newcommand{\bpsi}{\boldsymbol{\psi}}
\newcommand{\bH}{\boldsymbol{H}}
\newcommand{\cA}{\mathcal{A}}
\newcommand{\cB}{\mathcal{B}}
\newcommand{\cC}{\mathcal{C}}
\newcommand{\cD}{\mathcal{D}}
\newcommand{\cF}{\mathcal{F}}
\newcommand{\cFb}{\overline{\mathcal{F}}}
\newcommand{\cS}{\mathcal{S}}
\newcommand{\cR}{\mathcal{R}}
\newcommand{\cH}{\mathcal{H}}
\newcommand{\cI}{\mathcal{I}}
\newcommand{\cJ}{\mathcal{J}}
\newcommand{\cK}{\mathcal{K}}
\newcommand{\cL}{\mathcal{L}}
\newcommand{\cN}{\mathcal{N}}
\newcommand{\cP}{\mathcal{P}}
\newcommand{\cT}{\mathcal{T}}
\newcommand{\cU}{\mathcal{U}}
\newcommand{\cu}{{\scriptscriptstyle\mathcal{U}}}
\newcommand{\cY}{\mathcal{Y}}
\newcommand{\cV}{\mathcal{V}}
\newcommand{\cX}{\mathcal{X}}
\newcommand{\cw}{{\scriptstyle\mathcal{W}}}
\newcommand{\ccw}{{\scriptscriptstyle\mathcal{W}}}
\newcommand{\bcB}{\boldsymbol{\cal{B}}}
\newcommand{\bcH}{\boldsymbol{\cal{H}}}
\newcommand{\bcHt}{\widetilde{\boldsymbol{\cal{H}}}}
\newcommand{\bcD}{\boldsymbol{\cal{D}}}
\newcommand{\bcF}{\boldsymbol{\cal{F}}}
\newcommand{\bcw}{\boldsymbol{\cw}}
\newcommand{\bwt}{\widetilde \bw}
\newcommand{\bcwt}{\widetilde\bcw}
\newcommand{\bcwb}{\overline\bcw}
\newcommand{\bcwc}{\widecheck{\bcw}}
\newcommand{\wb}{\overline{w}}
\newcommand{\cBb}{\overline{\mathcal{B}}}
\newcommand{\bsb}{\overline{\bs}}
\newcommand{\bsc}{\widecheck{\bs}}
\newcommand{\expec}{\mathbb{E}}
\newcommand{\col}{\text{col}}
\newcommand{\card}{\text{card}}
\newcommand{\prox}{\text{prox}}
\newcommand{\bvc}{\text{bvec}}
\newcommand{\vc}{\text{vec}}
\newcommand{\tr}{\text{Tr}}
\newcommand{\diag}{\text{diag}}
\DeclareMathOperator*{\minimize}{minimize}
\DeclareMathOperator*{\st}{subject~to}
\newcolumntype{C}[1]{>{\centering\arraybackslash}m{#1}}
\begin{document}
\title{Adaptation and learning over networks under  subspace constraints -- Part II: Performance Analysis}
\author{Roula Nassif$^\dag$, \IEEEmembership{Member, IEEE},  Stefan Vlaski$^{\dag,\ddag}$, \IEEEmembership{Student Member, IEEE}, \\
Ali H. Sayed$^\dag$, \IEEEmembership{Fellow Member, IEEE}\\
\thanks{This work was supported in part by NSF grant CCF-1524250. A short conference version of this work appears in~\cite{nassif2019distributed}.}

\vspace{0.5cm}
\small{\linespread{0.2}$^\dag$ Institute of Electrical Engineering, EPFL, Switzerland}\\
\vspace{0.1cm}
\small{\linespread{0.2}$^\ddag$ Electrical Engineering Department, UCLA, USA}\\
\vspace{0.3cm}
roula.nassif@epfl.ch\hspace{0.5cm} stefan.vlaski@epfl.ch\hspace{0.5cm} ali.sayed@epfl.ch}

%

\maketitle

\begin{abstract}
Part~I of this paper considered optimization problems over networks where agents have individual objectives to meet, or individual parameter vectors to estimate, subject to subspace constraints that \cblue{require} the objectives across the network  to lie in  low-dimensional \cblue{subspaces}. \cblue{Starting from} the centralized projected gradient descent, an iterative and distributed solution was proposed that responds to streaming data and employs stochastic approximations in place of actual gradient vectors, which are generally unavailable. We examined the second-order stability of the learning algorithm and we showed that, for small step-sizes $\mu$, the proposed strategy leads to small estimation errors on the order of $\mu$. 
This Part~II examines steady-state performance. The results reveal explicitly the influence of the gradient noise, data characteristics, and subspace constraints, on the network performance. The results also show that in the small step-size regime, the iterates generated by the distributed algorithm achieve the centralized steady-state performance. \end{abstract}

\begin{IEEEkeywords}
Distributed optimization, subspace projection, gradient noise, steady-state performance.
\end{IEEEkeywords}

\newpage

\section{Introduction}
As pointed out in Part~I~\cite{nassif2019adaptation} of this work, most prior literature on distributed inference over networks focuses on \textit{consensus} problems, where agents with separate objective functions need to agree on a common parameter vector corresponding to the minimizer of the aggregate sum of individual costs~\cite{sayed2013diffusion,sayed2014adaptation,chen2013distributed,zhao2015asynchronous,vlaski2019diffusion,rabbat2005quantized,kar2009distributed,srivastava2011distributed,olfati2007consensus,ram2010distributed}.  In this paper, and its accompanying Part~I~\cite{nassif2019adaptation}, we focus instead on multitask networks where the agents need to estimate and track multiple objectives simultaneously~\cite{plata2017heterogeneous,chen2015diffusion,nassif2016proximal,eksin2012distributed,hassani2017multi,mota2015distributed,zhao2014distributed,barbarossa2009distributed}. Based on the type of prior information that may be available about how the tasks are related to each other, multitask algorithms can be derived by translating the prior information into constraints on the parameter vectors to be inferred.

In this paper, and the accompanying Part~I~\cite{nassif2019adaptation}, we consider multitask inference problems where each agent seeks to minimize an individual cost, and where the collection of parameter vectors to be estimated across the network is required to lie in a low-dimensional subspace.  That is, we let $w_k\in{\mathbb{C}^{M_k}}$ denote some parameter vector at node $k$ and let $\cw=\col\{w_1,\ldots,w_N\}$ denote the collection of parameter vectors from across the network ($N$ is the number of agents in the network). We associate with each agent $k$ a differentiable convex cost $J_k(w_k):{\mathbb{C}^{M_k}}\rightarrow\mathbb{R}$, which is expressed as the expectation of some loss function $Q_k(\cdot)$ and written as $J_k(w_k)=\expec Q_k(w_k;\bx_k)$, where $\bx_k$ denotes the random data. The expectation is computed over the distribution of the data. Let $M=\sum_{k=1}^NM_k$. We consider constrained problems of the  form: 
\begin{equation}
\label{eq: constrained optimization problem}
\begin{split}
\cw^o=&\,\arg\min_{\ccw}~~ J^\text{glob}(\cw)\triangleq\sum_{k=1}^N J_k(w_k),\\
&\,\st~~\cw\in\cR(\cU),
\end{split}
\end{equation}
where $\cR(\cdot)$ denotes the range space operator, and $\cU$ is an $M\times P$ full-column rank matrix with  $P\ll M$. 
Each agent $k$ is interested in estimating  the $k$-th $M_k\times 1$ subvector $w^o_k$ of $\cw^o=\col\{w^o_1,\ldots,w^o_N\}$. 

In order to solve problem~\eqref{eq: constrained optimization problem}, we proposed in Part~I~\cite{nassif2019adaptation} the following adaptive and distributed strategy:
\begin{equation}
\label{eq: distributed solution}
\left\lbrace
\begin{array}{rl}
\bpsi_{k,i}=&\hspace{-2mm}\bw_{k,i-1}-\mu\widehat{\nabla_{w_k^*}J_k}(\bw_{k,i-1}),\\
\bw_{k,i}=&\hspace{-2mm}\sum\limits_{\ell\in\cN_k}A_{k\ell}\bpsi_{\ell,i},
\end{array}
\right.
\end{equation}
where $\mu>0$ is a small step-size parameter, $\bpsi_{k,i}$ is an intermediate estimate, $\bw_{k,i}$ is the estimate of $w^o_k$ at agent $k$ and iteration $i$, $\cN_k$ denotes the neighborhood of agent $k$, and ${\nabla_{w_k^*}J_k}(\cdot)$ is the (Wirtinger) complex gradient~\cite[Appendix~A]{sayed2014adaptation} of $J_k(\cdot)$ relative to $w^*_k$ (complex conjugate of $w_k$). Notice that approximate gradient vectors $\widehat{\nabla_{w_k^*}J_k}(\cdot)$ are employed in~\eqref{eq: distributed solution} instead of true gradient vectors ${\nabla_{w_k^*}J_k}(\cdot)$ since we are interested in solving~\eqref{eq: constrained optimization problem} in the stochastic setting when the distribution of the data $\bx_k$ is unknown. A common construction in stochastic approximation theory is to employ the following approximation at iteration $i$:
\begin{equation}
\label{eq: stochastic approximation of the gradient}
\widehat{\nabla_{w_k^*}J_k}(w_k)={\nabla_{w_k^*}Q_k}(w_k;\bx_{k,i}),
\end{equation}
where $\bx_{k,i}$ represents the data observed at iteration $i$. The difference between the true gradient and its approximation is called  the {gradient noise} $\bs_{k,i}(\cdot)$:
\begin{equation}
\label{eq: gradient noise process}
\bs_{k,i}(w)\triangleq\nabla_{w^*_k}J_k(w)-\widehat{\nabla_{w^*_k}J_k}(w).
\end{equation}
 This noise will seep into the operation of the algorithm and one main challenge is to show that despite its presence, agent $k$ is still able to approach $w^o_k$ asymptotically.  The matrix $A_{k\ell}$ appearing in~\eqref{eq: distributed solution} is of size $M_k\times M_\ell$. It multiplies the intermediate estimate $\bpsi_{\ell,i}$ arriving from neighboring agent $\ell$ to agent $k$. Let $\cA\triangleq [A_{k\ell}]\in\mathbb{C}^{M\times M}$ denote the matrix that collects all these blocks. This $N\times N$ block matrix is chosen by the designer to satisfy the following two conditions:
\begin{numcases}{}
  \lim\limits_{i\rightarrow\infty}\cA^i=\cP_{\cu}, & \label{eq: condition 1 on A}\\
  A_{k\ell}=[\cA]_{k\ell}=0, \quad\text{if }\ell\notin\cN_k \text{ and } k\neq\ell,& \label{eq: condition 2 on A}
\end{numcases}
where $[\cA]_{k\ell}$ denotes the $(k,\ell)$-th block of $\cA$ and $\cP_{\cu}$ is the projector onto the $P$-dimensional subspace of $\mathbb{C}^M$ spanned by the columns of $\cU$:
\begin{equation}
\label{eq: P_R(U)}
\cP_{\cu}=\cU(\cU^*\cU)^{-1}\cU^*.
\end{equation} 
 The sparsity condition~\eqref{eq: condition 2 on A} characterizes the network topology and ensures local exchange of information at each time instant $i$. It is shown in Part~I~\cite{nassif2019adaptation} that the matrix equation~\eqref{eq: condition 1 on A} holds, if and only if, the following conditions on the projector $\cP_{\cu}$ and the matrix $\cA$ are satisfied:
\begin{align}
\cA \,\cU&=\cU,\label{eq: first condition on eigenvector}\\
\cU^*\cA&=\cU^*.\label{eq: second condition on eigenvector}\\
\rho(\cA-\cP_{\cu})&<1,\label{eq: third condition in the proposition}
\end{align}
where $\rho(\cdot)$ denotes the spectral radius of its matrix argument. Conditions~\eqref{eq: first condition on eigenvector} and~\eqref{eq: second condition on eigenvector} state that the $P$ columns of $\cU$ are right and left eigenvectors of $\cA$ associated with the eigenvalue $1$. Together with these two conditions, condition~\eqref{eq: third condition in the proposition} means that $\cA$ has $P$ eigenvalues at one, and that all other eigenvalues are strictly less than one in magnitude. Combining conditions~\eqref{eq: first condition on eigenvector}--\eqref{eq: third condition in the proposition} with the sparsity condition~\eqref{eq: condition 2 on A}, the design of a matrix $\cA$ to run~\eqref{eq: distributed solution} can be written as the following feasibility problem:
\begin{equation}
\label{eq: feasibility problem A}
\begin{array}{cl}
\text{find}&\cA\\
\text{such that}&\cA\,\cU=\cU,~~\cU^*\cA=\cU^*,\\ 
&\rho(\cA-\cP_{\cu})< 1,\\ 
&[\cA]_{k\ell}=0, ~\text{if }\ell\notin\cN_k \text{ and }\ell\neq k.
\end{array}
\end{equation} 
Not all network topologies satisfying~\eqref{eq: condition 2 on A} guarantee the existence of an $\cA$ satisfying condition~\eqref{eq: condition 1 on A}. The higher the dimension of the signal subspace is, the greater the graph connectivity has to be. In the works~\cite{nassif2019distributed,barbarossa2009distributed}, it is assumed that the sparsity constraints~\eqref{eq: condition 2 on A} and the signal subspace lead to a feasible problem. That is, it is assumed that problem~\eqref{eq: feasibility problem A} admits at least one solution. As a remedy for the violation of such assumption, one may increase the network connectivity by increasing the transmit power of each node, i.e., adding more links~\cite{barbarossa2009distributed}. In Section~\ref{sec: finding a combination matrix A} of this part, we shall relax the feasibility assumption by considering the problem of finding an $\cA$ that minimizes the number of edges to be added to the original topology while satisfying the constraints~\eqref{eq: first condition on eigenvector},~\eqref{eq: second condition on eigenvector}, and~\eqref{eq: third condition in the proposition}. In this case, if the original topology leads to a feasible solution, then no links will be added. Otherwise, we assume that the designer is able to  add some links to make the problem feasible.

When studying the performance of algorithm~\eqref{eq: distributed solution} relative to $\cw^o$, we assume that a feasible $\cA$ (topology) is computed by the designer and that its blocks $\{A_{k\ell}\}_{\ell\in\cN_k}$ are provided to agent $k$ in order to run~\eqref{eq: distributed solution}. We carried out in Part~I~\cite{nassif2019adaptation} a detailed stability analysis of the proposed strategy~\eqref{eq: distributed solution}. We showed that, despite the gradient noise, the distributed strategy~\eqref{eq: distributed solution} is able to converge in the mean-square-error sense within $O(\mu)$ from the solution of the constrained problem~\eqref{eq: constrained optimization problem}, for sufficiently small step-sizes $\mu$. We particularly established that, for each agent $k$, the error variance relative to $w^o_k$ enters a bounded region whose size is in the order of $\mu$, namely, $\limsup_{i\rightarrow\infty}\expec\|w^o_k-\bw_{k,i}\|^2=O(\mu)$. In Section~\ref{sec: Stochastic performance analysis} of this Part~II, we will assess the size of this mean-square-error by deriving closed-form expression for the network mean-square-deviation (MSD) defined by~\cite{sayed2014adaptation}:
\begin{equation}
\label{eq: MSD definition}
\text{MSD}\triangleq \mu\lim_{\mu\rightarrow 0}\left(\limsup_{i\rightarrow\infty}\frac{1}{\mu}\expec\left(\frac{1}{N}\|\cw^o-\bcw_{i}\|^2\right)\right),
\end{equation}
where $\bcw_i\triangleq\col\{\bw_{k,i}\}_{k=1}^N$. In other words, we will assess the size of the constant multiplying $\mu$ in the $O(\mu)-$term. This closed form expression will reveal explicitly the influence of the data characteristics (captured by the second-order properties of the costs and second-order moments of the gradient noises) and subspace constraints (captured by $\cU$), on the network performance. In this way, we will be able to conclude that distributed strategies of the form~\eqref{eq: distributed solution} with small step-sizes are able to lead to reliable performance even in the presence of gradient noise. We will be able also to conclude that the iterates generated by the distributed implementation achieve the centralized steady-state performance. Particularly, we compare the performance of strategy~\eqref{eq: distributed solution} to the following centralized stochastic gradient projection algorithm~\cite{bertsekas1999nonlinear}:
\begin{equation}
\label{eq: centralized solution}
\bcw^{{c}}_i=\cP_{\cu}\left(\bcw^c_{i-1}-\mu\,\col\left\{\widehat{\nabla_{w_k^*}J_k}(\bw^c_{k,i-1})\right\}_{k=1}^N\right),\quad i\geq 0,
\end{equation}
where $\bcw^{{c}}_i=\col\{\bw^c_{1,i},\ldots,\bw^c_{N,i}\}$ is the estimate of $\cw^o$ at iteration $i$. Observe that each agent at each iteration needs to send its data to a fusion center, which performs the projection in~\eqref{eq: centralized solution}, and then sends the resulting estimates $\bw^c_{k,i}$ back to the agents. Finally, simulations will be provided in Section~\ref{subsec: theoretical model validation} to verify the theoretical findings.
\section{Stochastic performance analysis}
\label{sec: Stochastic performance analysis}
In Part~I~\cite{nassif2019adaptation}, we carried out a detailed stability analysis of the proposed strategy~\eqref{eq: distributed solution}. We showed, under some Assumptions on the risks $\{J_k(\cdot)\}$ and on the gradient noise processes $\{\bs_{k,i}(\cdot)\}$ defined by~\eqref{eq: gradient noise process}, that a network running strategy~\eqref{eq: distributed solution} with a matrix $\cA$ satisfying conditions~\eqref{eq: condition 2 on A},~\eqref{eq: first condition on eigenvector},~\eqref{eq: second condition on eigenvector}, and~\eqref{eq: third condition in the proposition} is mean-square-error stable for sufficiently small step-sizes, namely, it holds that:
\begin{equation}
\label{eq: mean-square error convergence result}
\limsup_{i\rightarrow\infty}\expec\|w^o_k-\bw_{k,i}\|^2=O(\mu),\quad k=1,\ldots,N,
\end{equation} 
for small enough $\mu$--see~\cite[Theorem~1]{nassif2019adaptation}. Expression~\eqref{eq: mean-square error convergence result} indicates that the mean-square error $\expec\|\cw^o-\bcw_i\|^2$ is on the order of $\mu$. In this section, we are interested in characterizing how close the $\bcw_i$ gets to the network limit point $\cw^o$. In particular, we will be able to characterize the network mean-square deviation (MSD) (defined by~\eqref{eq: MSD definition}) value in terms of the step-size $\mu$, the data-type variable $h$ defined in Table~\ref{table: variables table}, the  second-order properties of the costs (captured by $\cH^o$ defined below in~\eqref{eq: definition of H o general}), the second-order moments of the gradient noises (captured by $\cS$ defined below in~\eqref{eq: definition of S general}), and the subspace constraints (captured by $\cU^e$ defined in Table~\ref{table: variables table}) as follows: 
 \begin{equation}
\label{eq: final result for MSD}
\text{MSD}=\frac{\mu}{2hN}{\tr}\left(\left((\cU^e)^*\cH^o\cU^e\right)^{-1}\left((\cU^e)^*\cS\cU^e\right)\right).
\end{equation}
As explained in Part~I~\cite{nassif2019adaptation}, in the general complex data case, extended vectors and matrices need to be introduced in order to analyze the network evolution. The arguments and results presented in this section are applicable to both cases of real and complex data through the use of data-type variable $h$. Table~\ref{table: variables table} lists a couple of variables and symbols that will be used in the sequel for both real and complex data cases. The matrix $\cI$ in Table~\ref{table: variables table} is a permutation matrix of $2N\times 2N$ blocks with $(m,n)$-th block given by:
\begin{equation}
\label{eq: permutation matrix block}
[\cI]_{mn}\triangleq\left\lbrace
\begin{array}{ll}
I_{M_k},&\text{if } m=k, n=2(k-1)+1\\
I_{M_k},&\text{if } m=k+N, n=2k\\
0,&\text{otherwise}\\
\end{array}
\right.
\end{equation} 
for $m,n=1,\ldots,2N$ and $k=1,\ldots,N$.

\begin{table}
\caption{Definition of some variables used throughout the analysis. $\cI$ is a permutation matrix defined by~\eqref{eq: permutation matrix block}.}
\begin{center}
\begin{tabular}{c|c|c}
\hline
\hline
Variable&Real data case &Complex data case \\ \hline
Data-type variable $h$&1&2\\
Gradient vector&$\nabla_{w^\top_k}J_k(w_k)$&$\nabla_{w^*_k}J_k(w_k)$\\ 
Error vector $\bwt_{k,i}^e$&$\bwt_{k,i}$ from~\eqref{eq: error vector at node k}&$\left[\begin{array}{c}\bwt_{k,i}\\(\bwt_{k,i}^*)^\top \end{array}\right]$\\  [2.8ex]
Gradient noise $\bs_{k,i}^e(w)$&$\bs_{k,i}(w)$ from~\eqref{eq: gradient noise process}&$\left[\begin{array}{c}\bs_{k,i}(w)\\(\bs_{k,i}^*(w))^\top \end{array}\right]$\\ [2.8ex]
Bias vector $b_k^e$&$b_k$ from~\eqref{eq: bias vector}&$\left[\begin{array}{c}b_k\\(b_k^*)^\top \end{array}\right]$\\ [2.8ex]
$(k,\ell)$-th block of $\cA^e$ & $A_{k\ell}$&$\left[\begin{array}{cc}A_{k\ell}&0\\0&(A^*_{k\ell})^\top \end{array}\right]$\\[2.8ex]
Matrix $\cU^e$&$\cU$&$\cI^\top\left[\begin{array}{cc}\cU&0\\0&(\cU^*)^\top \end{array}\right]$\\ [2.8ex]
Matrix $\cJ^e_{\epsilon}$&$\cJ_{\epsilon}$ from~\eqref{eq: partitioning of cV-1}&$\left[
\begin{array}{cc}
\cJ_{\epsilon}&0\\
0&(\cJ^*_{\epsilon})^\top
\end{array}
\right]$\\ [2.8ex]
Matrix $\cV_{R,\epsilon}^e$&$\cV_{R,\epsilon}$ from~\eqref{eq: partitioning of cV-1}&\hspace{-1.5mm}$\cI^\top\left[
\begin{array}{cc}
\cV_{R,\epsilon}&0\\
0&(\cV_{R,\epsilon}^*)^\top
\end{array}
\right]$\\ [2.8ex]
Matrix $(\cV_{L,\epsilon}^e)^*$&$\cV_{L,\epsilon}^*$ from~\eqref{eq: partitioning of cV-1}&$\left[
\begin{array}{cc}
\cV_{L,\epsilon}^*&0\\
0&\cV_{L,\epsilon}^\top
\end{array}
\right]\cI$\\ [2.8ex]
Noise covariance $R_k^o$&$R_{q,k}$ from~\eqref{eq: constant covariance matrix},\eqref{eq: constant covariance matrix 2}&$\left[\begin{array}{cc}R_{s,k}&R_{q,k}\\R_{q,k}^*&R_{s,k}^\top \end{array}\right]$\\
\hline
 \end{tabular}
\end{center}
\label{table: variables table}
\end{table}


\subsection{Modeling assumptions from Part~I~\cite{nassif2019adaptation}}
In this section, we recall the assumptions used in Part~I~\cite{nassif2019adaptation} to establish the network mean-square error stability~\eqref{eq: mean-square error convergence result}. We first introduce the Hermitian Hessian \cblue{matrix} functions (see~\cite[Sec.~II-A]{nassif2019adaptation}):
\begin{align}
H_k(w_k)&\triangleq\nabla_{w_k}^2J_k(w_k),\qquad\qquad(hM_k\times hM_k)\\
\cH(\cw)&\triangleq\diag\left\{H_k(w_k)\right\}_{k=1}^N,\quad(hM\times hM).\label{eq: block diagonal Hessian}
\end{align}

\begin{assumption}{\emph{(Conditions on aggregate and individual costs).}}
\label{assump: risks}
The individual costs $J_k(w_k)\in\mathbb{R}$ are assumed to be twice differentiable and convex such that:
\begin{equation}
\label{eq: convexity condition}
\frac{\nu_k}{h}I_{hM_k}\leq H_k(w_k)\leq\frac{\delta_{k}}{h}I_{hM_k},
\end{equation} 
where $\nu_{k}\geq 0$ for $k=1,\ldots,N$. It is further assumed that, for any $\cw$, $\cH(\cw)$ satisfies:
\begin{equation}
\label{assump: aggregate risk}
0<\frac{\nu}{h}I_{hP}\leq(\cU^e)^*\cH(\cw)\cU^e\leq\frac{\delta}{h}I_{hP},
\end{equation}
for some positive parameters $\nu\leq \delta$. The data-type variable $h$ and the matrix $\cU^e$ are defined in \emph{Table~\ref{table: variables table}}.
\end{assumption}

\noindent As explained in~\cite{nassif2019adaptation}, condition~\eqref{assump: aggregate risk} ensures that problem~\eqref{eq: constrained optimization problem} has a unique minimizer $\cw^o$. 
\begin{assumption}{\emph{(Conditions on gradient noise).}}
\label{assump: gradient noise}
The gradient noise process defined in~\eqref{eq: gradient noise process} satisfies for any $\bw\in\bcF_{i-1}$ and for all $k,\ell=1,\ldots,N$:
\begin{align}
\expec[\bs_{k,i}(\bw)|\bcF_{i-1}]&=0,\label{eq: gradient noise mean condition}\\
\expec[\bs_{k,i}(\bw)\bs_{\ell,i}^*(\bw)|\bcF_{i-1}]&=0,\qquad k\neq \ell,\label{eq: uncorrelatedness of the gradient noises}\\
\expec[\bs_{k,i}(\bw)\bs_{\ell,i}^\top(\bw)|\bcF_{i-1}]&=0,\qquad k\neq \ell,\label{eq: circularity of the gradient noises}\\
\expec[\|\bs_{k,i}(\bw)\|^2|\bcF_{i-1}]&\leq(\beta_k/h)^2\|\bw\|^2+\sigma^2_{s,k},\label{eq: gradient noise mean square condition}
\end{align}
for some $\beta_k^2\geq 0$, $\sigma^2_{s,k}\geq 0$, and where $\bcF_{i-1}$ denotes the filtration generated by the random processes $\{\bw_{\ell,j}\}$ for all $\ell=1,\ldots,N$ and $j\leq i-1$. 
\end{assumption}

\begin{assumption}{\emph{(Condition on $\cU$).}}
\label{assump: matrix cU}
The full-column rank matrix $\cU$ is assumed to be semi-unitary, i.e., its column vectors are orthonormal and $\cU^*\cU=I_P$.
\end{assumption}

Consider the $N\times N$ block matrix $\cA^e$ whose $(k,\ell)$-th block is defined in Table~\ref{table: variables table}. This matrix will appear in our subsequent study. In~\cite[Lemma~2]{nassif2019adaptation}, we showed that this $hM\times hM$ matrix $\cA^e$ admits a  Jordan decomposition of the form:
\begin{equation}
\label{eq: jordan decomposition of A e}
\cA^e\triangleq\cV_{\epsilon}^e\Lambda_{\epsilon}^e(\cV_{\epsilon}^e)^{-1},
\end{equation}
with
\begin{equation}
\label{eq: partitioning of cVe-1}
\Lambda_{\epsilon}^e=\left[
\begin{array}{cc}
I_{hP}&0\\
0&\cJ^e_{\epsilon}
\end{array}
\right],~\cV^e_{\epsilon}=[\cU^e~\cV^e_{R,\epsilon}],~(\cV^e_{\epsilon})^{-1}=\left[
\begin{array}{c}
(\cU^e)^*\\
(\cV_{L,\epsilon}^e)^*
\end{array}
\right]
\end{equation}
where $\cU^e,\cJ^e_{\epsilon},\cV^e_{R,\epsilon}$, and $(\cV_{L,\epsilon}^e)^*$ are defined in {Table~\ref{table: variables table}} with the matrices $\cJ_\epsilon$, $\cV_{R,\epsilon}$, and $\cV_{L,\epsilon}^*$ originating from the eigen-structure of $\cA$. Under Assumption~\ref{assump: matrix cU}, the $M\times M$ combination matrix $\cA$ satisfying conditions~\eqref{eq: first condition on eigenvector},~\eqref{eq: second condition on eigenvector}, and~\eqref{eq: third condition in the proposition} admits a Jordan canonical decomposition of the form:
\begin{equation}
\label{eq: eigendecomposition of cA}
\cA\triangleq\cV_{\epsilon}\Lambda_{\epsilon}\cV_{\epsilon},
\end{equation}
with:
\begin{equation}
\Lambda_{\epsilon}=\left[
\begin{array}{cc}
I_{P}&0\\
0&\cJ_{\epsilon}
\end{array}
\right],~\cV_{\epsilon}=\left[
\begin{array}{cc}
\cU&\cV_{R,\epsilon}
\end{array}
\right],~
\cV_{\epsilon}^{-1}=\left[
\begin{array}{c}
\cU^*\\
\cV_{L,\epsilon}^*
\end{array}
\right],\label{eq: partitioning of cV-1}
\end{equation}
where $\cJ_{\epsilon}$ is a Jordan matrix with the eigenvalues (which may be complex but have magnitude less than one) on the diagonal and $\epsilon>0$ on the  super-diagonal. The eigen-decomposition~\eqref{eq: jordan decomposition of A e}  will be useful for establishing the mean-square performance.

The results in Part~I~\cite{nassif2019adaptation} established that the iterates $\bw_{k,i}$ converge in the mean-square error sense to a small $O(\mu)-$neighborhood around the solution $w^o$. In this part, we will be more precise and determine the size of this neighborhood, i.e., assess the size of the constant multiplying $\mu$ in the $O(\mu)-$term. To do so, we shall derive an accurate first-order expression for the mean-square error~\eqref{eq: mean-square error convergence result}; the expression will be accurate to first-order in $\mu$. 

To arrive at the desired expression, we start by motivating a long-term model for the evolution of the network error vector after sufficient iterations have passed, i.e., for $i\gg 1$. It turns out that the performance expressions obtained from analyzing the long-term model provide accurate expressions for the performance of the original network model to first order in $\mu$. To derive the long-term model, we follow the approach developed in~\cite{sayed2014adaptation}. The first step is to establish the asymptotic stability of the fourth-order moment of the error vector, $\expec\|w^o_k-\bw_{k,i}\|^4$.  Under the same settings of Theorem~1 in~\cite{nassif2019adaptation} with the second-order moment condition~\eqref{eq: gradient noise mean square condition} replaced by the fourth-order moment condition:
\begin{equation}
\label{eq: gradient noise fourth moment condition}
\expec\left[\|\bs_{k,i}(\bw)\|^4|\bcF_{i-1}\right]\leq(\beta_{4,k}/h)^4\|\bw\|^4+\sigma^4_{s4,k},
\end{equation}
with $\beta_{4,k}^4\geq 0$, $\sigma^4_{s4,k}\geq 0$, and using similar arguments as in~\cite[Theorem~9.2]{sayed2014adaptation}, we can show that the fourth-order moments of the network error vectors are stable for sufficiently small $\mu$, namely, it holds that {(see~Appendix~\ref{sec: stability of fourth-order error moment})} 
\begin{equation}
\label{eq: fourth order moment result}
\limsup_{i\rightarrow\infty}\expec\|w^o_k-\bw_{k,i}\|^4=O(\mu^2),\quad k=1,\ldots,N.
\end{equation} 
As explained in~\cite{sayed2014adaptation}, condition~\eqref{eq: gradient noise fourth moment condition} implies~\eqref{eq: gradient noise mean square condition}. We analyze the long-term model under the same settings of Theorem 1 in~\cite{nassif2019adaptation} and the following smoothness assumption on the individual costs.
\begin{assumption}{\emph{(Smoothness condition on individual costs).}}
\label{assumption: smoothness condition on Hessian}
It is assumed that each $J_k(w_k)$ satisfies the following smoothness condition close to the limit point $w^o_k$:
\begin{equation}
\label{eq: smoothness of the individual Hessian assumption}
\|\nabla_{w_k}J_k(w^o_k+\Delta w_k)-\nabla_{w_k}J_k(w^o_k)\|\leq\kappa_d\|\Delta w_k\|,
\end{equation}
for small perturbations $\|\Delta w_k\|$ and $\kappa_d\geq 0$.
\end{assumption}

\subsection{Long-term-error model}
\label{subsec: Long-term-error model}
To introduce the long-term model, we reconsider the network error recursion from Part~I~\cite{nassif2019adaptation}, namely, 
\begin{equation}
\label{eq: error recursion 1}
\fbox{$\bcwt_i^e=\bcB_{i-1}\bcwt^e_{i-1}-\mu\cA^e\bs_i^e+\mu\cA^e b^e$}
\end{equation}
where:
\begin{align}
\bcwt_i^e&\triangleq\col\left\{\bwt^e_{1,i},\ldots,\bwt^e_{N,i}\right\},\label{eq: cwt}\\
\bcH_{i-1}&\triangleq\diag\left\{\bH_{1,i-1},\ldots,\bH_{N,i-1}\right\},\\
\bcB_{i-1}&\triangleq\cA^e(I_{hM}-\mu\bcH_{i-1}),\label{eq: matrix Bi-1 1}\\
\bs_i^e&\triangleq\col\left\{\bs^e_{1,i}(\bw_{1,i-1}),\ldots,\bs^e_{N,i}(\bw_{N,i-1})\right\},\\
b^e&\triangleq\col\left\{b_1^e,\ldots,b_N^e\right\},\label{eq: expression of b 1}
\end{align}
where:
\begin{equation}
\bH_{k,i-1}\triangleq\int_{0}^1{\nabla_{w_k}^2J_k}(w^o_k-t\bwt_{k,i-1})dt,\label{eq: Hki-1}
\end{equation}
and $\bwt_{k,i}^e$, $\bs_{k,i}^e(\bw_{k,i-1})$, and $b_k^e$ are defined in Table~\ref{table: variables table} with:
\begin{align}
\bwt_{k,i}&\triangleq w^o_k-\bw_{k,i},\label{eq: error vector at node k}\\
b_k&\triangleq\nabla_{w^*_k}J_k(w^o_k).\label{eq: bias vector}
\end{align}
We rewrite~\eqref{eq: error recursion 1}  as:
\begin{equation}
\label{eq: long term error model 0}
\bcwt^e_i=\cB\bcwt^e_{i-1}-\mu\cA^e\bs^e_i+\mu\cA^e b^e+\mu\cA^e\bc_{i-1},
\end{equation}
in terms of the constant matrix $\cB$ and the random perturbation sequence $\bc_{i-1}$:
\begin{align}
\cB&\triangleq\cA^e(I_{hM}-\mu\cH^o),\label{eq: constant B}\\
\bc_{i-1}&\triangleq\bcHt_{i-1}\bcwt^e_{i-1},\label{eq: definition of the perturbation}
\end{align}
where $\cH^o$ and $\bcHt_{i-1}$ are given by:
\begin{align}
\bcHt_{i-1}&\triangleq\cH^o-\bcH_{i-1},\label{eq: definition of bcHt}\\
\cH^o&\triangleq\diag\{H_1^o,\ldots,H_N^o\},\label{eq: definition of H o general}
\end{align}
with each $H_k^o$ given by the value of the Hessian matrix at the limit point, namely,
\begin{equation}
H_k^o\triangleq \nabla_{w_k}^2J_k(w^o_k). \label{eq: H star k}
\end{equation}
By exploiting the smoothness condition~\eqref{eq: smoothness of the individual Hessian assumption}, and following an argument similar to~\cite[pp.~554]{sayed2014adaptation}, we can show from Theorem~1 in~\cite{nassif2019adaptation} that, for $i\gg 1$,  $\|\bc_{i-1}\|=O(\mu)$ with high probability. Motivated by this observation, we introduce the following approximate model, where the last term $\mu\cA^e\bc_{i-1}$ that appears in~\eqref{eq: long term error model 0}, which is $O(\mu^2)$, is removed:
\begin{equation}
\label{eq: long term error model 2}
\bcwt_i^{e'}=\cB\bcwt_{i-1}^{e'}-\mu\cA^e\bs^e_i(\bcw_{i-1})+\mu\cA^e b^e,
\end{equation}
for $i\gg 1$. Obviously, the iterates $\{\bcwt^{e'}_i\}$ generated by~\eqref{eq: long term error model 2} are generally different from the iterates  generated by  the original recursion~\eqref{eq: error recursion 1}. To highlight this fact, we are using the prime
notation for the state of the long-term model. Note that the driving process $\bs_{i}^e(\bcw_{i-1})$ in~\eqref{eq: long term error model 2} is the same gradient noise process from the original recursion~\eqref{eq: error recursion 1}.  

We start by showing that the mean-square difference between  $\{\bcwt_i^{e'},\bcwt_i^{e}\}$ is asymptotically bounded by $O(\mu^2)$ and that the mean-square-error of the long term model~\eqref{eq: long term error model 2} is within $O(\mu^{\frac{3}{2}})$ from the one of the original recursion~\eqref{eq: error recursion 1}.  Working with~\eqref{eq: long term error model 2} is much more tractable for performance analysis because its dynamics is driven by the constant matrix $\cB$ as opposed to the random matrix $\bcB_{i-1}$ in~\eqref{eq: error recursion 1}. Therefore, we shall work with model~\eqref{eq: long term error model 2} and evaluate its performance, which will provide an accurate representation for the performance of~\eqref{eq: distributed solution} to first order in $\mu$.

\begin{theorem}{\emph{(Size of approximation error).}} 
\label{theo: performance error}
Consider a network of $N$ agents running the distributed  strategy~\eqref{eq: distributed solution} with a matrix $\cA$ satisfying conditions~\eqref{eq: first condition on eigenvector},~\eqref{eq: second condition on eigenvector}, and~\eqref{eq: third condition in the proposition} and $\cU$ satisfying Assumption~\ref{assump: matrix cU}. Assume the individual costs, $J_k(w_k)$, satisfy the conditions in Assumptions~\ref{assump: risks} and~\ref{assumption: smoothness condition on Hessian}. Assume further that the gradient noise processes satisfy the conditions in Assumption~\ref{assump: gradient noise} with the second-order moment condition~\eqref{eq: gradient noise mean square condition} replaced by the fourth-order moment condition~\eqref{eq: gradient noise fourth moment condition}. Then, it holds that, for sufficiently small step-sizes:
\begin{align}
\limsup_{i\rightarrow\infty}\expec\|\bcwt_i^e-\bcwt^{e'}_i\|^2&=O(\mu^2),\label{eq: performance error result 1}\\
\limsup_{i\rightarrow\infty}\expec\|\bcwt_i^e\|^2&=\limsup_{i\rightarrow\infty}\expec\|\bcwt^{e'}_i\|^2+O(\mu^{3/2}).\label{eq: performance error result 2}
\end{align} 
\end{theorem}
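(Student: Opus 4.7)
\noindent\emph{Proof plan.} The approach is to set $\bz_i \triangleq \bcwt^e_i - \bcwt^{e'}_i$, analyze its driven linear recursion to control $\expec\|\bz_i\|^2$, and then convert the resulting bound into the desired MSE comparison via Cauchy--Schwarz. Subtracting~\eqref{eq: long term error model 2} from~\eqref{eq: long term error model 0} gives
\begin{equation*}
\bz_i = \cB\,\bz_{i-1} + \mu\,\cA^e\bc_{i-1},
\end{equation*}
which has the same coefficient matrix $\cB$ as the long-term model. The key observation is that the driving perturbation is $O(\mu^2)$ in mean-square sense: the smoothness condition~\eqref{eq: smoothness of the individual Hessian assumption} applied to $\bcHt_{i-1} = \cH^o - \bcH_{i-1}$ (recalling that $\bcH_{i-1}$ is a path-integral of Hessians along the segment joining $w^o_k$ and $\bw_{k,i-1}$) yields a Lipschitz-type bound $\|\bcHt_{i-1}\|\leq c\,\max_k\|\bwt_{k,i-1}\|$, so that $\|\bc_{i-1}\|^2\leq c^2\|\bcwt^e_{i-1}\|^4$. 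Combined with the fourth-order moment stability~\eqref{eq: fourth order moment result}, this gives $\limsup_{i\to\infty}\expec\|\bc_{i-1}\|^2 = O(\mu^2)$.

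Next, to bound $\expec\|\bz_i\|^2$, I apply the Jordan similarity~\eqref{eq: jordan decomposition of A e}--\eqref{eq: partitioning of cVe-1}. In the coordinates $(\cV^e_\epsilon)^{-1}\bz_i = [\bar{\bz}_i^{(1)};\,\bar{\bz}_i^{(2)}]$, the transformed coefficient matrix becomes $\Lambda^e_\epsilon\bigl(I - \mu(\cV^e_\epsilon)^{-1}\cH^o\cV^e_\epsilon\bigr)$, which splits into a range-space diagonal block $I_{hP}-\mu(\cU^e)^*\cH^o\cU^e$ (contractive at rate $1-O(\mu)$ by Assumption~\ref{assump: risks}), a complementary block $\cJ^e_\epsilon\bigl(I - \mu(\cV^e_{L,\epsilon})^*\cH^o\cV^e_{R,\epsilon}\bigr)$ (contractive at a constant rate $<1$ thanks to~\eqref{eq: third condition in the proposition}), plus $O(\mu)$ off-diagonal coupling. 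Writing the block energy inequalities and applying Jensen's inequality with parameter $O(\mu)$ on the range-space block and $O(1)$ on the complementary block absorbs the cross-coupling and yields
\begin{align*}
\limsup_i \expec\|\bar{\bz}_i^{(1)}\|^2 &\leq O(1)\cdot\limsup_i\expec\|\bar{\bz}_i^{(2)}\|^2 + O(\mu^2),\\
\limsup_i \expec\|\bar{\bz}_i^{(2)}\|^2 &\leq O(\mu^2)\cdot\limsup_i\expec\|\bar{\bz}_i^{(1)}\|^2 + O(\mu^4),
\end{align*}
which combine to $\limsup_i\expec\|\bar{\bz}_i^{(1)}\|^2 = O(\mu^2)$ and $\limsup_i\expec\|\bar{\bz}_i^{(2)}\|^2 = O(\mu^4)$, establishing~\eqref{eq: performance error result 1}. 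The argument mirrors the mean-square analysis already carried out in Part~I~\cite{nassif2019adaptation} and~\cite[Ch.~9]{sayed2014adaptation}, only with the driving sequence $\mu\cA^e\bc_{i-1}$ replacing the stochastic-gradient and bias terms.

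Part~\eqref{eq: performance error result 2} then follows by expanding
\begin{equation*}
\expec\|\bcwt^e_i\|^2 = \expec\|\bcwt^{e'}_i\|^2 + \expec\|\bz_i\|^2 + 2\,\mathrm{Re}\,\expec(\bz_i^{*}\bcwt^{e'}_i)
\end{equation*}
and bounding the cross term by Cauchy--Schwarz,
\begin{equation*}
|\expec(\bz_i^{*}\bcwt^{e'}_i)|\leq\sqrt{\expec\|\bz_i\|^2\cdot\expec\|\bcwt^{e'}_i\|^2} = \sqrt{O(\mu^2)\cdot O(\mu)} = O(\mu^{3/2}),
\end{equation*}
where $\limsup_i\expec\|\bcwt^{e'}_i\|^2 = O(\mu)$ is obtained by reapplying the same mean-square argument to the constant-coefficient recursion~\eqref{eq: long term error model 2} (which is in fact easier than~\eqref{eq: error recursion 1} because $\cB$ is deterministic). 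Since $\expec\|\bz_i\|^2 = O(\mu^2)\subseteq O(\mu^{3/2})$, the two error contributions combine into a total $O(\mu^{3/2})$ correction, as claimed.

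The hardest step is the middle one: choosing the Jensen parameters so that the $O(\mu)$ off-diagonal coupling between the range-space and complementary blocks does not destroy the slow $1-O(\mu)$ contraction rate on the range-space block. The correct weighting pushes the coupling into a lower-order term and is essentially the same weighted-norm construction already used to prove Theorem~1 of Part~I, so the machinery is available and only needs to be re-instantiated for the driving sequence $\mu\cA^e\bc_{i-1}$.
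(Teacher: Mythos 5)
Your proposal follows essentially the same route as the paper's proof in Appendix~A: the same difference recursion $\bz_i=\cB\bz_{i-1}+\mu\cA^e\bc_{i-1}$, the same Lipschitz bound $\|\bc_{i-1}\|^2\leq c^2\|\bcwt^e_{i-1}\|^4$ fed by the fourth-order moment result, the same Jordan-coordinate splitting into coupled range-space/complement inequalities yielding $O(\mu^2)$ and $O(\mu^4)$, and the same Cauchy--Schwarz argument for the cross term (the paper bounds it via $\expec\|\bcwt^e_i\|^2=O(\mu)$ from Part~I rather than re-deriving $\expec\|\bcwt^{e'}_i\|^2=O(\mu)$ for the long-term model, but this is an immaterial bookkeeping difference). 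The argument is correct as proposed.
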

\begin{proof}

See Appendix~\ref{app: Size of the approximation error}.
\end{proof}
Using similar eigenvalue perturbation arguments as in~\cite[Theorem~9.3]{sayed2014adaptation}, we can show that, under the same settings of Theorem~\ref{theo: performance error}, the constant matrix $\cB$ defined by~\eqref{eq: constant B} is stable for sufficiently small step-sizes  (see~Appendix~\ref{sec: Stability of of the coefficient matrix B}).
\subsection{Mean-square-error performance}
\label{subsec: Mean-square-error performance}
We showed in Theorem~1 in Part~I~\cite{nassif2019adaptation} that a network running the distributed strategy~\eqref{eq: distributed solution} is mean-square stable for sufficiently small $\mu$. Particularly, we showed that $\limsup_{i\rightarrow\infty}\expec\|w^o_k-\bw_{k,i}\|^2=O(\mu)$. In this section, we  assess the size of the mean-square error by measuring the network MSD defined by~\eqref{eq: MSD definition}.

We refer to the individual gradient noise \cblue{process} in~\eqref{eq: gradient noise process} and denote its conditional covariance matrix by:
\begin{equation}
\label{eq: conditional covariance matrix}
R_{s,k,i}^e(\bw)\triangleq\expec\left[\bs^e_{k,i}(\bw)\bs_{k,i}^{e*}(\bw)|\bcF_{i-1}\right].
\end{equation}
We assume that, in the limit, the following moment matrices tend to constant values when evaluated at $w^o_k$:
\begin{align}
\label{eq: constant covariance matrix}
R_{s,k}&\triangleq\lim_{i\rightarrow\infty}\expec\left[\bs_{k,i}(w^o_k)\bs_{k,i}^*(w^o_k)|\bcF_{i-1}\right],\\
\label{eq: constant covariance matrix 2}
R_{q,k}&\triangleq\lim_{i\rightarrow\infty}\expec\left[\bs_{k,i}(w^o_k)\bs_{k,i}^\top(w^o_k)|\bcF_{i-1}\right].
\end{align}
\begin{assumption}{{\emph{(Smoothness condition on noise covariance).}}}
\label{assumption: gradient noise moment}
It is assumed that the conditional second-order moments of the individual noise processes satisfy the following smoothness condition,
\begin{equation}
\|R_{s,k,i}^e(w^o_k+\Delta w_k)-R_{s,k,i}^e(w^o_k)\|\leq\kappa_d\|\Delta w_k\|^\gamma,\label{eq: smoothness condition on the covariance noise}
\end{equation}
for small perturbations $\|\Delta w_k\|$, and for some constant $\kappa_d\geq 0$ and exponent $0<\gamma\leq 4$. 
\end{assumption}
\noindent One useful conclusion that follows from~\eqref{eq: smoothness condition on the covariance noise} is that, for $i\gg 1$ and for sufficiently small step-size, we can express the covariance matrix of $\bs_{k,i}^e(\bw)$ in terms of the limiting matrix $R^o_k$ defined in Table~\ref{table: variables table} as follows (\cred{see~\cite[Lemma~11.1]{sayed2014adaptation}}): 
\begin{equation}
\expec\bs_{k,i}^e(\bw_{k,i-1})\bs_{k,i}^{e*}(\bw_{k,i-1})=R_{k}^o+O(\mu^{\min\{1,\frac{\gamma}{2}\}}).\label{eq: zero bound on the difference of covariances}
\end{equation}

Before proceeding, we introduce the $(hM)^2\times (hM)^2$ matrix $\cF$ that will play a critical role in characterizing the performance:
\begin{equation}
\cF=\cB^\top\otimes_b\cB^*.\label{eq: definition of the matrix cF}
\end{equation}
\cblue{This matrix is defined in in terms of the block Kronecker operation. In the derivation that follows, we shall use the block Kronecker product $\otimes_b$ operator~\cite{koning1991block} and the block vectorization operator $\bvc(\cdot)$\footnote{In our derivations, the block Kronecker product and the block vectorization operations are applied to $2\times 2$ block matrices $\cC=[C_{k\ell}]$ and $\cD=[D_{k\ell}]$ with blocks $\{C_{11},D_{11}\}$ of size $hP\times hP$, blocks $\{C_{12},D_{12}\}$ of size $hP\times h(M-P)$, blocks $\{C_{21},D_{21}\}$ of size $h(M-P)\times hP$, and blocks $\{C_{22},D_{22}\}$ of size $h(M-P)\times h(M-P)$.}. As explained in~\cite{sayed2014adaptation}, these operations preserve the locality of the blocks in the original matrix arguments.} The matrix $\cF$ will sometimes appear transformed under the similarity transformation:
\begin{equation}
\label{eq: definition cFb}
\cFb\triangleq\left((\cV^e_{\epsilon})^\top\otimes_b(\cV^e_{\epsilon})^*\right)\cF\left((\cV^e_{\epsilon})^\top\otimes_b(\cV^e_{\epsilon})^*\right)^{-1}.
\end{equation}
\begin{lemma}{\emph{(Low-rank approximation).}}
Assume the matrix $\cA$ satisfies conditions~\eqref{eq: first condition on eigenvector},~\eqref{eq: second condition on eigenvector}, and~\eqref{eq: third condition in the proposition} with $\cU$ satisfying Assumption~\ref{assump: matrix cU}. For sufficiently small step-sizes, it holds that $\cF$ is stable and that:
\begin{align}
(I-\cF)^{-1}&=O(1/\mu),\label{eq: order of inverse of I-cF}\\
(I-\cFb)^{-1}&=\left[\begin{array}{c|c}
O(1/\mu)&O(1)\\
\hline
O(1)&O(1)
\end{array}
\right],\label{eq: order of inverse of I-cFb}
\end{align}
where the leading $(hP)^2\times (hP)^2$ block in $(I-\cFb)^{-1}$ is $O(1/\mu)$. Moreover, we can also write:
\begin{equation}
\label{eq: low rank approximation of I-F inverse}
(I-\cF)^{-1}={\left([(\cU^e)^*]^\top\otimes_b\cU^e\right)Z^{-1}\left((\cU^e)^\top\otimes_b(\cU^e)^*\right)}+O(1),
\end{equation}
in terms of the block Kronecker operation, where the matrix $Z$ has dimension $(hP)^2\times (hP)^2$:
\begin{equation}
\label{eq: definition of the matrix Z}
Z=(I_{hP}\otimes\cD_{11}^*)+(\cD_{11}^\top\otimes I_{hP})=O(\mu),
\end{equation}
with $\cD_{11}=\mu\,(\cU^e)^*\cH^o\cU^e$ which is positive definite under Assumption~\ref{assump: risks}.
\end{lemma}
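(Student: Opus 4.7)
My plan is to block-diagonalize the slow $hP$-dimensional subspace of $\cB$ using the Jordan decomposition~\eqref{eq: jordan decomposition of A e}, lift this structure through $\otimes_b$ to expose a slow $(hP)^2$-dimensional subspace of $\cF$, and finally invert $I-\cFb$ by Schur complement.

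I would first conjugate $\cB=\cA^e(I-\mu\cH^o)$ by $\cV^e_\epsilon$ to obtain
\[
T\triangleq(\cV^e_\epsilon)^{-1}\cB\,\cV^e_\epsilon = \Lambda^e_\epsilon\bigl(I-\mu\,(\cV^e_\epsilon)^{-1}\cH^o\cV^e_\epsilon\bigr),
\]
and partition $(\cV^e_\epsilon)^{-1}\cH^o\cV^e_\epsilon$ conformally with $\Lambda^e_\epsilon=\diag(I_{hP},\cJ^e_\epsilon)$; by~\eqref{eq: partitioning of cVe-1} its $(1,1)$ block is $(\cU^e)^*\cH^o\cU^e$. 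This gives
\[
T=\begin{pmatrix} I_{hP}-\cD_{11} & O(\mu) \\ O(\mu) & \cJ^e_\epsilon+O(\mu) \end{pmatrix},
\]
with $\cD_{11}=\mu(\cU^e)^*\cH^o\cU^e\succ 0$ by Assumption~\ref{assump: risks}. A first-order eigenvalue perturbation (as in~\cite[Thm.~9.3]{sayed2014adaptation}, invoked in Appendix~\ref{sec: Stability of of the coefficient matrix B}) shifts the $hP$ unit eigenvalues of $\cA^e$ to $1-\mu\lambda_j((\cU^e)^*\cH^o\cU^e)+O(\mu^2)$, which lie strictly inside the unit disk for small $\mu$; the remaining eigenvalues of $T$ inherit those of $\cJ^e_\epsilon$ and are already contractive, so $\rho(\cB)<1$. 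Since $\cF\,\bvc(X)=\bvc(\cB^* X\cB)$ for every $X$, iteration gives $\cF^k\to 0$ geometrically, so $\rho(\cF)<1$ and $I-\cF$ is invertible.

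Next, the multiplicativity and locality of $\otimes_b$ under the similarity~\eqref{eq: definition cFb} yield $\cFb=T^\top\otimes_b T^*$, whose outer $2\times 2$ block partition is inherited from $T$: the leading $(1,1)$ block, of size $(hP)^2\times(hP)^2$, is
\[
T_{11}^\top\otimes T_{11}^* = (I_{hP}-\cD_{11})^\top\otimes(I_{hP}-\cD_{11})^* = I - Z + \cD_{11}^\top\otimes\cD_{11}^*,
\]
the off-diagonal blocks inherit a factor $\mu$ from $T_{12}$ or $T_{21}$ and are $O(\mu)$, and the $(2,2)$ block of $I-\cFb$ reduces to $D_0+O(\mu)$, where $D_0=I-(\cJ^e_\epsilon)^\top\otimes(\cJ^e_\epsilon)^*$ is invertible and $O(1)$ because $\rho(\cJ^e_\epsilon)<1$. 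Writing $I-\cFb=\left(\begin{smallmatrix}A&B\\C&D\end{smallmatrix}\right)$ with $A=Z+O(\mu^2)$, $B,C=O(\mu)$, and $D=D_0+O(\mu)$, the Schur complement is $A-BD^{-1}C=Z+O(\mu^2)$; since $Z=(I_{hP}\otimes\cD_{11}^*)+(\cD_{11}^\top\otimes I_{hP})$ is the Lyapunov-type operator generated by $\cD_{11}\succ 0$, it is invertible with $Z^{-1}=O(1/\mu)$. Standard $2\times 2$ block inversion then delivers~\eqref{eq: order of inverse of I-cFb} (leading block $Z^{-1}+O(1)$, cross blocks $O(1/\mu)\cdot O(\mu)=O(1)$, bottom-right $D_0^{-1}+O(\mu)$), and in particular $(I-\cF)^{-1}=O(1/\mu)$.

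Finally I undo the similarity transformation:
\[
(I-\cF)^{-1}=\bigl((\cV^e_\epsilon)^{-\top}\otimes_b(\cV^e_\epsilon)^{-*}\bigr)(I-\cFb)^{-1}\bigl((\cV^e_\epsilon)^\top\otimes_b(\cV^e_\epsilon)^*\bigr).
\]
The $O(1/\mu)$ contribution comes exclusively from the $(1,1)$ block $Z^{-1}$ of $(I-\cFb)^{-1}$; its left and right multipliers are the first block-columns and block-rows of the outer Kronecker factors, which by~\eqref{eq: partitioning of cVe-1} equal $[(\cU^e)^*]^\top\otimes_b\cU^e$ and $(\cU^e)^\top\otimes_b(\cU^e)^*$, respectively. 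All other terms are $O(1)$, yielding~\eqref{eq: low rank approximation of I-F inverse}. The main obstacle is the block Kronecker bookkeeping: verifying that the locality property really preserves the outer $2\times 2$ partition from $T$ to $\cFb$, so that the off-diagonal blocks of $I-\cFb$ are genuinely $O(\mu)$ and the Schur-complement perturbation to $Z$ stays $O(\mu^2)$. Once this structural check is in hand, the remaining analysis is routine eigenvalue perturbation plus block inversion.
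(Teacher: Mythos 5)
Your proposal is correct and follows essentially the same route as the paper's Appendix B: conjugate $\cB$ into its block form $\cBb$, lift the similarity through the block Kronecker product so that $\cFb=\cBb^\top\otimes_b\cBb^*$ (the paper's matrix $\cX$), read off that the leading $(hP)^2\times(hP)^2$ block of $I-\cFb$ equals $Z+O(\mu^2)$ while the off-diagonal blocks are $O(\mu)$ and the trailing block is $O(1)$ and invertible, invert by the block (Schur-complement) formula, and undo the similarity to isolate the rank-deficient $O(1/\mu)$ term $\left([(\cU^e)^*]^\top\otimes_b\cU^e\right)Z^{-1}\left((\cU^e)^\top\otimes_b(\cU^e)^*\right)$. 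Your explicit tracking of the $\cD_{11}^\top\otimes\cD_{11}^*=O(\mu^2)$ correction and the stability argument via $\rho(\cF)=[\rho(\cB)]^2<1$ match what the paper does implicitly, so no gap remains.
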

\begin{proof}
See Appendix~\ref{app: low-rank approximation}.
\end{proof}
\begin{theorem}{{\emph{(Mean-square-error performance).}}} 
\label{theo: Network limiting performance}
Consider the same settings of Theorem~\ref{theo: performance error}. Assume further that Assumption~\ref{assumption: gradient noise moment} holds.  Let $\gamma_m\triangleq\frac{1}{2}\min\{1,\gamma\}>0$ with $\gamma\in(0,4]$ from~\eqref{eq: smoothness condition on the covariance noise}. Then, it holds that:
\begin{align}
&\limsup_{i\rightarrow\infty}\frac{1}{{h}N}\expec\|\bcwt_{i}^e\|^2\nonumber\\
&=\frac{1}{hN}(\emph{\bvc}(\cY^\top))^\top(I-\cF)^{-1}\emph{\bvc}(I_{hM})+O(\mu^{1+\gamma_m}),\label{eq: network error variance}\\
&=\frac{1}{hN}\emph{\tr}\left(\sum_{n=0}^{\infty}\cB^n\cY(\cB^*)^n\right)+O(\mu^{1+\gamma_m}),\label{eq: network error variance 100}
\end{align}
where:
\begin{align}
\cY&=\mu^2\cA^e\cS(\cA^e)^*,\label{eq: definition of the matrix cY}\\
\cS&=\emph{\diag}\{R_{1}^o,R_{2}^o,\ldots,R_{N}^o\}.\label{eq: definition of S general}
\end{align}
 Furthermore, it holds that:
 \begin{equation}
\label{eq: final result for MSD}
\emph{MSD}=\frac{\mu}{2hN}\emph{\tr}\left(\left((\cU^e)^*\cH^o\cU^e\right)^{-1}\left((\cU^e)^*\cS\cU^e\right)\right).
\end{equation}
 \end{theorem}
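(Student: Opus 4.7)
The strategy is to analyze the long-term model~\eqref{eq: long term error model 2} in place of the true recursion. Theorem~\ref{theo: performance error} shows the two agree in mean-square to $O(\mu^{3/2})$, which is $o(\mu)$; hence they carry the same leading-order MSD, and it suffices to compute $\limsup_i \expec\|\bcwt_i^{e'}\|^2$ to first order in $\mu$.

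Let $\Sigma_i \triangleq \expec \bcwt_i^{e'}(\bcwt_i^{e'})^*$. Using the zero-mean and cross-uncorrelatedness properties in Assumption~\ref{assump: gradient noise} together with the covariance-smoothness estimate~\eqref{eq: zero bound on the difference of covariances}, the long-term recursion yields, for $i\gg 1$, the Lyapunov-type relation $\Sigma_i = \cB \Sigma_{i-1} \cB^* + \cY + O(\mu^{2+\gamma_m})$ with $\cY = \mu^2 \cA^e \cS (\cA^e)^*$. The deterministic drift $\mu \cA^e b^e$ contributes only $O(\mu^2)$ to $\expec\|\bcwt_\infty^{e'}\|^2$: KKT optimality of $\cw^o$ for~\eqref{eq: constrained optimization problem} forces $(\cU^e)^* b^e = 0$, so combined with $(\cU^e)^* \cA^e = (\cU^e)^*$ the vector $\cA^e b^e$ lies in the subspace complementary to $\cR(\cU^e)$, on which $\cB$ is strictly contractive, making $(I-\cB)^{-1}\cA^e b^e$ uniformly bounded in $\mu$. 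Block-vectorizing via $\bvc(\cB\Sigma\cB^*) = \cF \bvc(\Sigma)$ and invoking stability of $\cF$ (Appendix~\ref{sec: Stability of of the coefficient matrix B}) gives $\bvc(\Sigma_\infty) = (I-\cF)^{-1}\bvc(\cY) + O(\mu^{1+\gamma_m})$. Taking the trace against $\bvc(I_{hM})$ (with the transposition dictated by the block-vec convention) yields~\eqref{eq: network error variance}, and the series form~\eqref{eq: network error variance 100} follows by the Neumann expansion $(I-\cF)^{-1} = \sum_n \cF^n$ together with $\cF^n \bvc(\cdot) = \bvc(\cB^n \cdot (\cB^*)^n)$.

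To extract the closed form~\eqref{eq: final result for MSD}, substitute the low-rank approximation~\eqref{eq: low rank approximation of I-F inverse} into~\eqref{eq: network error variance}. The $O(1)$ remainder in the lemma, combined with $\cY = O(\mu^2)$, contributes only $O(\mu^2)$ and is absorbed into the error term. Using $(A^\top \otimes_b B)\bvc(X) = \bvc(BXA)$ and $\bvc(A)^\top\bvc(B) = \tr(A^\top B)$, together with $(\cU^e)^*\cU^e = I_{hP}$ (from Assumption~\ref{assump: matrix cU} and the structure of $\cU^e$ in Table~\ref{table: variables table}) and $\cA^e \cU^e = \cU^e$, $(\cU^e)^*\cA^e = (\cU^e)^*$ (from~\eqref{eq: first condition on eigenvector}--\eqref{eq: second condition on eigenvector}), the inner factors simplify to $((\cU^e)^\top \otimes_b (\cU^e)^*)\bvc(I_{hM}) = \bvc(I_{hP})$ and $(\cU^e)^* \cY \cU^e = \mu^2 (\cU^e)^* \cS \cU^e$. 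The operator $Z$ in~\eqref{eq: definition of the matrix Z} is the vec-form of the Lyapunov map $X \mapsto \cD_{11}^* X + X \cD_{11}$ with $\cD_{11} = \mu(\cU^e)^*\cH^o\cU^e$ Hermitian positive definite by Assumption~\ref{assump: risks}, so $Z^{-1}\bvc(I_{hP}) = \bvc\bigl(\tfrac{1}{2\mu}((\cU^e)^*\cH^o\cU^e)^{-1}\bigr)$. Assembling these pieces and dividing by $\mu$ isolates the MSD coefficient, which by the definition~\eqref{eq: MSD definition} equals~\eqref{eq: final result for MSD}.

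The main obstacle is the careful bookkeeping in the last step: applying vec/trace identities consistently under the block-Kronecker $\otimes_b$ operator with heterogeneous block sizes ($hP$ versus $h(M-P)$), and tracking conjugations and transpositions uniformly across the real and complex data cases via the extended quantities of Table~\ref{table: variables table}. A secondary subtlety is the $O(\mu^2)$ estimate for the bias contribution, which jointly relies on KKT orthogonality $(\cU^e)^* b^e = 0$ and on the strict contractivity of $\cB$ on $\cR(\cU^e)^\perp$; without this, $\|\expec\bcwt_\infty^{e'}\|^2$ could in principle corrupt the leading-order MSD.
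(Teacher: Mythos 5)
Your proposal is correct and follows essentially the same route as the paper: transfer to the long-term model via Theorem~\ref{theo: performance error}, isolate the $O(\mu^2)$ bias contribution using $(\cU^e)^*b^e=0$ and the first-moment bound $\|\expec\bcwt_i^{e'}\|=O(\mu)$, extract $\cY$ via the covariance-smoothness condition, and obtain~\eqref{eq: final result for MSD} from the low-rank approximation of $(I-\cF)^{-1}$ together with $Z^{-1}\vc(I_{hP})=\vc\bigl(\tfrac{1}{2}\cD_{11}^{-1}\bigr)$ and $(\cU^e)^*\cA^e=(\cU^e)^*$. The only cosmetic difference is that you propagate the state covariance $\Sigma_i=\cB\Sigma_{i-1}\cB^*+\cY+\cdots$ directly, whereas the paper works with the dual weighted-norm relation and picks the weight solving $\Sigma-\cB^*\Sigma\cB=I_{hM}$; the two are equivalent, though with your uncentered formulation you should make explicit (as you do in your closing remark) that the $O(\mu^2)$ bias and cross terms live in $\cR(\cU^e)^\perp$ and hence escape the $O(1/\mu)$ amplification of $(I-\cF)^{-1}$.
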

\begin{proof}
See Appendix~\ref{app: proof of theorem 3}.
\end{proof}
Since  $(I-\cF)$ is of size $(hM)^2\times (hM)^2$, the first term on the R.H.S. of expression~\eqref{eq: network error variance} may be hard to evaluate due to numerical reasons. In comparison, the first term in expression~\eqref{eq: network error variance 100} only requires manipulations of matrices of size $hM\times hM$. In practice, a reasonable number of terms can be used instead of $n\rightarrow \infty$ to obtain accurate evaluation.

Note that the MSD of the centralized solution is equal to~\eqref{eq: final result for MSD} since the centralized implementation can be obtained from~\eqref{eq: distributed solution} by replacing $\cP_{\cu}$ by $\cA$ and by assuming fully-connected network. We therefore conclude, for sufficiently small step-sizes (i.e., in the slow adaptation regime),  that the distributed strategy~\eqref{eq: distributed solution} is able to attain the same MSD performance as the centralized solution. 

\section{Finding a combination matrix $\cA$}
\label{sec: finding a combination matrix A}
In the following, we consider the problem of finding an  $\cA$ that minimizes the number of edges to be added to the original topology while satisfying the constraints~\eqref{eq: third condition in the proposition},~\eqref{eq: first condition on eigenvector}, and~\eqref{eq: second condition on eigenvector}. That is, we consider the following optimization problem: 
\begin{equation}
\label{eq: optimization problem A}
\begin{array}{cl}
\minimize\limits_{\cA}&f(\cA)=\sum\limits_{k=1}^N\sum\limits_{\ell\notin\cN_k}\vertiii{A_{k\ell}}_1+\frac{\gamma}{2}\|\cA\|^2_{\text{F}},\\
\st&\cA\,\cU=\cU,
~~\cA=\cA^*,\\ 
&\rho(\cA-\cP_{\cu})\leq 1-\epsilon,
\end{array}
\end{equation} 
where $\vertiii{A_{k\ell}}_1\triangleq\sum_{m=1}^{M_k}\sum_{n=1}^{M_\ell}|[A_{k\ell}]_{mn}|\in\mathbb{R}$, $\|\cA\|_{\text{F}}=\sqrt{\tr(\cA^*\cA)}\in\mathbb{R}$ is the Frobenius norm of $\cA$, $\gamma\geq 0$ is a regularization parameter, and \cblue{$\epsilon\in(0,1]$} is a small positive number. In general, the spectral radius of a matrix is not convex \cblue{over the matrix space}. We therefore restrict our search to the class of Hermitian matrices, since their spectral radius coincides with their spectral norm (maximum singular value), which is a convex function. Problem~\eqref{eq: optimization problem A} is convex since the objective is convex, the equality constraints are linear, and the inequality constraint function is convex~\cite{boyd2004convex}. The parameter $\epsilon$ controls the convergence rate of $\cA^i$ towards the projector $\cP_{\cu}$. That is, small $\epsilon$ leads to slow convergence and large $\epsilon$ gives fast convergence. The convex $\ell_1$-norm based function $\sum_{k=1}^N\sum_{\ell\notin\cN_k}\vertiii{A_{k\ell}}_1$ is used as a  relaxation of the pseudo $\ell_0$-norm $h(\cA)=\sum_{k=1}^N\card\{\ell |A_{k\ell}\neq 0,\ell\notin\cN_k\}$, which is a non-convex function that leads to computational challenges. Among the potentially multiple feasible solutions, the cardinality function $h(\cA)$ in the objective in~\eqref{eq: optimization problem A} selects as optimum the one that minimizes the number of edges to be added to the network topology in order to satisfy constraint~\eqref{eq: condition 1 on A}. The quadratic term $\|\cA\|^2_{\text{F}}=\sum_{m=1}^M\sum_{n=1}^M|a_{mn}|^2$ in~\eqref{eq: optimization problem A} makes the objective strictly convex, and therefore problem~\eqref{eq: optimization problem A} has a unique minimum. Problem~\eqref{eq: optimization problem A} can be solved using general convex optimization solvers such as CVX~\cite{grant2014cvx}. These solvers generally implement second-order methods that require calculation of Hessian \cblue{matrices}. Therefore, problems with more than few thousand entries are probably beyond the capabilities of these solvers. The Douglas-Rachford algorithm can also be employed to solve problem~\eqref{eq: optimization problem A}. As we shall see in the following, the required proximal operators for implementing this algorithm can be computed efficiently using closed form expressions. 

In the following, we shall assume that $\cU\in\mathbb{R}^{M\times P}$ and, therefore, we shall solve~\eqref{eq: optimization problem A} over real-valued matrices $\cA\in\mathbb{R}^{M\times M}$. In order to solve the constrained problem~\eqref{eq: optimization problem A}, we shall apply the Douglas-Rachford splitting algorithm~\cite{combettes2011proximal}, which is used to solve problems of the form:
\begin{equation}
\minimize_{x\in\mathbb{R}^N}g_1(x)+g_2(x),
\end{equation}
where $g_1(\cdot)$ and $g_2(\cdot)$ are functions in $\Gamma_0(\mathbb{R}^N)$ such that $(\text{ri dom}g_1)\cap(\text{ri dom}g_2)\neq 0$ and $g_1(x)+g_2(x)\rightarrow+\infty$ as $\|x\|\rightarrow+\infty$. 
By selecting $g_1(\cdot)$ as $f(\cdot)$ in~\eqref{eq: optimization problem A} and $g_2(\cdot)$ as the indicator function $\cI_{\Omega}(\cdot)$ of the closed nonempty convex set:
\begin{equation}
\label{eq: set Omega}
\Omega=\{\cA|\cA\,\cU=\cU,\cA=\cA^\top,\|\cA-\cP_{\cu}\|\leq 1-\epsilon\}
\end{equation}
 defined as:
\begin{equation}
\cI_{\Omega}(\cA)\triangleq\left\lbrace
\begin{array}{ll}
0,& \text{if }\cA\in\Omega,\\
+\infty,&\text{if }\cA\notin\Omega,
\end{array}
\right.
\end{equation}
the Douglas-Rachford algorithm to solve~\eqref{eq: optimization problem A} has the following form:
\begin{equation}
\label{eq: douglas rachford algorithm}
\left\lbrace
\begin{split}
\cA_i&=\prox_{\eta f}(\cC_i)\\
\cC_{i+1}&=\cC_{i}+\prox_{\eta \cI_\Omega}(2\cA_i-\cC_i)-\cA_i,
\end{split}
\right.
\end{equation}
where $\eta>0$ and $\prox_{\eta g}:\mathbb{R}^{M\times M}\rightarrow\mathbb{R}^{M\times M}$ is the proximal operator of $\eta g(\cdot)$ ($g:\mathbb{R}^{M\times M}\rightarrow\mathbb{R}\cup\{+\infty\}$) defined as~\cite{combettes2011proximal,parikh2014proximal}:
\begin{equation}
\prox_{\eta g}(\cC)=\arg\min_{\cA}g(\cA)+\frac{1}{2\eta}\|\cA-\cC\|^2_\text{F}.
\end{equation}
Every sequence $(\cA_i)_{i\in\mathbb{N}}$ generated by algorithm~\eqref{eq: douglas rachford algorithm} converges to a solution of problem~\eqref{eq: optimization problem A}~\cite[Proposition 4.3]{combettes2011proximal}. The Douglas-Rachford algorithm operates by splitting since it employs the functions $f(\cdot)$ and $\cI_\Omega(\cdot)$ separately. It requires the implementation of two proximal steps at each iteration\cblue{,} which can be computed efficiently as explained in the following. 

The function $f(\cA)$ is an entrywise matrix function that treats the matrix $\cA\in\mathbb{R}^{M\times M}$ as a vector in $\mathbb{R}^{M^2}$ and then uses a corresponding vector function; the proximal operator is then the same as that of the vector function. Let $C_{k\ell}$ denote the $(k,\ell)$-th block of an $N\times N$ block matrix $\cC$ and let $[C_{k\ell}]_{mn}$ denote the $(m,n)$-th entry of $C_{k\ell}$. The $(k,\ell)$-th block of the proximal operator of $\eta f(\cdot)$ is given by:
\begin{equation}
[\prox_{\eta f}(\cC)]_{k\ell}=\left(\frac{1}{1+\eta\gamma}\right)\cdot\left\lbrace
\begin{array}{ll}
C_{k\ell},& \text{if } \ell\in\cN_k \text{ or } k=\ell,\\
C^s_{k\ell},& \text{if } \ell\notin\cN_k ,
\end{array}
\right.
\end{equation}
where the matrix $C^s_{k\ell}$ is of size $M_k\times M_\ell$ with $(m,n)$-th entry given by:
\begin{equation}
{[C^s_{k\ell}]}_{mn}=\left\lbrace
\begin{array}{ll}
{[C_{k\ell}]}_{mn}-\eta,& \text{if }  [C_{k\ell}]_{mn}\geq\eta,\\
0,& \text{if } |[C_{k\ell}]_{mn}|\leq\eta,\\
{[C_{k\ell}]}_{mn}+\eta,& \text{if }  {[C_{k\ell}]}_{mn}\leq-\eta.
\end{array}
\right.
\end{equation}
Since $\cI_\Omega$ is the indicator function of the closed convex set $\Omega$, its proximal operator reduces to the projection onto $\Omega$ defined as:
\begin{equation}
\prox_{\eta\cI_{\Omega}}(\cD)=\Pi_{\Omega}(\cD)=
\left\lbrace
\begin{array}{cl}
\arg\min\limits_\cA&\frac{1}{2}\|\cA-\cD\|^2_{\text{F}}\\
\st&\cA\in\Omega
\end{array}
\right.
\end{equation}
where  the parameter $\eta$ does not appear since the proximal operator is a projection. The set $\Omega$ in~\eqref{eq: set Omega} can be written alternatively as $\Omega=\Omega_1\cap\Omega_2$ where $\Omega_1$ and $\Omega_2$ are two closed convex sets defined as:
\begin{align}
\Omega_1&=\{\cA|\cA\,\cU=\cU,\cA=\cA^\top\},\label{eq: set Omega 1}\\
\Omega_2&=\{\cA|\|\cA-\cP_{\cu}\|\leq 1-\epsilon\}.\label{eq: set Omega 2}
\end{align}
As we shall explain in the following, the projection onto the intersection $\Omega$ can be obtained by properly projecting onto the individual sets $\Omega_1$ and $\Omega_2$ according to:
\begin{equation}
\vspace{-1.5mm}
\label{eq: projection onto omega the intersection}
\Pi_{\Omega}(\cD)=\Pi_{\Omega_2}(\Pi_{\Omega_1}(\cD)).
\end{equation}
The projection onto $\Omega_1$ is given by (see Appendix~\ref{app: projection onto Omega 1}):
\begin{equation}
\vspace{-1.5mm}
\Pi_{\Omega_1}(\cD)=(I-\cP_{\cu})\left(\frac{\cD+\cD^\top}{2}\right)(I-\cP_{\cu})+\cP_{\cu},
\end{equation}
and the projection of the symmetric matrix $\Pi_{\Omega_1}(\cD)$ onto $\Omega_2$ is given by (see Appendix~\ref{app: projection onto Omega 1}):
\begin{equation}
\label{eq: projection onto omega 2}
\Pi_{\Omega_2}(\Pi_{\Omega_1}(\cD))=\cP_{\cu}+\sum_{m=1}^M\beta_mv_mv_m^\top,
\end{equation}
where:
\begin{equation}
\beta_m=\left\lbrace
\begin{array}{ll}
-1+\epsilon, &\text{if } \lambda_m<-1+\epsilon,\\
\lambda_m, &\text{if }  |\lambda_m|<1-\epsilon,\\
1-\epsilon, &\text{if }  \lambda_m>1-\epsilon.
\end{array}
\right.
\end{equation}
where $\{\lambda_m,v_m\}_{m=1}^M$ are the eigenvalues and eigenvectors of the symmetric matrix $(I-\cP_{\cu})\left(\frac{\cD+\cD^\top}{2}\right)(I-\cP_{\cu})$. In order to establish~\eqref{eq: projection onto omega the intersection}, we introduce the following lemma.
\begin{lemma}{\emph{(Characterization of the projection).}}
\label{lemma: one step projection}
If $\Omega_1$ is an affine set, $\Omega_2$ is a closed convex set, and $\Pi_{\Omega_2}(\Pi_{\Omega_1}(\cC))\in\Omega_1$, then $\Pi_{{\Omega_1}\cap{\Omega_2}}(\cC)=\Pi_{\Omega_2}(\Pi_{\Omega_1}(\cC))$.
\end{lemma}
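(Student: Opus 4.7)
The plan is to exploit the fact that $\Omega_1$ is affine, which makes $\Pi_{\Omega_1}$ behave like an orthogonal projection and enables a Pythagorean decomposition of the distance $\|\cdot-\cC\|_{\text{F}}^2$. The hypothesis that $\Pi_{\Omega_2}(\Pi_{\Omega_1}(\cC))$ already lies in $\Omega_1$ is what rules out the classical obstruction to the two-step projection formula (namely, that successive projections generally drift away from the first set when the two sets are not orthogonal).

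First I would set $\cA^\star\triangleq\Pi_{\Omega_2}(\Pi_{\Omega_1}(\cC))$ and verify feasibility: by construction $\cA^\star\in\Omega_2$, and by hypothesis $\cA^\star\in\Omega_1$, so $\cA^\star\in\Omega_1\cap\Omega_2$. It remains to show that $\cA^\star$ is the minimizer of $\|\cA-\cC\|_{\text{F}}^2$ over $\Omega_1\cap\Omega_2$.

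Next I would invoke the affine structure of $\Omega_1$ to establish the Pythagorean identity
\begin{equation*}
\|\cA-\cC\|_{\text{F}}^2=\|\cA-\Pi_{\Omega_1}(\cC)\|_{\text{F}}^2+\|\Pi_{\Omega_1}(\cC)-\cC\|_{\text{F}}^2,\qquad\text{for every }\cA\in\Omega_1.
\end{equation*}
This follows because, for an affine set, the residual $\cC-\Pi_{\Omega_1}(\cC)$ is orthogonal to the direction subspace of $\Omega_1$; hence, for any $\cA\in\Omega_1$, the difference $\cA-\Pi_{\Omega_1}(\cC)$ lies in that subspace and the cross-term in the expansion vanishes. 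This identity converts the problem of minimizing $\|\cA-\cC\|_{\text{F}}^2$ over $\cA\in\Omega_1\cap\Omega_2$ into the equivalent problem of minimizing $\|\cA-\Pi_{\Omega_1}(\cC)\|_{\text{F}}^2$ over the same constraint set, since the second term on the right-hand side is a constant independent of $\cA$.

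Finally I would use a sandwich argument. By definition, $\cA^\star$ minimizes $\|\cA-\Pi_{\Omega_1}(\cC)\|_{\text{F}}^2$ over the larger set $\Omega_2\supseteq \Omega_1\cap\Omega_2$, so
\begin{equation*}
\min_{\cA\in\Omega_2}\|\cA-\Pi_{\Omega_1}(\cC)\|_{\text{F}}^2=\|\cA^\star-\Pi_{\Omega_1}(\cC)\|_{\text{F}}^2\leq\min_{\cA\in\Omega_1\cap\Omega_2}\|\cA-\Pi_{\Omega_1}(\cC)\|_{\text{F}}^2.
\end{equation*}
The reverse inequality is automatic since $\Omega_1\cap\Omega_2\subseteq\Omega_2$, and the upper bound is attained because $\cA^\star$ itself belongs to $\Omega_1\cap\Omega_2$. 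This yields $\cA^\star=\Pi_{\Omega_1\cap\Omega_2}(\cC)$, establishing the claim. The main subtlety in the write-up is the rigorous justification of the orthogonality condition $\langle\cC-\Pi_{\Omega_1}(\cC),\cA-\Pi_{\Omega_1}(\cC)\rangle=0$ for $\cA\in\Omega_1$; this is standard for affine sets but deserves to be stated explicitly so that the Pythagorean identity above is not taken for granted.
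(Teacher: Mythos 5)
Your proof is correct, but it follows a genuinely different route from the paper's. The paper first establishes two variational characterizations of the projection --- $\langle\cC-\cA^o,\cA-\cA^o\rangle\leq 0$ for all $\cA$ in a closed convex set, strengthened to equality when the set is affine --- and then verifies the optimality condition for $\Omega_1\cap\Omega_2$ directly by decomposing $\langle\cC-Z,\cA-Z\rangle$ into three inner products (two vanishing by affineness of $\Omega_1$ and the hypothesis $Z\in\Omega_1$, one nonpositive by the convex characterization on $\Omega_2$). You instead use the affine orthogonality only once, to obtain the Pythagorean identity $\|\cA-\cC\|_{\text{F}}^2=\|\cA-\Pi_{\Omega_1}(\cC)\|_{\text{F}}^2+\text{const}$ on $\Omega_1$, which converts the target problem into projecting $\Pi_{\Omega_1}(\cC)$ onto $\Omega_1\cap\Omega_2$; the sandwich $\min_{\Omega_2}\leq\min_{\Omega_1\cap\Omega_2}\leq\|\cA^\star-\Pi_{\Omega_1}(\cC)\|_{\text{F}}^2=\min_{\Omega_2}$ then finishes the argument using only the definition of a projection as a distance minimizer and the feasibility $\cA^\star\in\Omega_1\cap\Omega_2$. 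Your version is shorter and more elementary in that it never needs the variational-inequality characterization of projections onto general convex sets (the paper's Lemma~3), while the paper's inner-product bookkeeping makes the role of each hypothesis more explicit and is the form that generalizes to obtuse-angle or alternating-projection arguments. Two small polish points: state and prove the orthogonality $\langle\cC-\Pi_{\Omega_1}(\cC),\cA-\Pi_{\Omega_1}(\cC)\rangle=0$ for $\cA\in\Omega_1$ explicitly (as you yourself flag), and tighten the sentence about ``the reverse inequality'' --- the inclusion $\Omega_1\cap\Omega_2\subseteq\Omega_2$ gives the forward inequality you already wrote, and what closes the sandwich is solely the membership $\cA^\star\in\Omega_1\cap\Omega_2$.
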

\begin{proof}
See Appendix~\ref{app: proof of lemma 4}.
\end{proof}
Since the projection onto $\Omega_2$ (given by~\eqref{eq: projection onto omega 2}) changes only the eigenvalues of a matrix without affecting the eigenvectors, we have $\Pi_{\Omega_2}(\Pi_{\Omega_1}(\cC))\in\Omega_1$. We then conclude from Lemma~\ref{lemma: one step projection} that~\eqref{eq: projection onto omega the intersection} holds.

\section{Simulation results}
\label{subsec: theoretical model validation}
\begin{figure}
\begin{center}
\includegraphics[scale=0.4]{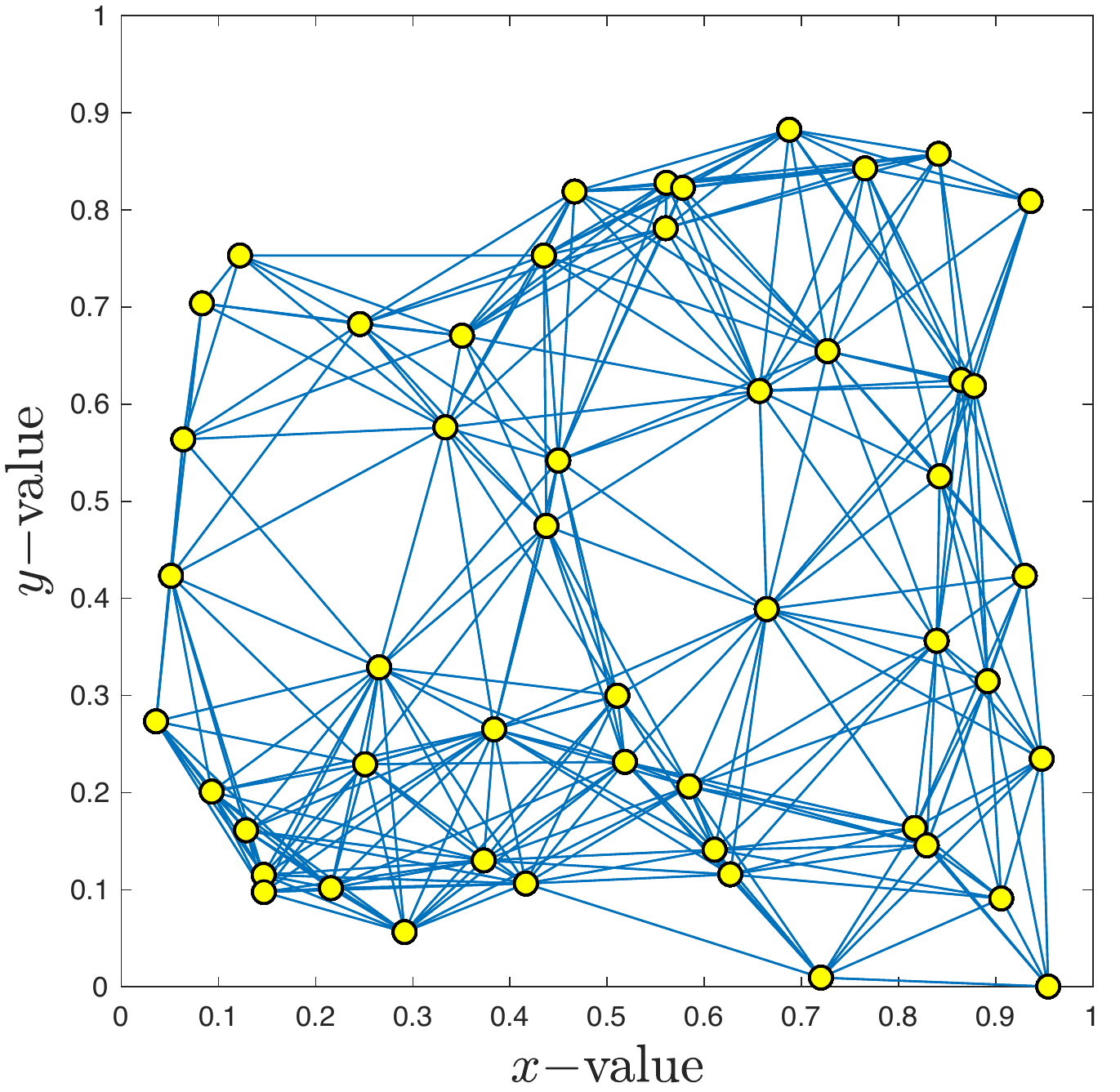}\qquad
\includegraphics[scale=0.4]{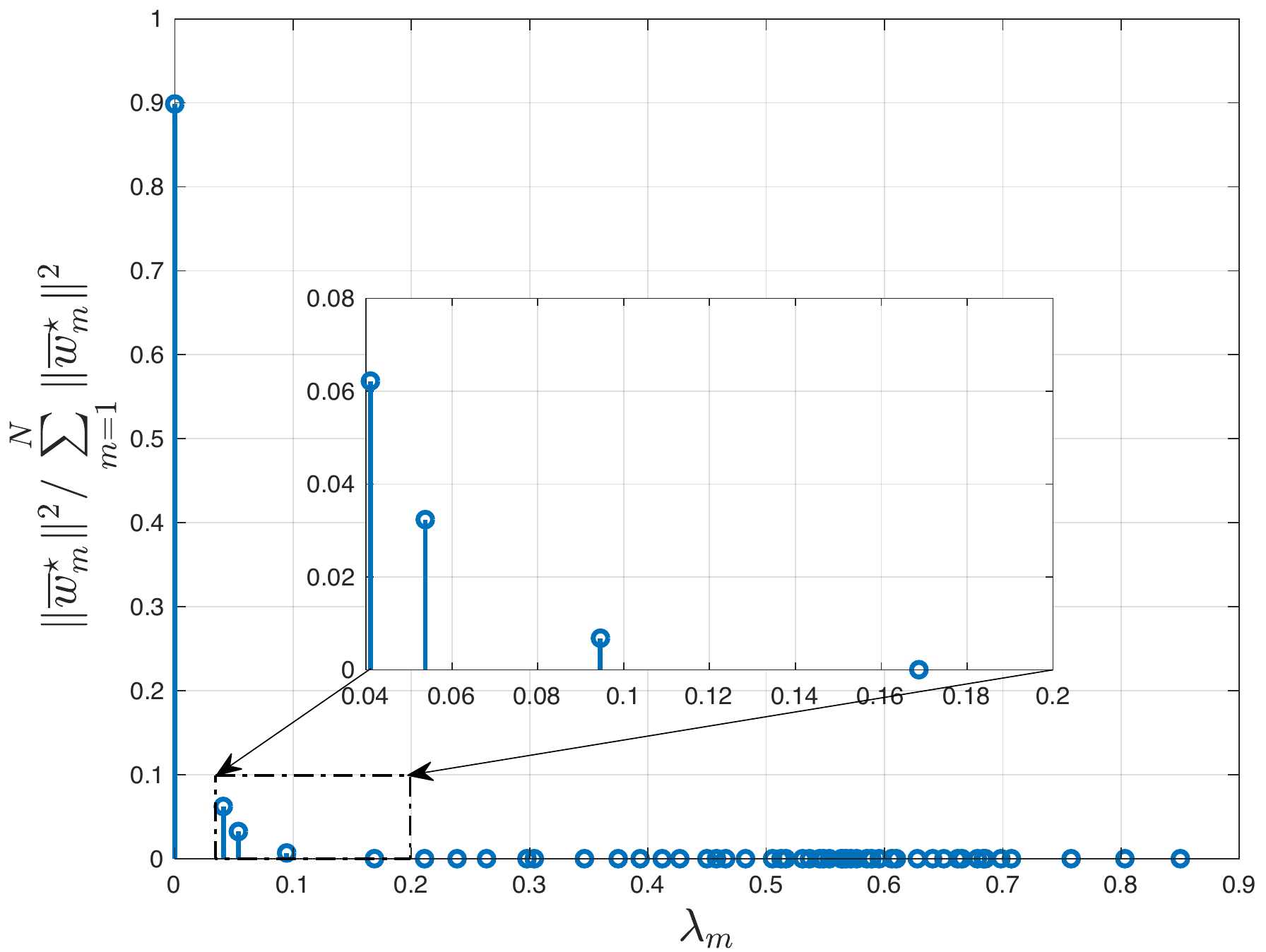}
\caption{Experimental setup. \textit{(Left)} Network topology. \textit{(Right)} Graph spectral content of $\cw^\star$ with $\wb^\star_m=(v_m^\top\otimes I_L)\cw^\star$.}
\label{fig: data settings}
\end{center}
\vspace{-5mm}
\end{figure}
We apply strategy~\eqref{eq: distributed solution} to solve distributed inference under smoothness (described in Remark 4 of Section~II in~\cite{nassif2019adaptation}). We consider a connected mean-square-error (MSE) network of $N=50$ nodes and $M_k=L=5$. The $N$ nodes are placed randomly in the $[0,1]\times[0,1]$ square, and the weighted graph is then constructed according to a thresholded Gaussian kernel weighting function based on the distance between nodes. Particularly, the weight $c_{k\ell}$ of edge $(k,\ell)$ connecting nodes $k$ and $\ell$ that are $d_{k\ell}$ apart is:
\begin{equation}
\label{eq: simulation adjacency matrix}
c_{k\ell}=\left\lbrace
\begin{array}{ll}
\exp{\left(-d_{k\ell}^2/(2\sigma^2)\right)}, &\text{if }d_{k\ell}\leq\kappa\\
0,&\text{otherwise}
\end{array}
\right.
\end{equation}
with $\sigma=0.12$ and $\kappa=0.33$. We assume real data case.  Each agent is subjected to streaming data $\{\bd_k(i),\bu_{k,i}\}$ assumed to satisfy a linear regression model~\cite{sayed2014adaptation}:
\begin{equation}
\bd_k(i)=\bu_{k,i}^\top w^\star_k+\bv_{k}(i),\quad k=1,\ldots,N,
\end{equation}
for some unknown $L\times 1$ vector $w^\star_k$ to be estimated with $\bv_k(i)$ denoting a zero-mean measurement noise. For these networks, the risk functions take the form of mean-square-error costs:
\begin{equation}
J_k(w_k)=\frac{1}{2}\expec|\bd_k(i)-\bu_{k,i}^\top w_k|^2,\quad k=1,\ldots,N.
\end{equation}
The processes $\{\bu_{k,i},\bv_{k}(i)\}$ are assumed to be zero-mean Gaussian with: i) $\expec\bu_{k,i}\bu_{\ell,i}^\top=R_{u,k}=\sigma^2_{u,k}I_L$ if $k=\ell$ and zero otherwise; ii) $\expec\bv_{k}(i)\bv_{\ell}(i)=\sigma^2_{v,k}$ if $k=\ell$ and zero otherwise; and iii) $\bu_{k,i}$ and $\bv_k(i)$ are independent of each other. The variances $\sigma^2_{u,k}$ and $\sigma^2_{v,k}$ are generated from the uniform distributions $\text{unif}(0.5,2)$ and $\text{unif}(0.2,0.8)$, respectively. Let $\cw^\star=\col\{w_1^\star,\ldots,w^\star_N\}$. The signal $\cw^\star$ is generated by smoothing a signal $\cw_o$ by a diffusion kernel. Particularly, we generate $\cw^\star$ according to $\cw^\star=[(Ve^{-\tau\Lambda}V^\top)\otimes I_L]\cw_o$ with $\tau=30$, $\cw_o$ a randomly generated vector from the Gaussian distribution $\cN(0.1\times\mathds{1}_{NL},I_{NL})$, and $\{V=[v_1,\ldots,v_N],\Lambda=\diag\{\lambda_1,\ldots,\lambda_N\}\}$ are the matrices of eigenvectors and eigenvalues of $L_c=\diag\{C\mathds{1}_N\}-C$ with $[C]_{k\ell}=c_{k\ell}$ given by~\eqref{eq: simulation adjacency matrix}. Figure~\ref{fig: data settings} (right) illustrates the normalized squared $\ell_2$-norm of the spectral component $\wb^\star_m=(v_m^\top\otimes I_L)\cw^\star$. It can be observed that the signal is mainly localized in $[0,0.1]$. Note that, for MSE networks, it holds that $H_k(w_k)=R_{u,k}$ $\forall w_k$. Furthermore, the gradient noise process~\eqref{eq: gradient noise process} is given by:
\begin{equation}
\label{eq: MSE network gradient noise}
\bs_{k,i}(\bw_{k})=(\bu_{k,i}^{\top}\bu_{k,i}-R_{u,k})(w^o_k-\bw_{k})+\bu_{k,i}^\top\bv_{k}(i),
\end{equation}
with covariance $R^o_{k}$ given by:
\begin{align}
R^o_{k}&=\expec[(\bu_{k,i}^{\top}\bu_{k,i}-R_{u,k})W_{k}(\bu_{k,i}^{\top}\bu_{k,i}-R_{u,k})]+\sigma^2_{v,k}R_{u,k}\notag\\
&=R_{u,k}W_{k}R_{u,k}+R_{u,k}\tr(R_{u,k}W_{k})+\sigma^2_{v,k}R_{u,k}\label{eq: covariance noise MSE 1}
\end{align}
where $W_{k}=(w^\star_k-w^o_{k})(w^\star_k-w^o_{k})^\top$, and where we used the fact that $\expec[\bu_{k,i}^{\top}\bu_{k,i}W_{k}\bu_{k,i}^{\top}\bu_{k,i}]=2R_{u,k}W_{k}R_{u,k}+R_{u,k}\tr(R_{u,k}W_{k})$ since the regressors are zero-mean real Gaussian~\cite{isserlis1918formula}.

\begin{figure*}
\begin{center}
\includegraphics[scale=0.4]{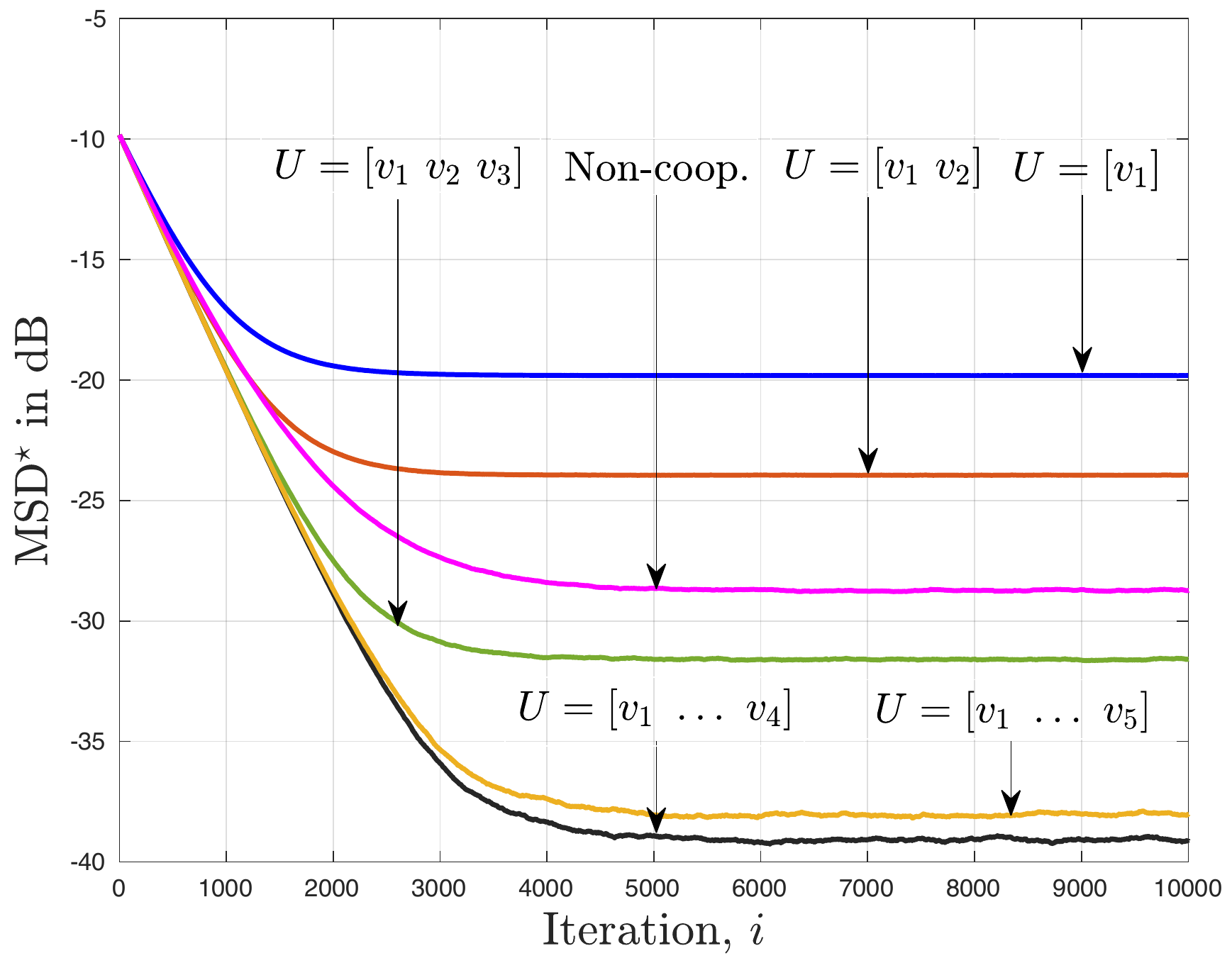}\qquad
\includegraphics[scale=0.4]{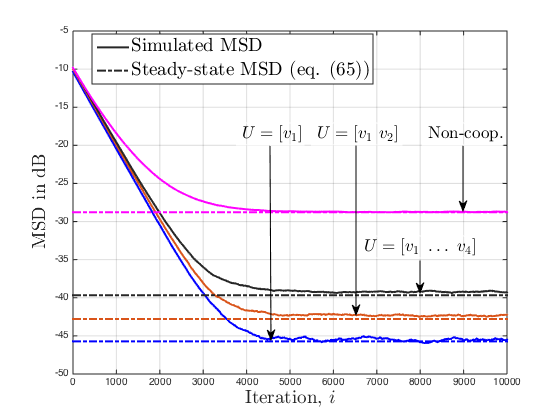}
\caption{Inference under smoothness. Performance of algorithm~\eqref{eq: distributed solution} for $5$ different choices of the matrix $\cU$ in~\eqref{eq: constrained optimization problem} with $\cU=U\otimes I_L$, and non-cooperative strategy. \textit{(Left)} Performance w.r.t. $\cw^\star$. \textit{(Right)} Performance w.r.t. $\cw^o$.}
\label{fig: network performance}
\end{center}
\vspace{-6mm}
\end{figure*}

We run algorithm~\eqref{eq: distributed solution} for 5 different choices of matrix $\cU$ in~\eqref{eq: constrained optimization problem} with $\cU=U\otimes I_L$: i) matrix $U$ chosen as the first eigenvector of the Laplacian $U=[v_1]=\frac{1}{\sqrt{N}}\mathds{1}_N$; ii) matrix $U$ chosen as the first two eigenvectors of the Laplacian $U=[v_1~v_2]$;  iii) $U=[v_1~v_2~v_3]$; iv) $U=[v_1~\ldots~v_4]$; v) $U=[v_1~\ldots~v_5]$.  Since $\cU=U\otimes I_L$, the matrix $\cA$ is of the form $A\otimes I_L$ with $A=[a_{k\ell}]$ an $N\times N$ matrix. In each case, the combination matrix $A$ is set as the solution of the optimization problem~\eqref{eq: optimization problem A} ($\epsilon=0.01,\gamma=0$, $\vertiii{A_{k\ell}}_1=|a_{k\ell}|$), which is solved by the Douglas-Rachford algorithm~\eqref{eq: douglas rachford algorithm} with $\eta=0.003$. Note that, for the $5$ different choices of $U$, the distributed implementation is feasible and the steady-state value of the cost in~\eqref{eq: optimization problem A} is zero. We set $\mu=0.001$. We report the network $\text{MSD}^\star$ learning curves $\frac{1}{N}\expec\|\cw^\star-\bcw_i\|^2$ in Fig.~\ref{fig: network performance} (left). The results are averaged over $200$ Monte-Carlo runs. The learning curve of the non-cooperative solution, obtained from~\eqref{eq: distributed solution} by setting $\cA=I_{LN}$, is also reported. The results show that the best performance is obtained when $\cU=[v_1~v_2~v_3~v_4]\otimes I_L$. This is due to the fact that the columns of $\cU$ constitute a basis spanning the useful signal subspace (see Fig.~\ref{fig: data settings} (right)). As a consequence, a strong noise reduction may be obtained by projecting onto this subspace compared with the non-cooperative strategy where each agent estimates $w^\star_k$ without any cooperation. By forcing consensus (i.e., by choosing $U=[v_1]$), the resulting estimate $\bw_{k,i}$ will be biased with respect to $w^\star_k$, which is not common across agents. The performance obtained when $U=[v_1~\ldots~v_5]$ is worse than the case where $U=[v_1~\ldots~v_4]$ due to a smaller noise reduction.

\begin{table}
\caption{Performance of strategy~\eqref{eq: distributed solution} w.r.t. $\cw^o$ in~\eqref{eq: constrained optimization problem} for $2$ different choices of $\mu$.}
\begin{center}
\begin{tabular}{|c||c||c|c|c|}
\hline\hline
\cellcolor{gray!60}& \cellcolor{gray!60}&  \multicolumn{3}{|c|}{ \cellcolor{gray!60}MSD}  \\ 
\cline{3-5}
\multirow{ -2}{*}{\cellcolor{gray!60}Step-size $\mu$}&\multirow{ -2}{*}{\cellcolor{gray!60}Solution} &  \cellcolor{gray!15}Exp.~\eqref{eq: final result for MSD}&  \cellcolor{gray!15}Exp.~\eqref{eq: network error variance 100}& \cellcolor{gray!15}Simulation  \\ \hline\hline
\cellcolor{gray!60}&\cellcolor{gray!15} Centralized &$-29.66$dB &$-29.74$dB&$-29.603$dB \\
\cline{2-5}
\multirow{ -2}{*}{\cellcolor{gray!60}$10^{-2}$} &\cellcolor{gray!15}Distributed&$-29.66$dB&$-27.6$dB&$-27.298$dB\\ \hline
\cellcolor{gray!60}& \cellcolor{gray!15}Centralized &$-39.66$dB &$-39.67$dB&$-39.691$dB \\
\cline{2-5}
\multirow{- 2}{*}{\cellcolor{gray!60}$10^{-3}$} &\cellcolor{gray!15}Distributed&$-39.66$dB&$-39.26$dB&$-39.196$dB \\ \hline
\cellcolor{gray!60} & \cellcolor{gray!15}Centralized &$-49.66$dB &$-50.19$dB&$-49.772$dB \\
\cline{2-5}
\multirow{- 2}{*}{\cellcolor{gray!60}$10^{-4}$} &\cellcolor{gray!15}Distributed&$-49.66$dB&$-50.14$dB&$-49.335$dB\\ \hline\hline
\end{tabular}
\end{center}
\label{table: MSD}
\vspace{-5mm}
\end{table}
Finally, we illustrate Theorem~\ref{theo: Network limiting performance} in Table~\ref{table: MSD} by reporting the steady-state $\text{MSD}=\limsup_{i\rightarrow\infty}\frac{1}{N}\expec\|\cw^o-\bcw_i\|^2$ when $\cU=[v_1~\ldots~v_4]\otimes I_L$ for $3$ different values of the step-size $\mu=\{10^{-2},10^{-3},10^{-4}\}$. A closed form solution for $\cw^o$ in~\eqref{eq: constrained optimization problem} exists and is given by:
\begin{equation}
\cw^o=\cU(\cU^\top\cH\,\cU)^{-1}\cU^\top\cH\cw^\star,
\end{equation}
where $\cH=\diag\{R_{u,k}\}_{k=1}^N$. We observe that, in the small adaptation regime, i.e., when $\mu\rightarrow 0$, the network $\text{MSD}$ increases approximately $10$dB per decade (when $\mu$ goes from $\mu_1$ to $10\mu_1$). This means that the steady-state $\text{MSD}$ is on the order of $\mu$. We also observe that, in the small adaptation regime, the distributed solution is able to attain the same performance as the centralized one. Finally, note that, for relatively large step-size ($\mu=10^{-2}$), expression~\eqref{eq: network error variance 100} provides better results than~\eqref{eq: final result for MSD} in the distributed case. This is due to neglecting the $O(1)$ term in~\eqref{eq: order of inverse of I-cFb} which is multiplied by $O(\mu^2)$ (since $\cY=O(\mu^2)$) when replaced in~\eqref{eq: network error variance 100}.

 \section{Conclusion}
 \label{sec: conclusion}
In this paper, and its accompanying Part I~\cite{nassif2019adaptation}, we considered inference problems over networks where agents have individual parameter vectors to estimate subject to subspace constraints that require the parameters across the network to lie in low-dimensional subspaces. Based on the gradient projection algorithm, we proposed an iterative and distributed implementation of the projection step, which runs in parallel with the stochastic gradient descent update. We showed that, for small step-size parameter, the network is able to approach the minimizer of the constrained problem to arbitrarily good accuracy levels. Furthermore, we derived a closed-form expressions for the steady-state mean-square-error (MSE) performance. These expressions revealed explicitly the influence of the gradient noise, data characteristics, and subspace constraints, on the network performance. Finally, among many possible convex formulations, we considered the design of feasible distributed solution that minimizes the number of edges to be added to the original graph.

\begin{appendices}

\section{Proof of Theorem~\ref{theo: performance error}}
\label{app: Size of the approximation error}
To simplify the notation, we introduce the difference vector 
$\bz_i\triangleq\bcwt_i^e-\bcwt^{e'}_i.$
Using~\eqref{eq: definition of bcHt} in the expression for $\bcB_{i-1}$ in~\eqref{eq: matrix Bi-1 1}, we can write:
\begin{equation}
\label{eq: B i-1 and B}
\bcB_{i-1}=\cB+\mu\cA^e\bcHt_{i-1},
\end{equation}
in terms of the constant coefficient matrix $\cB$ in~\eqref{eq: constant B}.  Using~\eqref{eq: B i-1 and B} and~\eqref{eq: definition of the perturbation}, and subtracting~\eqref{eq: error recursion 1}  and~\eqref{eq: long term error model 2}, we then get:
\begin{equation}
\label{eq: error vector z}
\bz_i=\cB\bz_{i-1}+\mu\cA^e\bc_{i-1}.
\end{equation}
If we multiply both sides of~\eqref{eq: error vector z} from the left by $(\cV^e_{\epsilon})^{-1}$ we obtain:
\begin{equation}
\label{eq: error vector z 1}
\left[\begin{array}{c}
\overline{\bz}_{i}\\
\widecheck{\bz}_{i}
\end{array}\right]=\cBb\left[\begin{array}{c}
\overline{\bz}_{i-1}\\
\widecheck{\bz}_{i-1}
\end{array}\right]+\left[\begin{array}{c}
\overline{\bc}_{i-1}\\
\widecheck{\bc}_{i-1}
\end{array}\right]
\end{equation}
where we partitioned the vectors $(\cV^e_{\epsilon})^{-1}\bz_i$ and $\mu\Lambda_{\epsilon}^e(\cV^e_{\epsilon})^{-1}\bc_{i-1}$ into:
\begin{equation}
\label{eq: mean error recursion 6} 
(\cV^e_{\epsilon})^{-1}\bz_i\triangleq\left[\begin{array}{c}
\overline{\bz}_{i}\\
\widecheck{\bz}_{i}
\end{array}\right],\quad
\mu\Lambda_{\epsilon}^e(\cV^e_{\epsilon})^{-1}\bc_{i-1}\triangleq\left[\begin{array}{c}
\overline{\bc}_{i-1}\\
\widecheck{\bc}_{i-1}
\end{array}\right]
\end{equation}
with the leading vectors, $\{\overline{\bz}_{i},\overline{\bc}_{i-1}\}$, having dimensions $hP\times 1$ each. The matrix $\cBb$ is given by:
\begin{align}
\cBb&\triangleq(\cV_{\epsilon}^{e})^{-1}\cB\cV_{\epsilon}^e\notag\\
&\hspace{-4.5mm}\overset{\eqref{eq: constant B},\eqref{eq: jordan decomposition of A e}}{=}\Lambda^e_{\epsilon}-\mu\Lambda^e_{\epsilon}(\cV_{\epsilon}^{e})^{-1}\cH^o\cV_{\epsilon}^e\notag\\
&=\left[\begin{array}{cc}
I_{hP}-\cD_{11}&-\cD_{12}\\
-\cD_{21}&\cJ_{\epsilon}^e-\cD_{22}
\end{array}\right]\label{eq: cBb definition}
\end{align}
with the blocks $\{\cD_{mn}\}$ given by:
\begin{align}
\cD_{11}&\triangleq\mu\,(\cU^e)^*\cH^o\cU^e,\qquad\cD_{12}\triangleq\mu\,(\cU^e)^*\cH^o\cV_{R,\epsilon}^e,\label{eq: definition cD 11}\\
\cD_{21}&\triangleq\mu\cJ_{\epsilon}^e(\cV^e_{L,\epsilon})^*\cH^o\cU^e,~\cD_{22}\triangleq\mu\cJ_{\epsilon}^e(\cV^e_{L,\epsilon})^*\cH^o\cV_{R,\epsilon}^e.\label{eq: definition cD 22}
\end{align}
Recursion~\eqref{eq: error vector z 1}  has a form similar to the earlier recursion~(66) in Part~I~\cite{nassif2019adaptation} with three differences. First, the matrices $\{\cD_{mn}\}$ in~\eqref{eq: mean error recursion 6} are constant matrices; nevertheless, they satisfy the same bounds as the matrices $\{\bcD_{mn,i-1}\}$ in~eq.~(66) in Part~I~\cite{nassif2019adaptation}. In particular, from~(115),~(116), and~(122) in Part~I~\cite{nassif2019adaptation}, it continues to hold that:
\begin{align}
&\|I_{hP}-\cD_{11}\|\leq1-\mu\sigma_{11},\quad\|\cD_{12}\|\leq\mu\sigma_{12},\\
&\|\cD_{21}\|\leq\mu\sigma_{21},~~\quad\qquad\qquad\|\cD_{22}\|\leq\mu\sigma_{22},
\end{align}
for some positive constants $\sigma_{11},\sigma_{12},\sigma_{21},\sigma_{22}$ that are independent of $\mu$. Second, Third, the bias term $\widecheck{b}^e$ in~(66) in Part~I~\cite{nassif2019adaptation}  is absent from~\eqref{eq: mean error recursion 6}. Third, the gradient noise terms that appeared in recursion~(66) in Part~I~\cite{nassif2019adaptation} are now replaced by the perturbation sequences $\{\overline{\bc}_{i-1},\widecheck{\bc}_{i-1}\}$. However, these sequences can be bounded as follows:
\begin{equation}
\|\overline{\bc}_{i-1}\|^2\leq \mu^2r^2\|\bcwt^e_{i-1}\|^4,\quad\|\widecheck{\bc}_{i-1}\|^2\leq \mu^2r^2\|\bcwt^e_{i-1}\|^4,
\end{equation}
for some constant $r$ that is independent of $\mu$ since:
\begin{equation}
\label{equation 110}
\|\mu\Lambda_{\epsilon}^e(\cV^e_{\epsilon})^{-1}\bc_{i-1}\|^2\leq \mu^2r^2\|\bcwt^e_{i-1}\|^4. 
\end{equation}
To establish the above inequality, we start by noting that any cost $J_k(\cdot)$ satisfying~\eqref{eq: convexity condition} and~\eqref{eq: smoothness of the individual Hessian assumption} will also satisfy~\cite[Lemmas~E.4, E.8]{sayed2014adaptation}:
\begin{equation}
\label{eq: globally Lipshitz Hessian}
\|\nabla_{w_k}^2J_k(w^o_k+\Delta w_k)-\nabla_{w_k}^2J_k(w^o_k)\|\leq\kappa'_d\|\Delta w_k\|,
\end{equation}
for any $\Delta w_k$ and where $\kappa'_d\triangleq\max\{\kappa_d,(\delta_{k}-\nu_{k})/(h\cred{\epsilon})\}$. Then, for each agent $k$ we have:
\begin{align}
\label{eq: intermediate equation 1}
\|H_k^o-\bH_{k,i-1}\|&\overset{\eqref{eq: Hki-1},\eqref{eq: H star k}}{\leq}\int_{0}^1\|\nabla_{w_k}^2J_k(w^o_k)-\nabla_{w_k}^2J_k(w^o_k-t\bwt_{k,i-1})\|dt\notag\\
&\overset{\eqref{eq: globally Lipshitz Hessian}}{\leq}\int_{0}^1\kappa'_d\|t\bwt_{k,i-1}\|dt=\frac{1}{2}\kappa'_d\|\bwt_{k,i-1}\|
\end{align}
Therefore,
\begin{equation}
\label{eq: intermediate equation 3}
\begin{split}
\|\bcHt_{i-1}\|&\overset{\eqref{eq: definition of bcHt}}=\max_{1\leq k\leq N}\|H_k^o-\bH_{k,i-1}\|\\
&\leq\frac{1}{2}\kappa'_d\left(\max_{1\leq k\leq N}\|\bwt_{k,i-1}\|\right)\leq\frac{1}{2}\kappa'_d\|\bcwt_{i-1}^e\|.
\end{split}
\end{equation}
Now, replacing $\bc_{i-1}$ in~\eqref{equation 110} by~\eqref{eq: definition of the perturbation} and using~\eqref{eq: intermediate equation 3} we conclude~\eqref{equation 110}.

Repeating the argument that led to inequalities~(129) and (130) in Part~I~\cite{nassif2019adaptation} we obtain:
\begin{equation}
\label{eq: squared norm of b z c i}
\begin{split}
\expec\|\overline{\bz}_{i}\|^2\leq&(1-\mu\sigma_{11})\expec\|\overline{\bz}_{i-1}\|^2+\frac{2\mu\sigma^2_{12}}{\sigma_{11}}\expec\|\widecheck{\bz}_{i-1}\|^2+\frac{2\mu r^2}{\sigma_{11}}\expec\|\bcwt^e_{i-1}\|^4
\end{split}
\end{equation}
and
\begin{equation}
\label{eq: squared norm of b z r i}
\begin{split}
\expec\|\widecheck{\bz}_{i}\|^2\leq\left(\rho(\cJ_{\epsilon})+\epsilon+\frac{3\mu^2\sigma_{22}^2}{1-\rho(\cJ_{\epsilon})-\epsilon}\right)\expec\|\widecheck{\bz}_{i-1}\|^2+\frac{3\mu^2\sigma^2_{21}}{1-\rho(\cJ_{\epsilon})-\epsilon}\expec\|\overline{\bz}_{i-1}\|^2+\frac{3\mu^2 r^2}{1-\rho(\cJ_{\epsilon})-\epsilon}\expec\|\bcwt^e_{i-1}\|^4.
\end{split}
\end{equation}
We can combine~\eqref{eq: squared norm of b z c i} and~\eqref{eq: squared norm of b z r i} into a single inequality recursion as follows:
\begin{equation}
\left[\begin{array}{c}
\expec\|\overline{\bz}_{i}\|^2\\
\expec\|\widecheck{\bz}_{i}\|^2
\end{array}\right]\preceq\underbrace{\left[
\begin{array}{cc}
a&b\\
c&d
\end{array}
\right]}_{\Gamma}\left[\begin{array}{c}
\expec\|\overline{\bz}_{i-1}\|^2\\
\expec\|\widecheck{\bz}_{i-1}\|^2
\end{array}\right]+
\left[\begin{array}{c}
e\\
f
\end{array}\right]\expec\|\bcwt^e_{i-1}\|^4.
\end{equation}
where $a=1-O(\mu)$, $b=O(\mu)$,  $c=O(\mu^2)$, $d=\rho(\cJ_{\epsilon})+\epsilon+O(\mu^2)$, $e=O(\mu)$, and $f=O(\mu^2)$. 
Using~\eqref{eq: fourth order moment result} and eq.~(134) in Part~I~\cite{nassif2019adaptation} we conclude that:
\begin{equation}
\limsup_{i\rightarrow\infty}\expec\|\overline{\bz}_{i}\|^2=O(\mu^2),\quad\limsup_{i\rightarrow\infty}\expec\|\widecheck{\bz}_{i}\|^2=O(\mu^4),
\end{equation}
and, hence, 
$\limsup_{i\rightarrow\infty}\expec\|\bz_{i}\|^2=O(\mu^2).$
It follows that 
$\limsup_{i\rightarrow \infty}\expec\|\bcwt_i^e-\bcwt^{e'}_i\|^2=O(\mu^2),$ 
which establishes~\eqref{eq: performance error result 1}. Finally, note that:
\begin{align}
\expec\|\bcwt^{e'}_i\|^2=\expec\|\bcwt^{e'}_i-\bcwt_i^e+\bcwt_i^e\|^2&\leq\expec\|\bcwt^{e'}_i-\bcwt_i^e\|^2+\expec\|\bcwt_i^e\|^2+2|\expec(\bcwt^{e'}_i-\bcwt_i^e)^*\bcwt_i^e|\notag\\
&\leq\expec\|\bcwt^{e'}_i-\bcwt_i^e\|^2+\expec\|\bcwt_i^e\|^2+2\sqrt{\expec\|\bcwt^{e'}_i-\bcwt_i^e\|^2\expec\|\bcwt_i^e\|^2}
\end{align}
where we used $|\expec\bx|\leq\expec|\bx|$ from Jensen's inequality and where we applied Holder's inequality:
\begin{equation}
\expec|\bx^*\by|\leq(\expec|\bx|^p)^{\frac{1}{p}}(\expec|\by|^q)^{\frac{1}{q}}, \quad\text{when }1/p+1/q=1.
\end{equation}
Hence, from~\eqref{eq: mean-square error convergence result} and~\eqref{eq: performance error result 1} we get:
\begin{equation}
\label{eq: performance error result 3}
\limsup_{i\rightarrow \infty}(\expec\|\bcwt^{e'}_i\|^2-\expec\|\bcwt_i^e\|^2)\leq O(\mu^2)+O(\mu^{3/2})=O(\mu^{3/2})
\end{equation}
since $\mu^2<\mu^{3/2}$ for small $\mu\ll 1$, which establishes~\eqref{eq: performance error result 2}.

From~\eqref{eq: mean-square error convergence result} and~\eqref{eq: performance error result 3}, it follows that:
\begin{equation}
\limsup_{i\rightarrow \infty}\expec\|\bcwt^{e'}_i\|^2= O(\mu),
\end{equation}
and, therefore, the long-term approximate model~\eqref{eq: long term error model 2} is also mean-square stable.

\section{Low-rank approximation}
\label{app: low-rank approximation}
From~\eqref{eq: cBb definition}, we obtain:
\begin{equation}
\label{eq: definition of Btop}
\cB^\top=\left((\cV_{\epsilon}^e)^\top\right)^{-1}\left[\begin{array}{cc}
I_{hP}-\cD_{11}^\top&-\cD_{21}^\top\\
-\cD_{12}^\top&(\cJ^e_{\epsilon})^\top-\cD_{22}^\top
\end{array}\right](\cV_{\epsilon}^e)^\top
\end{equation}
\begin{equation}
\label{eq: definition of Bstar}
\cB^*=\left((\cV_{\epsilon}^e)^*\right)^{-1}\left[\begin{array}{cc}
I_{hP}-\cD_{11}^*&-\cD_{21}^*\\
-\cD_{12}^*&(\cJ^e_{\epsilon})^*-\cD_{22}^*
\end{array}\right](\cV_{\epsilon}^e)^*
\end{equation}
where the block matrices $\{\cD^\top_{mn},\cD^*_{mn}\}$ are all on the order of $\mu$ with:
\begin{align}
\cD^\top_{11}&=\mu\,(\cU^e)^\top(\cH^o)^\top[(\cU^e)^*]^\top=O(\mu),\label{eq: definition of D11 transpose}\\
\cD^*_{11}&=\cD_{11}=\mu\,(\cU^e)^*\cH^o\cU^e=O(\mu),\label{eq: definition of D11 transpose}
\end{align}
of dimensions $hP\times hP$. Substituting~\eqref{eq: definition of Btop} and~\eqref{eq: definition of Bstar} into~\eqref{eq: definition of the matrix cF} and using property $(A\otimes_b B)(C\otimes_bD)=AC\otimes_b BD,$
for block Kronecker products, we obtain:
\begin{equation}
\cF=\left((\cV^e_{\epsilon})^\top\otimes_b(\cV^e_{\epsilon})^*\right)^{-1}\cX\left((\cV^e_{\epsilon})^\top\otimes_b(\cV^e_{\epsilon})^*\right),
\end{equation}
 where we introduced:
\begin{equation}
\begin{split}
\cX\triangleq&\left[\begin{array}{cc}
I_{hP}-\cD_{11}^\top&-\cD_{21}^\top\\
-\cD_{12}^\top&(\cJ^e_{\epsilon})^\top-\cD_{22}^\top
\end{array}\right]\otimes_b\left[\begin{array}{cc}
I_{hP}-\cD_{11}^*&-\cD_{21}^*\\
-\cD_{12}^*&(\cJ^e_{\epsilon})^*-\cD_{22}^*
\end{array}\right].
\end{split}
\end{equation}
We partition $\cX$ into the following block structure:
\begin{equation}
\cX=\left[\begin{array}{cc}
\cX_{11}&\cX_{12}\\
\cX_{21}&\cX_{22}
\end{array}\right]
\end{equation}
where, for example, $\cX_{11}$ is $(hP)^2\times (hP)^2$ and is given by:
\begin{equation}
\cX_{11}\triangleq\left(I_{hP}-\cD_{11}^\top\right)\otimes\left(I_{hP}-\cD_{11}^*\right).
\end{equation}
Since
\begin{equation}
\label{eq: first equation for I-cF inverse}
(I-\cF)^{-1}\hspace{-1mm}=\left((\cV^e_{\epsilon})^\top\otimes_b(\cV^e_{\epsilon})^*\right)^{-1}\hspace{-1mm}\left(I-\cX\right)^{-1}\hspace{-1mm}\left((\cV^e_{\epsilon})^\top\otimes_b(\cV^e_{\epsilon})^*\right)
\end{equation}
we proceed to evaluate $I-\cF$. It follows that:
\begin{equation}
I-\cX=\left[\begin{array}{cc}
I_{(hP)^2}-\cX_{11}&-\cX_{12}\\
-\cX_{21}&I-\cX_{22}
\end{array}\right]
\end{equation}
and, in a manner similar to the way we assessed the size of the block matrices $\{\bcD_{mn,i-1}\}$ in the proof of Theorem~1 in Part~I~\cite{nassif2019adaptation}, we can verify that:
\begin{equation}
\begin{split}
I-\cX_{11}&=I_{(hP)^2}-\left(I_{hP}-\cD_{11}^\top\right)\otimes\left(I_{hP}-\cD_{11}^*\right)=O(\mu),\\
\cX_{12}&=O(\mu),\quad \cX_{21}=O(\mu),\quad I-\cX_{22}=O(1).
\end{split}
\end{equation}
\cred{The matrix $I-\cX$ is invertible since $I-\cF$ is invertible; this is because $\rho(\cF)=[\rho(\cB)]^2<1$.} Therefore, applying the block matrix inversion formula to $I-\cX$ we get:
\begin{equation}
\begin{split}
\left(I-\cX\right)^{-1}=\left[\begin{array}{cc}
(I_{(hP)^2}-\cX_{11})^{-1}&0\\
0&0
\end{array}\right]+\left[\begin{array}{cc}
\cY_{11}&\cY_{12}\\
\cY_{21}&\Delta^{-1}
\end{array}\right]
\end{split}
\end{equation}
where $\cY_{11}=(I-\cX_{11})^{-1}\cX_{12}\Delta^{-1}\cX_{21}(I-\cX_{11})^{-1}$, $\cY_{12}=(I-\cX_{11})^{-1}\cX_{12}\Delta^{-1}$, $\cY_{21}=\Delta^{-1}\cX_{21}(I-\cX_{11})^{-1}$, and $\Delta=(I-\cX_{22})-\cX_{21}(I-\cX_{11})^{-1}\cX_{12}$. The entries of $(I_{(hP)^2}-\cX_{11})^{-1}$ are $O(1/\mu)$, while the entries in the second matrix on the right-hand side of the above equation are $O(1)$ when the step-size is small. That is, we can write:
\begin{equation}
\left(I-\cX\right)^{-1}=\left[\begin{array}{c|c}
O(1/\mu)&O(1)\\
\hline
O(1)&O(1)
\end{array}\right].
\end{equation}
Moreover, since $O(1/\mu)$ dominates $O(1)$ for sufficiently small $\mu$, we can also write:
\begin{align}
\label{eq: inverse of I-cX for cF}
\left(I-\cX\right)^{-1}&=\left[\begin{array}{cc}
(I_{(hP)^2}-\cX_{11})^{-1}&0\\
0&0
\end{array}\right]+O(1)\notag\\
&=\left[\begin{array}{cc}
\left((I_{(hP)}\otimes\cD_{11}^*)+(\cD_{11}^\top\otimes I_{(hP)})\right)^{-1}&0\\
0&0
\end{array}\right]+O(1)\notag\\
&=\left[\begin{array}{c}
I_{(hP)^2}\\
0
\end{array}\right]Z^{-1}\left[\begin{array}{cc}
I_{(hP)^2}&0
\end{array}\right]+O(1).
\end{align}
Substituting~\eqref{eq: inverse of I-cX for cF} into~\eqref{eq: first equation for I-cF inverse} we arrive at~\eqref{eq: low rank approximation of I-F inverse}. Since $Z=O(\mu)$, we conclude that~\eqref{eq: order of inverse of I-cF} holds. We also conclude that~\eqref{eq: order of inverse of I-cFb} holds since:
\begin{equation}
(I-\cFb)^{-1}
=(I-\cX)^{-1}.
\end{equation}

\section{Proof of Theorem~\ref{theo: Network limiting performance}}
\label{app: proof of theorem 3}
Consider the long-term model~\eqref{eq: long term error model 2}. Conditioning both sides of~\eqref{eq: long term error model 2} on $\bcF_{i-1}$, invoking the conditions on the gradient noise process from Assumption~\ref{assump: gradient noise}, and computing the conditional expectation, we obtain:
\begin{equation}
\label{eq: mean long term error model}
\expec[\bcwt_i^{e'}|\bcF_{i-1}]=\cB\bcwt_{i-1}^{e'}+\mu\cA^e b^e,
\end{equation}
where the term involving $\bs_i^e(\bcw^e_{i-1})$ is eliminated because $\expec[\bs_i^e|\bcF_{i-1}]=0$. Taking expectations again we arrive at:
\begin{equation}
\label{eq: mean error recursion of the long-term error model}
\expec\bcwt_i^{e'}=\cB\left(\expec\bcwt_{i-1}^{e'}\right)+\mu\cA^e b^e.
\end{equation}
 Since recursion~\eqref{eq: long term error model 2} includes a constant driving term $\mu\cA^e b^e$, we introduce the centered variable  
$\by_i\triangleq\bcwt^{e'}_i-\expec\bcwt^{e'}_i.$
Subtracting~\eqref{eq: mean error recursion of the long-term error model} from~\eqref{eq: long term error model 2}, we find that $\by_i$ satisfies the following recursion:
\begin{equation}
\label{eq: recursion for bz i long term}
\by_i=\cB\by_{i-1}-\mu\cA^e\bs_{i}^e(\bcw_{i-1}).
\end{equation}
Although we are interested in evaluating $\limsup_{i\rightarrow\infty}\expec\|\bcwt^{e'}_i\|^2$, we can still rely on $\by_i$ since it holds for $i\gg 1$:
\begin{equation}
\label{eq: relation between zi and w'i}
\expec\|\bz_i\|^2=\expec\|\bcwt^{e'}_i\|^2-\|\expec\bcwt^{e'}_i\|^2=\expec\|\bcwt^{e'}_i\|^2+O(\mu^2),
\end{equation}
where we used the fact that $\limsup_{i\rightarrow\infty}\|\expec\bcwt^{e'}_i\|=O(\mu)$ (see~Appendix~\ref{sec: Stability of first-order error moment}). Therefore, from~\eqref{eq: performance error result 2} and~\eqref{eq: relation between zi and w'i}, we obtain:
\begin{equation}
\label{eq: steady state of wti in terms of the steady state of zi}
\limsup_{i\rightarrow\infty}\frac{1}{{h}N}\expec\|\bcwt_i^e\|^2=\limsup_{i\rightarrow\infty}\frac{1}{{h}N}\expec\|\by_i\|^2+O(\mu^{3/2}).
\end{equation}

Let $\Sigma$ denote an arbitrary Hermitian positive semi-definite matrix that we are free to choose. Equating the squared weighted values of both sides of~\eqref{eq: recursion for bz i long term} and taking expectations conditioned on the past history gives:
\begin{equation}
\expec[\|\by_{i}\|^2_{\Sigma}|\bcF_{i-1}]=\|\by_{i-1}\|^2_{\cB^*\Sigma\cB}+\mu^2\expec[\|\bs^e_{i}\|^2_{(\cA^e)^*\Sigma\cA^e}|\bcF_{i-1}].
\end{equation}
Taking expectations again, we get:
\begin{equation}
\expec\|\by_{i}\|^2_{\Sigma}=\expec\|\by_{i-1}\|^2_{\cB^*\Sigma\cB}+\mu^2\expec\|\bs^e_{i}\|^2_{(\cA^e)^*\Sigma\cA^e}.\label{eq: weighted mean-square error of zi}
\end{equation}
From~\eqref{eq: zero bound on the difference of covariances} and using same arguments as in~\cite[pp.~586]{sayed2014adaptation}, we can rewrite the second term on the R.H.S.  of~\eqref{eq: weighted mean-square error of zi} as:
\begin{equation}
\begin{split}
\mu^2 \expec\|\bs^e_{i}\|^2_{(\cA^e)^*\Sigma\cA^e}&=\mu^2\tr\left(\Sigma\cA^e\expec[\bs^e_{i}\bs_{i}^{e*}](\cA^e)^*\right)\\
&=\tr(\Sigma\cY)+\tr(\Sigma)\cdot O(\mu^{2+\gamma_m}). 
\end{split}
\end{equation}
for $i\gg1$. Therefore, we obtain:
\begin{equation}
\limsup_{i\rightarrow\infty}\expec\|\by_{i}\|^2_{\Sigma-\cB^*\Sigma\cB}=\tr(\Sigma\cY)+\tr(\Sigma)\cdot O(\mu^{2+\gamma_m}).\label{eq: weighted mean-square error of zi 8}
\end{equation}

In order to reduce the weighting matrix on the mean-square value of $\bz_i$ in~\eqref{eq: weighted mean-square error of zi 8} to the identity, we need to select $\Sigma$ as the solution to the following Lyapunov equation:
\begin{equation}
\label{eq: Lyapunov equation}
\Sigma-\cB^*\Sigma\cB=I_{hM}.
\end{equation}
This equation has a unique Hermitian non-negative definite solution $\Sigma$~\cite[pp.~772]{sayed2014adaptation} since the matrix $\cB$ is stable for sufficiently small step-size. Now, by applying the block vectorization operation to both sides of~\eqref{eq: Lyapunov equation} and by using the property that:
\begin{equation}
\bvc(ACB)=(B^\top\otimes_b A)\bvc(C),\label{eq: block kronecker property 7}
\end{equation}
we find that:
\begin{equation}
\label{eq: Lyapunov equation 2}
\bvc(\Sigma)=(I-\cF)^{-1}\bvc(I_{hM})
\end{equation}
in terms of the matrix $\cF$ defined in~\eqref{eq: definition of the matrix cF}.

Now, substituting $\Sigma$ in~\eqref{eq: Lyapunov equation 2} into~\eqref{eq: weighted mean-square error of zi 8}, we obtain $\expec\|\by_i\|^2$ on the left-hand side while the term $\tr(\Sigma\cY)$ on the right-hand side becomes:
\begin{align}
\label{eq: first equetion}
\tr(\Sigma\cY)=(\bvc(\cY^\top))^\top(I-\cF)^{-1}\bvc(I_{hM}).
\end{align}
where we used the property that:
\begin{equation}
\tr( A B)=[\bvc( B^\top)]^\top\bvc( A),\label{eq: block kronecker property 6}
\end{equation}
 Using the fact that $(I-\cF)^{-1}=O(1/\mu)$ and following similar arguments as in~\cite[pp.~590]{sayed2014adaptation}, we can show that:
\begin{equation}
\label{eq: second equetion}
\tr(\Sigma)\cdot O(\mu^{2+\gamma_m})=O(\mu^{1+\gamma_m}).
\end{equation}
Replacing~\eqref{eq: first equetion} and~\eqref{eq: second equetion} into~\eqref{eq: weighted mean-square error of zi 8} gives~\eqref{eq: network error variance}. Observe that the first term on the R.H.S. of~\eqref{eq: network error variance} is $O(\mu)$ since $\|\cY\|=O(\mu^2)$ and $\|(I-\cF)^{-1}\|=O(1/\mu)$. Therefore, this term dominates the factor $O(\mu^{1+\gamma_m})$. 

Since $\cF$ is a stable matrix for sufficiently small step-sizes, we can employ the expansion
$(I-\cF)^{-1}=I+\cF+\cF^2+\cF^3+\ldots,$ 
replace $\cF$ by~\eqref{eq: definition of the matrix cF}, and use properties~\eqref{eq: block kronecker property 6} and~\eqref{eq: block kronecker property 7} to write the first term on the R.H.S. of~\eqref{eq: network error variance} as $\sum_{n=0}^{\infty}\tr(\cB^n\cY(\cB^*)^n)$.

Now, in order to establish~\eqref{eq: final result for MSD}, we shall use the low-rank approximation~\eqref{eq: low rank approximation of I-F inverse}. Using definition~\eqref{eq: MSD definition} and~\eqref{eq: network error variance}, we obtain:
\begin{equation}
\text{MSD}=\frac{\mu}{hN}\lim_{\mu\rightarrow 0}\limsup_{i\rightarrow\infty}\frac{1}{\mu} 
({\bvc}(\cY^\top))^\top(I-\cF)^{-1}{\bvc}(I_{hM})
\label{eq: definition of the network MSD final}
\end{equation}
From~\eqref{eq: low rank approximation of I-F inverse} we get:
\begin{equation}
\label{eq: relation intermediaire finale 0}
\begin{split}
&({\bvc}(\cY^\top))^\top(I-\cF)^{-1}{\bvc}(I_{hM})=({\bvc}(\cY^\top))^\top\left([(\cU^e)^*]^\top\otimes_b\cU^e\right)Z^{-1}\left((\cU^e)^\top\otimes_b(\cU^e)^*\right){\bvc}(I_{hM})+O(\mu^2).
\end{split}
\end{equation}
Using property~\eqref{eq: block kronecker property 7}, it is straightforward to verify that the last three terms combine into the following result:
\begin{equation}
\left((\cU^e)^\top\otimes_b(\cU^e)^*\right){\bvc}(I_{hM})=\bvc(I_{hP})=\vc(I_{hP}),
\end{equation}
Let us therefore evaluate the matrix vector product 
$x\triangleq Z^{-1}\vc(I_{hP}).$
 Using the definition~\eqref{eq: definition of the matrix Z} for $Z$, the vector $x$ is therefore the unique solution to the linear system of equations:
\begin{equation}
\label{eq: relation intermediaire finale}
(I_{hP}\otimes\cD_{11}^*)x+(\cD_{11}^\top\otimes I_{hP})x=\vc(I_{hP}).
\end{equation}
Let $X=\text{unvec}(x)$ denote the $hP\times hP$ matrix whose vector representation is $x$. Applying to each of the terms appearing on the left-hand side of the above expression the Kronecker product property~\eqref{eq: block kronecker property 7}, albeit using $\vc$ instead of $\bvc$ operation, we find that
$(I_{hP}\otimes\cD_{11}^*)x=\vc(\cD_{11}^* X),$ and $
(\cD_{11}^\top\otimes I_{hP})x=\vc( X\cD_{11}).$ 
We conclude from these equalities and from~\eqref{eq: relation intermediaire finale} that $X$ is the unique solution to the (continuous-time) Lyapunov equation $
\cD_{11}^* X+X\cD_{11}=I_{hP}.$ 
Since $\cD_{11}$ in~\eqref{eq: definition cD 11} is Hermitian, we obtain:
\begin{equation}
X=\frac{1}{2}\cD_{11}^{-1}=\frac{1}{2\mu}\left((\cU^e)^*\cH^o\cU^e\right)^{-1}.\label{eq: computation of X}
\end{equation}
Therefore, substituting into~\eqref{eq: relation intermediaire finale 0} gives:
\begin{align}
({\bvc}(\cY^\top))^\top(I-\cF)^{-1}{\bvc}(I_{hM})&=({\bvc}(\cY^\top))^\top\cred{\left([(\cU^e)^*]^\top\otimes_b\cU^e\right)\vc(X)}+O(\mu^2)\notag\\
&\hspace{-1mm}\overset{\eqref{eq: block kronecker property 7}}=({\bvc}(\cY^\top))^\top\bvc(\cU^e X(\cU^e)^*)+O(\mu^2)\notag\\
&\hspace{-1mm}\overset{\eqref{eq: block kronecker property 6}}=\tr(\cU^e X(\cU^e)^*\cY)+O(\mu^2)=\tr((\cU^e)^*\cY\cU^e X)+O(\mu^2)\notag\\
&\hspace{-1mm}\overset{\eqref{eq: definition of the matrix cY}}=\mu^2\tr((\cU^e)^*\cA^e\cS(\cA^e)^*\cU^e X)+O(\mu^2)\notag\\
&=\mu^2\tr((\cU^e)^*\cS\cU^e X)+O(\mu^2)\notag\\
&\hspace{-1mm}\overset{\eqref{eq: computation of X}}=\frac{\mu}{2}\tr\left(\left((\cU^e)^*\cH^o\cU^e\right)^{-1}\left((\cU^e)^*\cS\cU^e\right)\right)+O(\mu^2).
\end{align}
where we used the fact that $(\cU^e)^*\cA^e=(\cU^e)^*$. Now substituting the above expression into the right-hand side of~\eqref{eq: definition of the network MSD final} and computing the limit as $\mu\rightarrow 0$, we arrive at expression~\eqref{eq: final result for MSD}.

\section{Projection onto $\Omega_1$ in~\eqref{eq: set Omega 1} and $\Omega_2$ in~\eqref{eq: set Omega 2}}
\label{app: projection onto Omega 1}
The closed convex set $\Omega_1$ in~\eqref{eq: set Omega 1} can be rewritten alternatively as:
\begin{equation}
\label{eq: set Omega 1 alternative}
\Omega_1=\{\cA|\cA\,\cU=\cU,\cU^\top\cA=\cU^\top,\cA=\cA^\top\},
\end{equation}
and the projection onto it is given by:
\begin{equation}
\label{eq: projection Omega 1 alternative}
\Pi_{\Omega_1}(\cD)=
\left\lbrace
\begin{array}{cl}
\arg\min\limits_\cA&\frac{1}{2}\|\cA-\cD\|^2_{\text{F}}\\
\st&\cA\,\cU=\cU,\cU^\top\cA=\cU^\top,\cA=\cA^\top.
\end{array}
\right.
\end{equation}
The Lagrangian of the convex optimization problem in~\eqref{eq: projection Omega 1 alternative} is defined as:
\begin{equation}
\begin{split}
\cL(\cA;X,Y,W)=\frac{1}{2}\|\cA-\cD\|^2_{\text{F}}+\tr(X^\top(\cA\cU-\cU))+\\
~~\tr(Y^\top(\cU^\top\cA-\cU^\top))+\tr(Z^\top(\cA-\cA^\top)),
\end{split}
\end{equation}
where $X\in\mathbb{R}^{M\times P}$, $Y\in\mathbb{R}^{P\times M}$, and $Z\in\mathbb{R}^{M\times M}$ are the matrices of Lagrange multipliers. From the Karush-Kuhn-Tucker (KKT) conditions, we obtain at the optimum $(\cA^o;X^o,Y^o,Z^o)$:
\begin{align}
\cA^o\cU&=\cU,\label{eq: projection first condition}\\
\cU^\top\cA^o&=\cU^\top,\label{eq: projection second condition}\\
\cA^o&=(\cA^o)^\top,\label{eq: projection third condition}\\
\nabla_{\cA}\cL&=\cA^o-\cD+X^o\cU^\top+\cU Y^o+Z^o-(Z^o)^\top=0. \label{eq: projection fourth condition}
\end{align}
From~\eqref{eq: projection fourth condition}, we obtain:
\begin{equation}
\label{eq: first expression for A o}
\cA^o=\cD-X^o\cU^\top-\cU Y^o-Z^o+(Z^o)^\top.
\end{equation}
Multiplying both sides of~\eqref{eq: first expression for A o} by $\cU$ and using the fact that $\cU^\top\cU=I$ from Assumption~\ref{assump: matrix cU}, we obtain:
\begin{equation}
\label{eq: second expression for A o}
\cA^o\cU=\cD\cU-X^o-\cU Y^o\cU-Z^o\cU+(Z^o)^\top\cU.
\end{equation}
Combining the previous expression with~\eqref{eq: projection first condition}, we get:
\begin{equation}
\label{eq: third expression for A o}
X^o=\cD\cU-\cU Y^o\cU-Z^o\cU+(Z^o)^\top\cU-\cU.
\end{equation}
Replacing~\eqref{eq: third expression for A o} into~\eqref{eq: first expression for A o} and using the fact that $\cP_{\cu}=\cU\cU^\top$, we arrive at:
\begin{equation}
\label{eq: fourth expression for A o}
\begin{split}
\cA^o=&\cD-\cD\cP_{\cu}+\cU Y^o\cP_{\cu}+Z^o\cP_{\cu}-(Z^o)^\top\cP_{\cu}+\cP_{\cu}-\cU Y^o-Z^o+(Z^o)^\top.
\end{split}
\end{equation}
Pre-multiplying both sides of the previous equation by $\cU^\top$ and using the fact that $\cU^\top\cU=I$, we obtain:
\begin{equation}
\label{eq: fifth expression for A o}
\begin{split}
\cU^\top\cA^o=\cU^\top\cD-\cU^\top\cD\cP_{\cu}+Y^o\cP_{\cu}+\cU^\top Z^o\cP_{\cu}-\cU^\top(Z^o)^\top\cP_{\cu}+\cU^\top- Y^o-\cU^\top Z^o+\cU^\top (Z^o)^\top.
\end{split}
\end{equation}
Combining the previous expression with~\eqref{eq: projection second condition}, we arrive at:
\begin{equation}
\label{eq: sixth expression for A o}
\begin{split}
\cU^\top\cD-\cU^\top\cD\cP_{\cu}+Y^o\cP_{\cu}+\cU^\top Z^o\cP_{\cu}-\cU^\top(Z^o)^\top\cP_{\cu}- Y^o-\cU^\top Z^o+\cU^\top (Z^o)^\top=0.
\end{split}
\end{equation}
Pre-multiplying both sides of the previous equation by $\cU$ and using the fact that $\cP_{\cu}=\cU\cU^\top$, we obtain:
\begin{equation}
\begin{split}
\label{eq: seventh expression for A o}
\cU Y^o\cP_{\cu}- \cU Y^o=&-\cP_{\cu}\cD+\cP_{\cu}\cD\cP_{\cu}-\cP_{\cu} Z^o\cP_{\cu}+\cP_{\cu}(Z^o)^\top\cP_{\cu}+\cP_{\cu} Z^o-\cP_{\cu} (Z^o)^\top.
\end{split}
\end{equation}
Replacing~\eqref{eq: seventh expression for A o} into~\eqref{eq: fourth expression for A o}, we arrive at:
\begin{equation}
\label{eq: eight expression for A o}
\begin{split}
\cA^o=&
(I-\cP_{\cu})\cD(I-\cP_{\cu})-(I-\cP_{\cu})(Z^o-(Z^o)^\top)(I-\cP_{\cu})+\cP_{\cu},
\end{split}
\end{equation}
and thus,
\begin{equation}
\label{eq: nine expression for A o}
\cA^{o\top}=(I-\cP_{\cu})\cD^\top(I-\cP_{\cu})+(I-\cP_{\cu})(Z^o-(Z^o)^\top)(I-\cP_{\cu})+\cP_{\cu}
\end{equation}
Combining~\eqref{eq: projection third condition} and the previous two equations, we obtain:
\begin{equation}
\label{eq: ten expression for A o}
(I-\cP_{\cu})\left(\frac{\cD-\cD^\top}{2}\right)(I-\cP_{\cu})\hspace{-1mm}=\hspace{-1mm}(I-\cP_{\cu})(Z^o-(Z^o)^\top)(I-\cP_{\cu})
\end{equation}
Replacing the previous equation into~\eqref{eq: eight expression for A o}, we arrive at:
\begin{equation}
\label{eq: final version of the projection}
\begin{split}
\Pi_{\Omega_1}(\cD)&=(I-\cP_{\cu})\left(\cD-\frac{\cD-\cD^\top}{2}\right)(I-\cP_{\cu})+\cP_{\cu}\\
&=(I-\cP_{\cu})\left(\frac{\cD+\cD^\top}{2}\right)(I-\cP_{\cu})+\cP_{\cu}.
\end{split}
\end{equation}

Now, projecting a symmetric matrix $\cC$ onto $\Omega_2$ in~\eqref{eq: set Omega 2} is given by:
\begin{equation}
\begin{split}
\Pi_{\Omega_2}(\cC)&=
\left\lbrace
\begin{array}{cl}
\arg\min\limits_\cA&\frac{1}{2}\|\cA-\cC\|^2_{\text{F}}\\
\st&\|\cA-\cP_{\cu}\|\leq 1-\epsilon
\end{array}
\right.\\
&=\cP_{\cu}+
\left\lbrace
\begin{array}{cl}
\arg\min\limits_\cY&\frac{1}{2}\|\cY-(\cC-\cP_{\cu})\|^2_{\text{F}}\\
\st&\|\cY\|\leq 1-\epsilon
\end{array}
\right.\\
&=\cP_{\cu}+\Pi_{\Omega_3}(\cC-\cP_{\cu})
\end{split}
\end{equation}
where $\Omega_3\triangleq\{\cA|\|\cA\|\leq1-\epsilon\}$. In order to project the symmetric matrix $\cC-\cP_{\cu}$ onto $\Omega_3$, we need to compute its eigenvalue decomposition $\cC-\cP_{\cu}=\sum_{m=1}^M\lambda_mv_mv_m^\top$ and then threshold the eigenvalues $\{\lambda_m\}_{m=1}^M$ to have absolute magnitude at most $1-\epsilon$~\cite[pp.~191--194]{parikh2014proximal}. Thus we obtain:
\begin{equation}
\Pi_{\Omega_2}(\cC)=\cP_{\cu}+\sum_{m=1}^M\beta_mv_mv_m^\top,
\end{equation}
where:
\begin{equation}
\beta_m=\left\lbrace
\begin{array}{ll}
-1+\epsilon, &\text{if } \lambda_m<-1+\epsilon,\\
\lambda_m, &\text{if }  |\lambda_m|<1-\epsilon,\\
1-\epsilon, &\text{if }  \lambda_m>1-\epsilon.
\end{array}
\right.
\end{equation}
Now, replacing the matrix $\cC$ by~\eqref{eq: final version of the projection}, we obtain~\eqref{eq: projection onto omega 2}.

\section{Proof of Lemma~\ref{lemma: one step projection}}
\label{app: proof of lemma 4}
In order to establish Lemma~\ref{lemma: one step projection}, we first need to introduce Lemmas~\ref{lemma: vi convex} and~\ref{lemma: vi affine}.
\begin{lemma}{}
\label{lemma: vi convex}
Let $\Omega$ denote a closed convex set. 
 For any $\cC\notin\Omega$, $\cA^o=\Pi_{\Omega}(\cC)$ if and only if $\langle\cC-\cA^o,\cA-\cA^o\rangle\leq 0$, $\forall \cA\in\Omega$ where $\langle X,Y\rangle=\emph{\tr}(X^\top Y)$. 
\end{lemma}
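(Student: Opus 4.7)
The plan is to prove this classical variational inequality characterization by exploiting the convexity of the squared Frobenius distance together with the convexity of $\Omega$. I will proceed by establishing the two implications separately.

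\textbf{Forward direction ($\Rightarrow$).} Suppose $\cA^o=\Pi_\Omega(\cC)$. Take an arbitrary $\cA\in\Omega$ and, by convexity of $\Omega$, consider the line segment $\cA^o+t(\cA-\cA^o)\in\Omega$ for $t\in[0,1]$. Since $\cA^o$ minimizes $\tfrac{1}{2}\|\cC-\cdot\|^2_{\text{F}}$ over $\Omega$, expanding the squared Frobenius norm gives
\begin{equation*}
\|\cC-\cA^o\|^2_{\text{F}}\leq \|\cC-\cA^o\|^2_{\text{F}}-2t\langle\cC-\cA^o,\cA-\cA^o\rangle+t^2\|\cA-\cA^o\|^2_{\text{F}}.
\end{equation*}
Dividing both sides by $t>0$ and letting $t\downarrow 0$ yields the desired inequality $\langle\cC-\cA^o,\cA-\cA^o\rangle\leq 0$.

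\textbf{Reverse direction ($\Leftarrow$).} Assume the variational inequality holds and pick any $\cA\in\Omega$. The key step is the orthogonal decomposition
\begin{equation*}
\|\cC-\cA\|^2_{\text{F}}=\|\cC-\cA^o\|^2_{\text{F}}-2\langle\cC-\cA^o,\cA-\cA^o\rangle+\|\cA^o-\cA\|^2_{\text{F}}.
\end{equation*}
Since the middle term is non-negative by hypothesis and the last term is non-negative, we conclude $\|\cC-\cA\|^2_{\text{F}}\geq\|\cC-\cA^o\|^2_{\text{F}}$, which identifies $\cA^o$ with the projection $\Pi_\Omega(\cC)$ (uniqueness following from strict convexity of the Frobenius norm on the convex set $\Omega$).

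\textbf{Main obstacle.} There is no serious obstacle here: this is a textbook characterization, and the only subtlety is the limiting argument on $t$ in the forward direction, which relies on the fact that $\Omega$ contains the whole segment $[\cA^o,\cA]$, i.e.\ on convexity. Since the inner product $\langle X,Y\rangle=\tr(X^\top Y)$ induces the Frobenius norm, no passage to a different topology is needed, and the argument in $\mathbb{R}^{M\times M}$ is identical to the standard one in $\mathbb{R}^n$.
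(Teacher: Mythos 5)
Your proof is correct and follows essentially the same route as the paper's: the forward direction uses the convex combination $\cA^o+t(\cA-\cA^o)\in\Omega$ with the expansion of the squared Frobenius norm and a limiting argument in $t$, and the reverse direction uses the same orthogonal-type decomposition of $\|\cC-\cA\|^2_{\text{F}}$. No gaps.
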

\begin{proof}($\Rightarrow$) Let $\cA^o=\Pi_{\Omega}(\cC)$ for any given $\cC\notin\Omega$, that is, suppose that $\cA^o$ is the unique solution to the optimization problem. Let $\cA\in\Omega$ be such that $\cA\neq\cA^o$. Let $\alpha\in(0,1)$. Since $\Omega$ is convex, $(1-\alpha)\cA^o+\alpha\cA=\cA^o+\alpha(\cA-\cA^o)\in\Omega$. By the assumed optimality of $\cA^o$, we must have:
\begin{equation}
\begin{split}
\|\cC-\cA^o\|^2_{\text{F}}&\leq\|\cC-[\cA^o+\alpha(\cA-\cA^o)]\|^2_\text{F}\\
&=\|\cC-\cA^o\|^2_{\text{F}}+\alpha^2\|\cA-\cA^o\|^2_\text{F}-2\alpha\langle\cC-\cA^o,\cA-\cA^o\rangle,
\end{split}
\end{equation}
and we obtain:
\begin{equation}
\label{eq: proof lemma eq 1}
\langle\cC-\cA^o,\cA-\cA^o\rangle\leq\frac{\alpha}{2}\|\cA-\cA^o\|^2_\text{F}.
\end{equation}
Now, note that~\eqref{eq: proof lemma eq 1} holds for any $\alpha\in(0,1)$. Since the RHS of~\eqref{eq: proof lemma eq 1} can be made arbitrarily small for a given $\cA$, the LHS can not be strictly positive. Thus, we conclude as desired:
\begin{equation}
\label{eq: proof lemma eq 2}
\langle\cC-\cA^o,\cA-\cA^o\rangle\leq 0,~~\forall \cA\in\Omega.
\end{equation}
($\Leftarrow$) Let $\cA^o\in\Omega$ be such that $\langle\cC-\cA^o,\cA-\cA^o\rangle\leq 0,\forall \cA\in\Omega$. We shall show that it must be the optimal solution. Let $\cA\in\Omega$ and $\cA\neq\cA^o$. We have:
\begin{equation}
\begin{split}
\|\cC-\cA\|^2_{\text{F}}-\|\cC-\cA^o\|^2_{\text{F}}&=\|\cC-\cA^o+\cA^o-\cA\|^2_{\text{F}}-\|\cC-\cA^o\|^2_{\text{F}}\\
&=\|\cC-\cA^o\|^2_{\text{F}}+\|\cA^o-\cA\|^2_{\text{F}}-2\langle\cC-\cA^o,\cA-\cA^o\rangle-\|\cC-\cA^o\|^2_{\text{F}}>0.
\end{split}
\end{equation}
Hence, $\cA^o$ is the optimal solution to the optimization problem, and thus $\cA^o=\Pi_{\Omega}(\cC)$ by definition.
\end{proof}

\begin{lemma}{}
\label{lemma: vi affine}
If $\Omega$ is further affine,
 then, for any $\cC\notin\Omega$, $\cA^o=\Pi_{\Omega}(\cC)$ if and only if $\langle\cC-\cA^o,\cA-\cA^o\rangle= 0$, $\forall \cA\in\Omega$.
\end{lemma}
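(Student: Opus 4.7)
The plan is to leverage Lemma~\ref{lemma: vi convex}, which already supplies the inequality characterization for projections onto a closed convex set, and then use the additional affine structure of $\Omega$ to upgrade the inequality to an equality.

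For the forward direction ($\Rightarrow$), I would first invoke Lemma~\ref{lemma: vi convex} to conclude that $\langle \cC - \cA^o, \cA - \cA^o \rangle \leq 0$ for every $\cA \in \Omega$ (this applies since any affine set is convex). The key step is then to exploit the affineness of $\Omega$: for any $\cA \in \Omega$, the reflected point $\cA'\triangleq 2\cA^o - \cA$ also lies in $\Omega$, because an affine set is closed under affine combinations whose coefficients sum to one, and $\cA' = 2\cA^o + (-1)\cA$ with $2+(-1)=1$ using the two points $\cA^o,\cA\in\Omega$. Substituting $\cA'$ in place of $\cA$ in the inequality yields
\begin{equation}
\langle \cC - \cA^o, \cA' - \cA^o \rangle = -\langle \cC - \cA^o, \cA - \cA^o \rangle \leq 0,
\end{equation}
i.e., $\langle \cC - \cA^o, \cA - \cA^o \rangle \geq 0$. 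Combining with the original inequality produces the desired equality.

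For the reverse direction ($\Leftarrow$), if $\langle \cC - \cA^o, \cA - \cA^o \rangle = 0$ holds for every $\cA \in \Omega$, then in particular the nonpositivity condition of Lemma~\ref{lemma: vi convex} is satisfied. Since an affine set is closed and convex, that lemma directly gives $\cA^o = \Pi_{\Omega}(\cC)$.

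The proof is essentially a one-line consequence of the affine closure property combined with Lemma~\ref{lemma: vi convex}, so there is no serious obstacle. The only subtlety worth emphasizing is the validity of the reflection $2\cA^o - \cA \in \Omega$, which is a basic property of affine hulls and is what distinguishes the affine setting from the merely convex one (where only convex combinations with coefficients in $[0,1]$ are guaranteed to stay inside the set).
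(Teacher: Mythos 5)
Your proof is correct, and it takes a somewhat more economical route than the paper. For the forward direction the paper re-runs the whole perturbation argument from scratch: it forms $\cA^o+\alpha(\cA-\cA^o)\in\Omega$ for arbitrary $\alpha\in\mathbb{R}$ (not just $\alpha\in(0,1)$), and treats the cases $\alpha\geq 0$ and $\alpha\leq 0$ separately, letting $\alpha\to 0^{\pm}$ to obtain $\langle\cC-\cA^o,\cA-\cA^o\rangle\leq 0$ and $\geq 0$ respectively. You instead invoke Lemma~\ref{lemma: vi convex} once to get the $\leq 0$ inequality for every point of $\Omega$, and then apply it to the reflected point $2\cA^o-\cA$, which lies in $\Omega$ precisely because affine sets are closed under affine combinations with coefficients summing to one; this flips the sign of the inner product and yields the reverse inequality. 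The two arguments exploit exactly the same structural fact (an affine set lets you move past $\cA^o$ in the direction opposite to $\cA-\cA^o$, which a general convex set does not), but yours packages it as a two-line corollary of the convex case rather than a repeated variational computation. Your reverse direction is likewise a clean shortcut: equality trivially implies the nonpositivity condition, so the converse of Lemma~\ref{lemma: vi convex} applies (an affine subset of a finite-dimensional space is closed and convex), whereas the paper repeats the norm expansion verbatim. No gaps; the only point worth stating explicitly, which you do, is the justification that $2\cA^o-\cA\in\Omega$.
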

\begin{proof}($\Rightarrow$) Let $\cA^o=\Pi_{\Omega}(\cC)$ for any given $\cC\notin\Omega$, that is, suppose that $\cA^o$ is the unique solution to the optimization problem. Let $\cA\in\Omega$ be such that $\cA\neq\cA^o$. Let $\alpha\in\mathbb{R}$. Since $\Omega$ is affine, $(1-\alpha)\cA^o+\alpha\cA=\cA^o+\alpha(\cA-\cA^o)\in\Omega$. By the assumed optimality of $\cA^o$, we must have:
\begin{equation}
\begin{split}
\|\cC-\cA^o\|^2_{\text{F}}&\leq\|\cC-[\cA^o+\alpha(\cA-\cA^o)]\|^2_\text{F}\\
&=\|\cC-\cA^o\|^2_{\text{F}}+\alpha^2\|\cA-\cA^o\|^2_\text{F}-2\alpha\langle\cC-\cA^o,\cA-\cA^o\rangle,
\end{split}
\end{equation}
and we obtain:
\begin{equation}
\label{eq: proof lemma eq 3}
2\alpha\langle\cC-\cA^o,\cA-\cA^o\rangle\leq\alpha^2\|\cA-\cA^o\|^2_\text{F}.
\end{equation}
If $\alpha\geq 0$, we obtain:
\begin{equation}
\label{eq: proof lemma eq 4}
\langle\cC-\cA^o,\cA-\cA^o\rangle\leq\frac{\alpha}{2}\|\cA-\cA^o\|^2_\text{F}.
\end{equation}
Now, note that~\eqref{eq: proof lemma eq 4} holds for any $\alpha\geq 0$. Since the RHS of~\eqref{eq: proof lemma eq 4} can be made arbitrarily small for a given $\cA$, the LHS can not be strictly positive. Thus, we conclude:
\begin{equation}
\label{eq: proof lemma eq 5}
\langle\cC-\cA^o,\cA-\cA^o\rangle\leq 0,~~\forall \cA\in\Omega.
\end{equation}
If $\alpha\leq 0$, we obtain:
\begin{equation}
\label{eq: proof lemma eq 6}
\langle\cC-\cA^o,\cA-\cA^o\rangle\geq\frac{\alpha}{2}\|\cA-\cA^o\|^2_\text{F}.
\end{equation}
Now, note that~\eqref{eq: proof lemma eq 6} holds for any $\alpha\leq 0$. Since the RHS of~\eqref{eq: proof lemma eq 6} can be made arbitrarily large for a given $\cA$, the LHS can not be strictly negative. Thus, we conclude:
\begin{equation}
\label{eq: proof lemma eq 7}
\langle\cC-\cA^o,\cA-\cA^o\rangle\geq 0,~~\forall \cA\in\Omega.
\end{equation}
Combining~\eqref{eq: proof lemma eq 5} and~\eqref{eq: proof lemma eq 7}, we conclude as desired:
\begin{equation}
\label{eq: proof lemma eq 8}
\langle\cC-\cA^o,\cA-\cA^o\rangle= 0,~~\forall \cA\in\Omega.
\end{equation}
($\Leftarrow$) Let $\cA^o\in\Omega$ be such that $\langle\cC-\cA^o,\cA-\cA^o\rangle= 0,\forall \cA\in\Omega$. We shall show that it must be the optimal solution. Let $\cA\in\Omega$ and $\cA\neq\cA^o$. We have:
\begin{equation}
\begin{split}
\|\cC-\cA\|^2_{\text{F}}-\|\cC-\cA^o\|^2_{\text{F}}&=\|\cC-\cA^o+\cA^o-\cA\|^2_{\text{F}}-\|\cC-\cA^o\|^2_{\text{F}}\\
&=\|\cC-\cA^o\|^2_{\text{F}}+\|\cA^o-\cA\|^2_{\text{F}}-2\langle\cC-\cA^o,\cA-\cA^o\rangle-\|\cC-\cA^o\|^2_{\text{F}}>0.
\end{split}
\end{equation}
Hence, $\cA^o$ is the optimal solution to the optimization problem, and thus $\cA^o=\Pi_{\Omega}(\cC)$ by definition.
\end{proof}
Now we prove Lemma~\ref{lemma: one step projection}. Let $Y=\Pi_{\Omega_1}(\cC)$. From Lemma~\ref{lemma: vi affine}, we have:
\begin{equation}
\label{eq: proof lemma eq 9}
\langle\cC-Y,\cA-Y\rangle= 0,\quad \forall \cA\in\Omega_1.
\end{equation}
Let $Z=\Pi_{\Omega_2}(Y)$. From Lemma~\ref{lemma: vi convex}, we have:
\begin{equation}
\label{eq: proof lemma eq 10}
\langle Y-Z,\cA-Z\rangle\leq 0,\quad \forall \cA\in\Omega_2.
\end{equation}
For $Z=\Pi_{\Omega_2}(\Pi_{\Omega_1}(\cC))$ to be the projection of $\cC$ onto $\Omega_1\cap\Omega_2$, we need to show from Lemma~\ref{lemma: vi convex} that:
\begin{equation}
\label{eq: proof lemma eq 11}
\langle \cC-Z,\cA-Z\rangle\leq 0,\quad \forall \cA\in\Omega_1\cap\Omega_2,
\end{equation}
under the conditions in Lemma~\ref{lemma: one step projection}. For any $\cA\in\Omega_1\cap\Omega_2$, we have:
\begin{equation}
\begin{split}
&\langle \cC-Z,\cA-Z\rangle=\langle \cC-Y+Y-Z,\cA-Z\rangle\\
&=\langle \cC-Y,\cA-Z\rangle+\langle Y-Z,\cA-Z\rangle\\
&=\langle \cC-Y,\cA-Y+Y-Z\rangle+\langle Y-Z,\cA-Z\rangle\\
&=\underbrace{\langle \cC-Y,\cA-Y\rangle}_{=0~\text{from~}\eqref{eq: proof lemma eq 9}}
-\hspace{-2.5mm}\underbrace{\langle \cC-Y,Z-Y\rangle}_{=0~\text{from~\eqref{eq: proof lemma eq 9} and~}Z\in\Omega_1}\hspace{-2.5mm}
+\underbrace{\langle Y-Z,\cA-Z\rangle}_{\leq 0~\text{from~}\eqref{eq: proof lemma eq 10}}\leq 0
\end{split}
\end{equation}
which concludes the proof.

\section{Stability of fourth-order error moment}
\label{sec: stability of fourth-order error moment}
In this Appendix, we show that, under the same settings of Theorem~1 in~\cite{nassif2019adaptation} with the second-order moment condition~\eqref{eq: gradient noise mean square condition} replaced by the fourth-order moment condition~\eqref{eq: gradient noise fourth moment condition}, the fourth-order moment of the network error vector is stable for sufficiently small $\mu$, namely, \eqref{eq: fourth order moment result} holds for small enough $\mu$. We start by recalling that for any two complex column vectors $x$ and $y$, it holds that 
$\|x+y\|^4\leq\|x\|^4+3\|y\|^4+8\|x\|^2\|y\|^2+4\|x\|^2\text{Re}(x^* y)$~\cite[pp.~523]{sayed2014adaptation}.
 Applying this inequality to~eq. (60) in~\cite{nassif2019adaptation}, conditioning on $\bcF_{i-1}$, computing the expectations of both sides, using Assumption~\ref{assump: gradient noise}, taking expectations again, and exploiting the convexity of $\|x\|^4$, we conclude that:
\begin{align}
\expec\|\bcwb_{i}^e\|^4&\leq\expec\|(I_{hP}-\bcD_{11,i-1})\bcwb_{i-1}^e-\bcD_{12,i-1}\bcwc^e_{i-1}\|^4+3\expec\|\bsb_{i}^e\|^4+\notag\\
&\qquad8\expec\left(\|(I_{hP}-\bcD_{11,i-1})\bcwb_{i-1}^e-\bcD_{12,i-1}\bcwc^e_{i-1}\|^2\right)\left(\expec\|\bsb_{i}^e\|^2\right)\notag\\
&=\expec\|(1-t)\frac{1}{1-t}(I_{hP}-\bcD_{11,i-1})\bcwb_{i-1}^e-t\frac{1}{t}\bcD_{12,i-1}\bcwc^e_{i-1}\|^4+3\expec\|\bsb_{i}^e\|^4+\notag\\
&\qquad8\expec\left(\|(1-t)\frac{1}{1-t}(I_{hP}-\bcD_{11,i-1})\bcwb_{i-1}^e-t\frac{1}{t}\bcD_{12,i-1}\bcwc^e_{i-1}\|^2\right)\left(\expec\|\bsb_{i}^e\|^2\right)\notag\\
&\leq\frac{1}{(1-t)^3}\expec\left[\|I_{hP}-\bcD_{11,i-1}\|^4\|\bcwb_{i-1}^e\|^4\right]+\frac{1}{t^3}\expec\left[\|\bcD_{12,i-1}\|^4\|\bcwc^e_{i-1}\|^4\right]+3\expec\|\bsb_{i}^e\|^4+\notag\\
&\qquad8\left(\expec\|\bsb_{i}^e\|^2\right)\left(\frac{1}{1-t}\expec\left[\|I_{hP}-\bcD_{11,i-1}\|^2\|\bcwb_{i-1}^e\|^2\right]+\frac{1}{t}\expec\left[\|\bcD_{12,i-1}\|^2\|\bcwc^e_{i-1}\|^2\right]\right)\notag\\
&{\leq}\frac{(1-\mu\sigma_{11})^4}{(1-t)^3}\expec\|\bcwb_{i-1}^e\|^4+\frac{\mu^4\sigma_{12}^4}{t^3}\expec\|\bcwc^e_{i-1}\|^4+3\expec\|\bsb_{i}^e\|^4+\notag\\
&\qquad\qquad8\left(\expec\|\bsb_{i}^e\|^2\right)\left(\frac{(1-\mu\sigma_{11})^2}{1-t}\expec\|\bcwb_{i-1}^e\|^2+\frac{\mu^2\sigma_{12}^2}{t}\expec\|\bcwc^e_{i-1}\|^2\right),
\end{align}
for any arbitrary positive number $t\in(0,1)$. In the last inequality we used the bounds (115) and (116) in~\cite{nassif2019adaptation}. By selecting $t=\mu\sigma_{11}$, we arrive at:
\begin{align}
\label{eq: fourth order bcwbc}
\expec\|\bcwb_{i}^e\|^4&\leq(1-\mu\sigma_{11})\expec\|\bcwb_{i-1}^e\|^4+\frac{\mu\sigma_{12}^4}{\sigma_{11}^3}\expec\|\bcwc^e_{i-1}\|^4+3\expec\|\bsb_{i}^e\|^4+\notag\\
&\qquad\qquad8\left(\expec\|\bsb_{i}^e\|^2\right)\left((1-\mu\sigma_{11})\expec\|\bcwb_{i-1}^e\|^2+\frac{\mu\sigma_{12}^2}{\sigma_{11}}\expec\|\bcwc^e_{i-1}\|^2\right).
\end{align}
Applying similar arguments for relation~(61) in~\cite{nassif2019adaptation} and using the relation 
$\|a+b+c\|^4\leq27\|a\|^4+27\|b\|^4+27\|c\|^4$,
we obtain:
\begin{align}
\expec\|\bcwc^e_{i}\|^4&\leq\expec\|(\cJ^e_{\epsilon}-\bcD_{22,i-1})\bcwc^e_{i-1}-\bcD_{21,i-1}\bcwb_{i-1}^e+\widecheck{b}^e\|^4+3\expec\|\bsc_{i}^e\|^4+\notag\\
&\qquad8\left(\expec\|(\cJ^e_{\epsilon}-\bcD_{22,i-1})\bcwc^e_{i-1}-\bcD_{21,i-1}\bcwb_{i-1}^e+\widecheck{b}^e\|^2\right)\left(\expec\|\bsc_{i}^e\|^2\right)\notag\\
&=\expec\left\|t\frac{1}{t}\cJ^e_{\epsilon}\bcwc^e_{i-1}-(1-t)\frac{1}{1-t}(\bcD_{22,i-1}\bcwc^e_{i-1}+\bcD_{21,i-1}\bcwb_{i-1}^e-\widecheck{b}^e)\right\|^4+3\expec\|\bsc_{i}^e\|^4+\notag\\
&\qquad8\left(\expec\left\|t\frac{1}{t}\cJ^e_{\epsilon}\bcwc^e_{i-1}-(1-t)\frac{1}{1-t}(\bcD_{22,i-1}\bcwc^e_{i-1}+\bcD_{21,i-1}\bcwb_{i-1}^e-\widecheck{b}^e)\right\|^2\right)\left(\expec\|\bsc_{i}^e\|^2\right)\notag\\
&\leq\frac{1}{t^3}\|\cJ^e_{\epsilon}\|^4\expec\|\bcwc^e_{i-1}\|^4+\frac{1}{(1-t)^3}\expec\|\bcD_{22,i-1}\bcwc^e_{i-1}+\bcD_{21,i-1}\bcwb_{i-1}^e-\widecheck{b}^e\|^4+3\expec\|\bsc_{i}^e\|^4+\notag\\
&\qquad8\left(\expec\|\bsc_{i}^e\|^2\right)\left(\frac{1}{t}\|\cJ^e_{\epsilon}\|^2\expec\|\bcwc^e_{i-1}\|^2+\frac{1}{1-t}\expec\|\bcD_{22,i-1}\bcwc^e_{i-1}+\bcD_{21,i-1}\bcwb_{i-1}^e-\widecheck{b}^e\|^2\right)\notag\\
&\leq\frac{1}{t^3}\|\cJ^e_{\epsilon}\|^4\expec\|\bcwc^e_{i-1}\|^4+\frac{27}{(1-t)^3}(\expec\|\bcD_{22,i-1}\|^4\|\bcwc^e_{i-1}\|^4+\expec\|\bcD_{21,i-1}\|^4\|\bcwb_{i-1}^e\|^4+\|\widecheck{b}^e\|^4)+3\expec\|\bsc_{i}^e\|^4+\notag\\
&\qquad8\left(\expec\|\bsc_{i}^e\|^2\right)\left(\frac{1}{t}\|\cJ^e_{\epsilon}\|^2\expec\|\bcwc^e_{i-1}\|^2+\frac{3}{1-t}(\expec\|\bcD_{22,i-1}\|^2\|\bcwc^e_{i-1}\|^2+\expec\|\bcD_{21,i-1}\|^2\|\bcwb_{i-1}^e\|^2+\|\widecheck{b}^e\|^2)\right)\notag\\
&{\leq}\frac{(\rho(\cJ_{\epsilon})+\epsilon)^4}{t^3}\expec\|\bcwc^e_{i-1}\|^4+\frac{27\mu^4\sigma^4_{22}}{(1-t)^3}\expec\|\bcwc^e_{i-1}\|^4+\frac{27\mu^4\sigma^4_{21}}{(1-t)^3}\expec\|\bcwb_{i-1}^e\|^4+\frac{27}{(1-t)^3}\|\widecheck{b}^e\|^4+3\expec\|\bsc_{i}^e\|^4+\notag\\
&\qquad8\left(\expec\|\bsc_{i}^e\|^2\right)\left(\frac{(\rho(\cJ_{\epsilon})+\epsilon)^2}{t}\expec\|\bcwc^e_{i-1}\|^2+\frac{3\mu^2\sigma^2_{22}}{1-t}\expec\|\bcwc^e_{i-1}\|^2+\frac{3\mu^2\sigma^2_{21}}{1-t}\expec\|\bcwb_{i-1}^e\|^2+\frac{3}{1-t}\|\widecheck{b}^e\|^2\right)
\end{align}
for any arbitrary positive number $t\in(0,1)$. In the last inequality we used relation (122) in~\cite{nassif2019adaptation}. Selecting $t=\rho(\cJ_{\epsilon})+\epsilon<1$, we arrive at:
\begin{align}
\label{eq: fourth order bcwbr}
\expec\|\bcwc^e_{i}\|^4&\leq(\rho(\cJ_{\epsilon})+\epsilon)\expec\|\bcwc^e_{i-1}\|^4+\frac{27\mu^4\sigma^4_{22}}{(1-t)^3}\expec\|\bcwc^e_{i-1}\|^4+\frac{27\mu^4\sigma^4_{21}}{(1-t)^3}\expec\|\bcwb_{i-1}^e\|^4+\frac{27}{(1-t)^3}\|\widecheck{b}^e\|^4+3\expec\|\bsc_{i}^e\|^4+\notag\\
&\qquad8\left(\expec\|\bsc_{i}^e\|^2\right)\left((\rho(\cJ_{\epsilon})+\epsilon)\expec\|\bcwc^e_{i-1}\|^2+\frac{3\mu^2\sigma^2_{22}}{1-t}\expec\|\bcwc^e_{i-1}\|^2+\frac{3\mu^2\sigma^2_{21}}{1-t}\expec\|\bcwb_{i-1}^e\|^2+\frac{3}{1-t}\|\widecheck{b}^e\|^2\right)
\end{align}
where $1-t=1-\rho(\cJ_{\epsilon})-\epsilon$. 

In order to bound the fourth-order noise terms $\expec\|\bsb_{i}^e\|^4$ and $\expec\|\bsc_{i}^e\|^4$ appearing in~\eqref{eq: fourth order bcwbc} and~\eqref{eq: fourth order bcwbr}, we first note from eq.~(58) in~\cite{nassif2019adaptation} that:
\begin{align}
\expec\|\bsb_{i}^e\|^4+\expec\|\bsc_{i}\|^4&\leq\expec(\|\bsb_{i}^e\|^2+\|\bsc_{i}\|^2)^2{=}\expec\|\mu(\cV_{\epsilon}^e)^{-1}\cA^e\bs_i^e\|^4\leq\mu^4v_1^4\expec\|\bs_i^e\|^4.\label{eq: fourth order intermediate 3}
\end{align}
Now, applying Jensen's inequality to the convex function $f(x)=x^2$, we can write:
\begin{equation}
\expec\|\bs_i^e\|^4=\expec(\|\bs_i^e\|^2)^2=4\expec\left(\sum_{k=1}^N\|\bs_{k,i}\|^2\right)^2=4\expec\left(\sum_{k=1}^N\frac{1}{N}N\|\bs_{k,i}\|^2\right)^2\leq 4N\expec\left(\sum_{k=1}^N\|\bs_{k,i}\|^4\right)=4N\sum_{k=1}^N\expec\|\bs_{k,i}\|^4,\label{eq: fourth order intermediate 2}
\end{equation} 
in terms of the individual gradient noise processes, $\expec\|\bs_{k,i}\|^4$. For each term $\bs_{k,i}$, we have from~\eqref{eq: gradient noise fourth moment condition} and from the Jensen's inequality applied to the convex norm $\|x\|^4$:
\begin{align}
\expec\|\bs_{k,i}(\bw_{k,i-1})\|^4&\leq(\beta_{4,k}/h)^4\expec\|\bw_{k,i-1}\|^4+\sigma^4_{s4,k}\notag\\
&=(\beta_{4,k}/h)^4\expec\|\bw_{k,i-1}-w^o_k+w^o_k\|^4+\sigma^4_{s4,k}\notag\\
&\leq8(\beta_{4,k}/h)^4\expec\|\bwt_{k,i-1}\|^4+8(\beta_{4,k}/h)^4\|w^o_k\|^4+\sigma^4_{s4,k}\notag\\
&\leq\bar\beta_{4,k}^4\expec\|\bwt_{k,i-1}\|^4+\bar{\sigma}^4_{s4,k}
\end{align}
where $\bar\beta_{4,k}^4=8(\beta_{4,k}/h)^4$ and $\bar{\sigma}^4_{s4,k}=8(\beta_{4,k}/h)^4\|w^o_k\|^4+\sigma^4_{s4,k}$. Using the relations,
\begin{equation}
\sum_{k=1}^N\|\bwt_{k,i-1}\|^4\leq(\|\bwt_{1,i-1}\|^2+\|\bwt_{2,i-1}\|^2+\ldots+\|\bwt_{N,i-1}\|^2)^2=\|\bcwt_{i-1}\|^4=\left(\frac{1}{2}\|\bcwt_{i-1}^e\|^2\right)^2=\frac{1}{4}\|\bcwt_{i-1}^e\|^4,\label{eq: fourth order intermediate}
\end{equation} 
\begin{equation}
\|(\cV_\epsilon^e)^{-1}\bcwt^e_{i-1}\|^4=(\|\bcwb_{i}^e\|^2+\|\bcwc^e_{i}\|^2)^2\leq2\|\bcwb_{i}^e\|^4+2\|\bcwc^e_{i}\|^4,\label{eq: fourth order intermediate 1}
\end{equation} 
the term $\expec\|\bs_i\|^4$ in~\eqref{eq: fourth order intermediate 2} can be bounded as follows:
\begin{align}
\expec\|\bs_i^e\|^4&\leq 4N\sum_{k=1}^N\bar\beta_{4,k}^4\expec\|\bwt_{k,i-1}\|^4+4N\sum_{k=1}^N\bar{\sigma}^4_{s4,k}\notag\\
&\leq 4 \beta^4_{4,\max}\sum_{k=1}^N\expec\|\bwt_{k,i-1}\|^4+\sigma^4_{s4}\notag\\
&\hspace{-1mm}\overset{\eqref{eq: fourth order intermediate}}{\leq}\beta^4_{4,\max}\expec\|\cV^e_{\epsilon}(\cV^{e}_{\epsilon})^{-1}\bcwt^e_{i-1}\|^4+\sigma^4_{s4}\notag\\
&\leq\beta^4_{4,\max}\|\cV^e_{\epsilon}\|^4\expec\|(\cV^{e}_{\epsilon})^{-1}\bcwt^e_{i-1}\|^4+\sigma^4_{s4}\notag\\
&\hspace{-1mm}\overset{\eqref{eq: fourth order intermediate 1}}{\leq}2\beta^4_{4,\max}v_2^4[\expec\|\bcwb_{i-1}^e\|^4+\expec\|\bcwc^e_{i-1}\|^4]+\sigma^4_{s4}
\end{align}
where $\beta^4_{4,\max}\triangleq N\max_{1\leq k\leq N}\bar\beta_{4,k}^4$, and $\sigma^4_{s4}\triangleq 4N\sum_{k=1}^N\bar{\sigma}^4_{s4,k}$. Substituting into~\eqref{eq: fourth order intermediate 3}, we get:
\begin{equation}
\expec\|\bsb_{i}^e\|^4+\expec\|\bsc_{i}\|^4\leq2\mu^4\beta^4_{4,\max}v_1^4v_2^4[\expec\|\bcwb_{i-1}^e\|^4+\expec\|\bcwc^e_{i-1}\|^4]+\mu^4v_1^4\sigma^4_{s4}.\label{eq: fourth order intermediate 4}
\end{equation}
Returning to~\eqref{eq: fourth order bcwbc}, and using the bounds~(128) in~\cite{nassif2019adaptation} and~\eqref{eq: fourth order intermediate 4}, we  find that:
\begin{align}
\label{eq: fourth order bcwbc 1}
\expec\|\bcwb_{i}^e\|^4&\leq(1-\mu\sigma_{11})\expec\|\bcwb_{i-1}^e\|^4+\frac{\mu\sigma_{12}^4}{\sigma_{11}^3}\expec\|\bcwc^e_{i-1}\|^4+6\mu^4\beta^4_{4,\max}v_1^4v_2^4[\expec\|\bcwb_{i-1}^e\|^4+\expec\|\bcwc^e_{i-1}\|^4]+\notag\\
&\quad3\mu^4v_1^4\sigma^4_{s4}+8\mu^2 v_1^2\beta^2_{\max}v_2^2(1-\mu\sigma_{11})(\expec\|\bcwb_{i-1}^e\|^2)^2+8\mu^2 v_1^2\beta^2_{\max}v_2^2(1-\mu\sigma_{11})\expec\|\bcwc^e_{i-1}\|^2\expec\|\bcwb_{i-1}^e\|^2+\notag\\
&\quad8\mu^2 v_1^2\sigma^2_s(1-\mu\sigma_{11})\expec\|\bcwb_{i-1}^e\|^2+8\mu^3\frac{\sigma_{12}^2}{\sigma_{11}} v_1^2\beta^2_{\max}v_2^2\expec\|\bcwb_{i-1}^e\|^2\expec\|\bcwc^e_{i-1}\|^2+\notag\\
&\quad8\mu^3\frac{\sigma_{12}^2}{\sigma_{11}} v_1^2\beta^2_{\max}v_2^2(\expec\|\bcwc^e_{i-1}\|^2)^2+8\mu^3\frac{\sigma_{12}^2}{\sigma_{11}} v_1^2\sigma^2_s\expec\|\bcwc^e_{i-1}\|^2.
\end{align}
Using the properties that, for any two random variables $\ba$ and $\bc$, it holds that~\cite[pp.~528]{sayed2014adaptation}:
$$(\expec \ba)^2\leq\expec\ba^2,\qquad 2(\expec \ba^2)(\expec \bc^2)\leq\expec \ba^4+\expec \bc^4,$$
we can write:
\begin{align}
2(\expec\|\bcwb_{i-1}^e\|^2)(\expec\|\bcwc^e_{i-1}\|^2)&\leq\expec\|\bcwb_{i-1}^e\|^4+\expec\|\bcwc^e_{i-1}\|^4,\\
(\expec\|\bcwb_{i-1}^e\|^2)^2&\leq\expec\|\bcwb_{i-1}^e\|^4,\\
(\expec\|\bcwc^e_{i-1}\|^2)^2&\leq\expec\|\bcwc^e_{i-1}\|^4,
\end{align}
so that:
\begin{equation}
\expec\|\bcwb_{i}^e\|^4\leq a\expec\|\bcwb_{i-1}^e\|^4+b\expec\|\bcwc^e_{i-1}\|^4+a'\expec\|\bcwb_{i-1}^e\|^2+b'\expec\|\bcwc^e_{i-1}\|^2+e
\end{equation}
where:
\begin{align}
a&=1-\mu\sigma_{11}+O(\mu^2),\quad b=O(\mu),\quad a'=O(\mu^2),\quad b'=O(\mu^3),\quad e=O(\mu^4).
\end{align}
Returning to~\eqref{eq: fourth order bcwbr} and using similar arguments, we can verify that:
\begin{align}
\label{eq: fourth order bcwbr 1}
\expec\|\bcwc^e_{i}\|^4&\leq(\rho(\cJ_{\epsilon})+\epsilon)\expec\|\bcwc^e_{i-1}\|^4+\frac{27\mu^4\sigma^4_{22}}{(1-\rho(\cJ_{\epsilon})-\epsilon)^3}\expec\|\bcwc^e_{i-1}\|^4+\frac{27\mu^4\sigma^4_{21}}{(1-\rho(\cJ_{\epsilon})-\epsilon)^3}\expec\|\bcwb_{i-1}^e\|^4+\notag\\
&\qquad\frac{27}{(1-\rho(\cJ_{\epsilon})-\epsilon)^3}\|\widecheck{b}^e\|^4+6\mu^4\beta^4_{4,\max}v_1^4v_2^4[\expec\|\bcwb_{i-1}^e\|^4+\expec\|\bcwc^e_{i-1}\|^4]+3\mu^4v_1^4\sigma^4_{s4}+\notag\\
&\qquad4(\rho(\cJ_{\epsilon})+\epsilon)\mu^2 v_1^2\beta^2_{\max}v_2^2[\expec\|\bcwb_{i-1}^e\|^4+\expec\|\bcwc^e_{i-1}\|^4]+8(\rho(\cJ_{\epsilon})+\epsilon)\mu^2 v_1^2\beta^2_{\max}v_2^2\expec\|\bcwc^e_{i-1}\|^4+\notag\\
&\qquad8(\rho(\cJ_{\epsilon})+\epsilon)\mu^2 v_1^2\sigma^2_s\expec\|\bcwc^e_{i-1}\|^2+\frac{12\mu^4 v_1^2\beta^2_{\max}v_2^2\sigma^2_{22}}{1-\rho(\cJ_{\epsilon})-\epsilon}[\expec\|\bcwb_{i-1}^e\|^4+\expec\|\bcwc^e_{i-1}\|^4]+\notag\\
&\qquad\frac{24\mu^4 v_1^2\beta^2_{\max}v_2^2\sigma^2_{22}}{1-\rho(\cJ_{\epsilon})-\epsilon}\expec\|\bcwc^e_{i-1}\|^4+ \frac{24\mu^4\sigma^2_{22}v_1^2\sigma^2_s}{1-\rho(\cJ_{\epsilon})-\epsilon}\expec\|\bcwc^e_{i-1}\|^2+ \frac{24\mu^4\sigma^2_{21}v_1^2\beta^2_{\max}v_2^2}{1-\rho(\cJ_{\epsilon})-\epsilon}\expec\|\bcwb_{i-1}^e\|^4+\notag\\
&\qquad \frac{12\mu^4\sigma^2_{21}v_1^2\beta^2_{\max}v_2^2}{1-\rho(\cJ_{\epsilon})-\epsilon}[\expec\|\bcwb_{i-1}^e\|^4+\expec\|\bcwc^e_{i-1}\|^4]+\frac{24\mu^4\sigma^2_{21}v_1^2\sigma^2_s}{1-\rho(\cJ_{\epsilon})-\epsilon}\expec\|\bcwb_{i-1}^e\|^2+\notag\\
&\qquad\frac{24\mu^2 v_1^2\beta^2_{\max}v_2^2}{1-\rho(\cJ_{\epsilon})-\epsilon}\|\widecheck{b}^e\|^2[\expec\|\bcwb_{i-1}^e\|^2+\expec\|\bcwc^e_{i-1}\|^2]+\frac{24\mu^2 v_1^2\sigma^2_s}{1-\rho(\cJ_{\epsilon})-\epsilon}\|\widecheck{b}^e\|^2,
\end{align}
so that, 
\begin{equation}
\expec\|\bcwc^e_{i}\|^4\leq c\expec\|\bcwb_{i-1}^e\|^4+d\expec\|\bcwc^e_{i-1}\|^4+c'\expec\|\bcwb_{i-1}^e\|^2+d'\expec\|\bcwc^e_{i-1}\|^2+f,
\end{equation}
where the coefficients $\{c,d,c',d',f\}$ have the following form:
\begin{align}
c&=O(\mu^2),\quad d=\rho(\cJ_{\epsilon})+\epsilon+O(\mu^2),\quad c'=O(\mu^4),\quad d'=O(\mu^2),\quad f=O(\mu^4).
\end{align}
Therefore, we can write
\begin{equation}
\label{eq: fourth order moment recursion}
\left[\begin{array}{c}
\expec\|\bcwb_{i}^e\|^4\\
\expec\|\bcwc^e_{i}\|^4
\end{array}\right]\preceq
\underbrace{\left[\begin{array}{cc}
a&b\\
c&d
\end{array}\right]}_{\Gamma}
\left[\begin{array}{c}
\expec\|\bcwb_{i-1}^e\|^4\\
\expec\|\bcwc^e_{i-1}\|^4
\end{array}\right]+
\left[\begin{array}{cc}
a'&b'\\
c'&d'
\end{array}\right]
\left[\begin{array}{c}
\expec\|\bcwb_{i-1}^e\|^2\\
\expec\|\bcwc^e_{i-1}\|^2
\end{array}\right]+
\left[\begin{array}{c}
e\\
f
\end{array}\right]
\end{equation}
in terms of the $2\times 2$ coefficient matrix $\Gamma$ of the form~(132) in~\cite{nassif2019adaptation}
which is stable matrix for sufficiently small $\mu$ and $\epsilon$. Moreover, using relation~(135) in~\cite{nassif2019adaptation}, we have:
\begin{equation}
\limsup_{i\rightarrow\infty}\left[\begin{array}{cc}
a'&b'\\
c'&d'
\end{array}\right]\left[\begin{array}{c}
\expec\|\bcwb_{i-1}^e\|^2\\
\expec\|\bcwc^e_{i-1}\|^2
\end{array}\right]=\left[\begin{array}{c}
O(\mu^3)\\
O(\mu^4)
\end{array}\right].
\end{equation}
In this case, we can iterate~\eqref{eq: fourth order moment recursion} and use relation~(134) in~\cite{nassif2019adaptation} to conclude that:
\begin{equation}
\limsup_{i\rightarrow\infty}\expec\|\bcwb_{i}^e\|^4=O(\mu^2), \qquad\limsup_{i\rightarrow\infty}\expec\|\bcwc^e_{i}\|^4=O(\mu^4),
\end{equation}
and, therefore,
\begin{align}
\limsup_{i\rightarrow\infty}\expec\|\bcwt_{i}^e\|^4&=\limsup_{i\rightarrow\infty}\expec\left\|\cV_{\epsilon}\left[\begin{array}{c}\bcwb_{i}^e\\\bcwc^e_{i}\end{array}\right]\right\|^4\notag\\
&\leq v_2^4\limsup_{i\rightarrow\infty}\expec(\|\bcwb_{i}^e\|^2+\|\bcwc^e_{i}\|^2)^2\notag\\
&\leq \limsup_{i\rightarrow\infty}2v_2^4(\expec\|\bcwb_{i}^e\|^4+\expec\|\bcwc^e_{i}\|^4)= O(\mu^2).
\end{align}
\section{Stability of the coefficient matrix $\cB$}
\label{sec: Stability of of the coefficient matrix B}
In this Appendix, we show that, under the same settings of Theorem~\ref{theo: performance error}, the constant matrix $\cB$ defined by~\eqref{eq: constant B} is stable for sufficiently small step-sizes. To establish this, we use similar argument as in~\cite{zhao2015asynchronous,sayed2014adaptation}. We first note that the matrix $\cB$ in~\eqref{eq: constant B} is similar to the matrix $\cBb$ in~\eqref{eq: cBb definition}, and therefore has the same eigenvalues as the block matrix $\cBb$ written as:
\begin{equation}
\cB\sim\left[\begin{array}{cc}
I_{hP}-\cD_{11}&-\cD_{12}\\
-\cD_{21}&\cJ_{\epsilon}^e-\cD_{22}
\end{array}\right],
\end{equation}
where the blocks entries $\{\cD_{mn}\}$ are given by~\eqref{eq: definition cD 11}--\eqref{eq: definition cD 22}. In a manner similar to the arguments used in the proof of Theorem~1 in~\cite{nassif2019adaptation}, we can verify that:
\begin{align}
\cD_{11}&=O(\mu),\qquad\cD_{12}=O(\mu),\label{eq: order of D11}\\
\cD_{21}&=O(\mu),\qquad\cD_{22}=O(\mu),\label{eq: order of D22}\\
\rho(I_{hP}-\cD_{11})&=1-\sigma_{11}\mu=1-O(\mu),\label{eq: spectral radius of I-D11}
\end{align}
where $\sigma_{11}$ is a positive scalar independent of $\mu$. Thus, we obtain:
\begin{equation}
\cB\sim\left[\begin{array}{cc}
I_{hP}-O(\mu)&O(\mu)\\
O(\mu)&\cJ_{\epsilon}^e+O(\mu)
\end{array}\right].
\end{equation}
Now recall that the matrix $\cJ_{\epsilon}^e$ defined in Table~\ref{table: variables table} is $h(M-P)\times h(M-P)$ and has a Jordan structure. We consider here the complex data case since the real data case can be easily deduced from the complex case by removing the block $(\cJ_{\epsilon}^*)^\top$. It can be expressed in the following upper-triangular form:
\begin{equation}
\cJ_{\epsilon}^e=\left[\begin{array}{ccc|ccc}
\lambda_{a,2}&&\cK&&\\
&\ddots&&&\\
&&\lambda_{a,L}&&\\\hline
&&&\lambda_{a,2}^*&&\cK\\
&&&&\ddots&\\
&&&&&\lambda_{a,L}^*
\end{array}
\right]
\end{equation}
with scalars $\{\lambda_{a,\ell},\lambda_{a,\ell}^*\}$ on the diagonal, all of which have norms strictly less than one, and where the entries of the strictly upper-triangular matrix $\cK$ are either $\epsilon$ or zero. It follows that:
\begin{equation}
\cJ_{\epsilon}^e+O(\mu)=\left[\begin{array}{ccc|ccc}
\lambda_{a,2}+O(\mu)&&\cK+O(\mu)&&\\
&\ddots&&&O(\mu)\\
O(\mu)&&\lambda_{a,L}+O(\mu)&&\\\hline
&&&\lambda_{a,2}^*+O(\mu)&&\cK+O(\mu)\\
&O(\mu)&&&\ddots&\\
&&&O(\mu)&&\lambda_{a,L}^*+O(\mu)
\end{array}
\right]
\end{equation}

We introduce the eigen-decomposition of the Hermitian positive-definite matrix $\cD_{11}$ and denote it by~\cite{zhao2015asynchronous,sayed2014adaptation}:
\begin{equation}
\cD_{11}\triangleq U_d\Lambda_d U^*_d 
\end{equation}
where $U_d$ is unitary and $\Lambda_d$ has positive diagonal entries $\{\lambda_k\}$; the matrices $U_d$ and $\Lambda_d$ are $hP\times hP$. Using $U_d$, we further introduce the following block-diagonal similarity transformation:
\begin{equation}
\cT\triangleq\diag\{\mu^{P/M}U_d,\mu^{(hP+1)/hM},\ldots,\mu^{(hM-1)/hM},\mu\}.
\end{equation} 
We now use~\eqref{eq: cBb definition} to get:
\begin{equation}
\cT^{-1}\cBb\cT=\left[
\begin{array}{c|cccc}
B&&O(\mu^{(hM+1)/hM})&&\\
\hline
&\lambda_{a,2}+O(\mu)&&{O(\mu^{1/hM})}\\
O(\mu^{P/M})&&\ddots&\\
&O(\mu^{(hM-1)/hM})&&\lambda^*_{a,L}+O(\mu)
\end{array}
\right]
\end{equation}
where we introduced the $hP\times hP$ diagonal matrix:
\begin{equation}
B\triangleq I_{hP}-\Lambda_d.
\end{equation}
It follows that all off-diagonal entries of the above transformed matrix are at most $O(\mu^{1/hM})$. Although the factor $\mu^{1/hM}$ decays slower than $\mu$, it nevertheless becomes small for sufficiently small $\mu$. Calling upon the Gershgorin's theorem\footnote{Consider an $N\times N$ matrix $A$ with scalar entries $\{a_{k\ell}\}$. With each diagonal entry $a_{kk}$ we associate a disc in the complex plane centered at $a_{kk}$ and with $r_k=\sum_{\ell=1,\ell\neq k}^{N}|a_{k\ell}|$. That is, $r_k$ is equal to the sum of the magnitudes of the non-diagonal entries on the same row as $a_{kk}$. We denote the disc by $D_k$; it consists of all points that satisfy $D_k=\{z\in\mathbb{C} \text{ such that }|z-a_{kk}|\leq r_k\}$. Gershgorin's theorem states that the spectrum of $A$ (i.e., the set of all its eigenvalues, denoted by $\lambda(A)$) is contained in the union of all $N$ Gershgorin discs $$\lambda(A)\subset\cup_{k=1}^ND_k.$$ A stronger statement of the Gershgorin theorem covers the situation in which some of the Gershgorin discs happen to be disjoint. Specifically, if the union of the $L$ discs is disjoint from the union of the remaining $N-L$ discs, then the theorem further asserts that $L$ eigenvalues of $A$ will lie in the first union of $L$ discs and the remaining $N-L$ eigenvalues of $A$ will lie in the second union of $N-L$ discs.}, we conclude that the eigenvalues of $\cB$ are either located in the Gershgorin circles that are centered at the eigenvalues of $B$ with radii $O(\mu^{(hM+1)/hM})$ or in the Gershgorin circles that are centered at the $\{\lambda_{a,\ell},\lambda_{a,\ell}^*\}$ with radii $O(\mu^{1/M})$, namely,
\begin{equation}
|\lambda(\cB)-\lambda(B)|\leq O(\mu^{(hM+1)/hM})\quad\text{or}\quad|\lambda(\cB)-\lambda_{a,\ell}+O(\mu)|\leq O(\mu^{1/hM})\quad\text{or}\quad|\lambda(\cB)-\lambda^*_{a,\ell}+O(\mu)|\leq O(\mu^{1/hM})
\end{equation}
where $\lambda(\cB)$ and $\lambda(B)$ denote any of the eigenvalues of $\cB$ and $B$, and $\ell=1,\ldots,L$. It follows that:
\begin{equation}
\label{eq: bound on rho(cB)}
\rho(\cB)\leq\rho(B)+O(\mu^{(hM+1)/hM})\quad\text{or}\quad\rho(\cB)\leq\rho(\cJ_{\epsilon})+O(\mu)+O(\mu^{1/hM}).
\end{equation}
Now since $\cJ_{\epsilon}$ is a stable matrix, we know that $\rho(\cJ_{\epsilon})<1$. We express this spectral radius as:
\begin{equation}
\rho(\cJ_{\epsilon})=1-\delta_J
\end{equation}
where $\delta_J$ is positive and independent of $\mu$. We also know from~\eqref{eq: spectral radius of I-D11} that:
\begin{equation}
\rho(B)=1-\sigma_{11}\mu<1,
\end{equation}
since $B=U^*_d(I_{hP}-\cD_{11})U_d$. We conclude from~\eqref{eq: bound on rho(cB)} that:
\begin{equation}
\rho(\cB)\leq1-\sigma_{11}\mu+O(\mu^{(hM+1)/hM})\quad\text{or }\rho(\cB)\leq1-\delta_J+O(\mu)+O(\mu^{1/hM}).
\end{equation}
If we now select $\mu\ll 1$ small enough such that:
\begin{equation}
\label{eq: conditions for stability of cB}
O(\mu^{(hM+1)/hM})<\sigma_{11}\mu,\quad\text{and}\quad O(\mu^{1/hM})+O(\mu)<\delta_J
\end{equation}
then we would be able to conclude that $\rho(\cB)<1$ so that $\cB$ is stable for sufficiently small step-sizes, as claimed. 

If we exploit the structure of $\cBb$ in~\eqref{eq: cBb definition} we can further show, for sufficiently small step-sizes, that:
\begin{align}
(I-\cB)^{-1}&=O(1/\mu)\\
(I-\cBb)^{-1}&=\left[\begin{array}{c|c}
O(1/\mu)&O(1)\\
\hline
O(1)&O(1)
\end{array}
\right]\label{eq: inverse of I-cBb}
\end{align}
where the leading $(1,1)$ block in $(I-\cBb)^{-1}$ has dimensions $hP\times hP$.

To establish this we first note that, by similarity, the matrix $\cBb$ is stable. Let 
\begin{align}
\cX=I-\cBb&=\left[\begin{array}{cc}
\cD_{11}&\cD_{12}\\
\cD_{21}&I-\cJ^e_{\epsilon}+\cD_{22}
\end{array}\right]\triangleq\left[\begin{array}{cc}
\cX_{11}&\cX_{12}\\
\cX_{21}&\cX_{22}
\end{array}\right],
\end{align}
where from~\eqref{eq: order of D11}--\eqref{eq: order of D22}, we have:
\begin{align}
\cX_{11}&=O(\mu),\qquad\cX_{12}=O(\mu),\label{eq: order of cX11}\\
\cX_{21}&=O(\mu),\qquad\cX_{22}=O(1).\label{eq: order of cX22}
\end{align}
The matrix $\cX$ is invertible since $I-\cBb$ is invertible. Moreover, $\cX_{11}$ is invertible since $\cD_{11}$ is Hermitian positive definite. Using the block matrix inversion formula
, we can write:
\begin{equation}
\cX^{-1}=\left[\begin{array}{cc}
\cX_{11}^{-1}+\cX_{11}^{-1}\cX_{12}\Delta^{-1}\cX_{21}\cX_{11}^{-1}&-\cX_{11}^{-1}\cX_{12}\Delta^{-1}\\
-\Delta^{-1}\cX_{21}\cX_{11}^{-1}&\Delta^{-1}
\end{array}\right]
\end{equation}
where $\Delta$ denotes the Schur complement of $\cX$ relative to $\cX_{11}$:
\begin{equation}
\Delta=\cX_{22}-\cX_{21}\cX_{11}^{-1}\cX_{12}=O(1).
\end{equation}
We then use~\eqref{eq: order of cX11}--\eqref{eq: order of cX22} to conclude~\eqref{eq: inverse of I-cBb}.

\section{Stability of first-order error moment of~\eqref{eq: long term error model 2}}
\label{sec: Stability of first-order error moment}
In this Appendix, we show that, under the same settings of Theorem~\ref{theo: performance error}, the first-order moment of the long-term model~\eqref{eq: long term error model 2} is stable for sufficiently small step-sizes, namely, it holds that:
\begin{equation}
\label{eq: order of the mean long-term model}
\limsup_{i\rightarrow\infty}\|\expec\bcwt^{e'}_i\|=O(\mu).
\end{equation} 

We first multiply both sides of recursion~\eqref{eq: mean error recursion of the long-term error model} from the left by $(\cV_{\epsilon}^e)^{-1}$ and use relation~(59) in~\cite{nassif2019adaptation} to get:
\begin{equation}
\label{eq: mean long term error model 2}
\underbrace{\left[\begin{array}{c}
\expec\bcwb^{e'}_{i}\\
\expec\bcwc^{e'}_{i}
\end{array}\right]}_{\triangleq y_i}=\underbrace{\left[\begin{array}{cc}
I_{hP}-\cD_{11}&-\cD_{12}\\
-\cD_{21}&\cJ^e_{\epsilon}-\cD_{22}
\end{array}\right]}_{\triangleq\cBb}\underbrace{\left[\begin{array}{c}
\expec\bcwb^{e'}_{i-1}\\
\expec\bcwc^{e'}_{i-1}
\end{array}\right]}_{\triangleq y_{i-1}}+\left[\begin{array}{c}
0\\
\widecheck{b}^e
\end{array}\right]
\end{equation}
where the matrix $\cBb$ in~\eqref{eq: cBb definition} is stable as shown in Appendix~\ref{sec: Stability of of the coefficient matrix B}. Recursion~\eqref{eq: mean long term error model 2} can be written more compactly as:
\begin{equation}
\label{eq: mean long term error model 3}
y_i=\cBb y_{i-1}+\left[\begin{array}{c}
0\\
\widecheck{b}^e
\end{array}\right].
\end{equation}
Since $\cBb$ is stable and $\widecheck{b}^e=O(\mu)$, we conclude from~\eqref{eq: mean long term error model 3} and~\eqref{eq: inverse of I-cBb} that:
\begin{align}
\label{eq: mean long term error model 4}
\lim_{i\rightarrow\infty}y_i&=(I-\cBb)^{-1}\left[\begin{array}{c}
0\\
\widecheck{b}^e
\end{array}\right]\overset{\eqref{eq: inverse of I-cBb}}=\left[\begin{array}{cc}
O(1/\mu)&O(1)\\
O(1)&O(1)
\end{array}\right]\left[\begin{array}{c}
0\\
O(\mu)
\end{array}\right]=O(\mu).
\end{align}
It follows that
\begin{equation}
\limsup_{i\rightarrow\infty}\left\|\left[\begin{array}{c}
\expec\bcwb^{e'}_{i}\\
\expec\bcwc^{e'}_{i}
\end{array}\right]\right\|=O(\mu),
\end{equation}
and, hence,
\begin{align}
\limsup_{i\rightarrow\infty}\|\expec\bcwt^{e'}_i\|&=\limsup_{i\rightarrow\infty}\left\|\cV_{\epsilon}^e\left[\begin{array}{c}
\expec\bcwb^{e'}_{i}\\
\expec\bcwc^{e'}_{i}
\end{array}\right]\right\|\leq\|\cV^e_\epsilon\|\left(\limsup_{i\rightarrow\infty}\left\|\left[\begin{array}{c}
\expec\bcwb^{e'}_{i}\\
\expec\bcwc^{e'}_{i}
\end{array}\right]\right\|\right)=O(\mu).
\end{align}

\end{appendices}

\bibliographystyle{IEEEbib}
{\balance{
\bibliography{reference}}}

\end{document}